\newtheorem{lemma}{Lemma}
\newtheorem{propo}{Proposition}
\newtheorem{theor}{Theorem}
\newtheorem{defin}{Definition}
\newtheorem{remar}{Remark}
\newtheorem{notat}{Notation}
\newcommand{\vertii}[1]{{\left\vert\kern-0.3ex\left\vert #1 
    \right\vert\kern-0.3ex\right\vert}}
\newcommand{\vertiii}[1]{{\left\vert\kern-0.3ex\left\vert\kern-0.3ex\left\vert #1 
    \right\vert\kern-0.3ex\right\vert\kern-0.3ex\right\vert}}
\newcommand*\dd{\mathop{}\!\mathrm{d}}
\title{An Alexander-like theorem for a particle model with inelastic collisions}
\author[1]{Th\'eophile Dolmaire}
\author[2]{Juan J. L. Vel\'azquez}
\affil[1]{Dipartimento di Ingegneria e Scienze dell’Informazione e Matematica (DISIM), Università degli Studi dell’Aquila, Edificio Renato Ricamo, via Vetoio, Coppito, 67100 L’Aquila, Italy}
\affil[2]{Institute for Applied Mathematics, University of Bonn, Endenicher Allee 60, D-53115 Bonn, Germany}
\begin{document}

\maketitle

\begin{abstract}
\noindent
We consider a finite system of hard spheres that collide inelastically according to a particular model, losing a fixed amount of kinetic energy at each collision. We develop the theory of the Transport-Collision-Transport (TCT) dynamics, which allows to study precisely the evolution of the Lebesgue measure in the phase space under the action of the flows of particle systems that can interact via instantaneous binary collision. We show that the scattering mapping associated to the inelastic hard sphere system we introduce preserves locally the Lebesgue measure in the velocity space, in spite of the fact that a positive amount of kinetic energy is lost at each inelastic collision. We prove the analog of Alexander's theorem for our model, which allows us to deduce the global well-posedness of the trajectories, for almost every initial datum.
\end{abstract}

\textbf{Keywords.} Inelastic Hard Spheres; Hard Ball Systems; Billiard Systems; Particle Systems.\\

\section{Introduction}
\numberwithin{equation}{section}
In this article we consider a model of $N$ inelastic hard spheres, defined such that a fixed amount $\varepsilon_0 > 0$ of kinetic energy is dissipated during each collision that is energetic enough. Our main objective will be to establish the well-posedness of the dynamics of such a particle system, using in particular the properties of the flow on the phase space of such a particle system, and how such a flow acts on the Lebesgue measure in the phase space.\\
Considering such a model can be motivated by the study of atoms that interact with photons, see for instance \cite{Oxen986} for a general introduction to these particle systems interacting with radiation. More specifically, when a collision between two excited atoms takes place, a deexcitation can occur, giving rise to the emission of one or several photons. If we consider also that the energy transported by such photons is quantized, as prescribed by quantum mechanics, the kinetic energy that is lost during the collision of the two atoms belongs to a countable set of positive numbers.\\
Such an emission process is in general stochastic, in the sense that it takes place according to a certain probability. Moreover, if the photons remain in the system, the inverse process, i.e. an absorption of a photon during a collision, can also take place. Therefore, in general, a collision can change the level of excitation of one of the colliding atoms. Let us also observe that, for instance in \cite{Oxen986}, \cite{RoSM997}, \cite{MoPR998}, or in the more recent works \cite{JaVe022} and \cite{Dema023}, the authors study systems of atoms that interact between themselves, and interact with photons via emission and absorption processes. In the models considered in \cite{RoSM997}, \cite{MoPR998}, \cite{JaVe022}, \cite{Dema023}, the particles can have only two level of internal energy (ground state, or excited). Therefore these models with two interacting species, when keeping track of the evolution of the photons, can be described as a single system, for which the kinetic energy is globally preserved. However, if the photons are not kept in the description (for instance due to the fact that they escape to infinity after their emission), the system of atoms loses kinetic energy along its evolution.\\
In the present article, we will consider the latter framework. On the one hand, we will not describe the photons, and focus only on the atoms, that will be represented by identical hard spheres. On the other hand, we will consider a simple version of the emission process during collisions, namely, if the relative velocity $v-v_*$ is large enough, then an inelastic collision takes place. Otherwise, if the collision is not energetic enough, we will assume that an elastic collision takes place, and in this case we will describe the evolution of the particles according to the rules of the usual elastic hard sphere system. Notice that the process defined in this way is deterministic, but it is also irreversible, because the kinetic energy that is lost during an inelastic collision will never be recovered later by the system of hard spheres. Finally, let us observe that each particle will be described only by its position and its velocity. In particular, we will not attach to the particles their respective excitation level: the emission of a photon, or not, depend only on the relative velocity of two colliding particles.\\
We will refer to this particle system as the \emph{inelastic hard sphere system with emission} (or IHSE, in short). Such a particle system exhibits a quite surprising property in dimension $d=2$: on the one hand, by definition, during each collision which is sufficiently energetic the system will lose a positive amount of kinetic energy, on the other hand we will show that the scattering mapping will preserve the measure in the velocity space. Let us recall that the scattering mapping is the application that to a pair of pre-collisional velocities $(v,v_*)$ of colliding particles associates the post-collisional velocities $(v',v'_*)$. Intuitively, these two properties look contradictory, we will actually show that both of them can hold true simultaneously, at least in dimension $d=2$.\\
\newline
The main issue that we will consider in the present article concerns the well-posedness of particle systems, for almost every initial configuration, of hard sphere systems with inelastic interaction. To the best of our knowledge, there is no result of this type in the literature. As a matter of fact, this type of well-posedness result is still not obtained for the usual inelastic hard sphere system, with a fixed restitution coefficient. In that system, it is assumed that, during a collision, a fixed fraction of the normal component 
of the relative velocity is lost, while the tangential relative velocity is preserved. In other words, the post-collisional relative velocity $v'-v'_*$ satisfies:
\begin{align}
\label{EQUAT_Loi_ColliCoeffRestiRFixe}
(v'-v'_*)\cdot\omega = - r(v-v_*)\cdot\omega,
\end{align}
where $r$, the restitution coefficient, is a \emph{fixed} positive number ($0 < r < 1$). One major difficulty with this model is the onset of the phenomenon of inelastic collapse, characterized by an infinite number of collisions in finite time. See for instance \cite{McYo993}, \cite{ZhKa996}, or the recent works \cite{DoVeAr1} and \cite{DoVeAr2} concerning the inelastic collapse. Let us mention that this model is extensively used to describe granular gases, for instance through an inelastic version of the Boltzmann equation (\cite{BrPo004}, \cite{CHMR021}).\\
However, in the case of elastic hard spheres, the well-posedness of such a particle system is established: this is Alexander's theorem (\cite{Alex975}, \cite{Alex976}, see also \cite{GSRT013} for a modern presentation). The main difficulty to prove such a result lies in the fact that collisions involving three particles or more can take place, preventing to define the dynamics further. 
Nevertheless, Alexander obtained the well-posedness for \emph{almost every} initial configuration, with respect to the Lebesgue measure. This result is then the starting point to study the derivation of the Boltzmann equation from a system of $N$ hard spheres, completed by the celebrated Lanford's theorem \cite{Lanf975}. The reader may also refer to the more recent \cite{GSRT013} for a modern proof of Alexander's theorem, and to \cite{PuSS014} for further developments on Lanford's theorem, including more general interaction potentials for the particle system.\\
One of the key arguments in the proof of Alexander's theorem is the fact that the hard sphere transport preserves the Lebesgue measure. Since, in dimension $d=2$, the scattering of the inelastic hard sphere with emission system shares also this property, it is a natural question to study if the arguments of Alexander can be adapted to our model. As a matter of fact, we can obtain a similar result, namely, Theorem \ref{THEORSect3AlexanderIHSWEM}, which is the main result of this article. To the best of our knowledge, this is the first result in the literature of global well-posedness concerning inelastic particle systems. We emphasize that the result holds only in dimension $d=2$.\\
The plan of the article is the following. In Section \ref{SECTI__2__}, we introduce in detail the inelastic hard sphere with emission system. An important step to study the evolution of the Lebesgue measure under the action of the flow of particle systems is to restrict the dynamics on time intervals such that at most one collision can take place, ensuring that such a dynamics is indeed well-defined. In Section \ref{SSECT_2.2_} we introduce the associated notion of TCT dynamics, describing such one-collision dynamics. In Section \ref{SECTI__3__} we obtain a general formula concerning the Jacobian determinant of the TCT dynamics, which allows to estimate how the Lebesgue measure evolves locally under the action of the flows of particle systems. This result is presented in Theorem \ref{THEOREvoluMesur_TCT_Proce}, page \pageref{THEOREvoluMesur_TCT_Proce}. In Section \ref{SECTI__4__}, we apply Theorem \ref{THEOREvoluMesur_TCT_Proce}. On the one hand, we apply the result to the elastic hard sphere system, recovering the well-known result that the flow of such a system preserves indeed the Lebesgue measure in the phase space. On the other hand, we apply also the formula to the inelastic hard sphere with emission system we introduced. In particular, we prove that, although the scattering mapping of such a system preserves the Lebesgue measure, the flow does not. In Section \ref{SECTI__5__}, we use the previous result to prove an Alexander-type theorem for the inelastic hard sphere with emission system, in dimension $d=2$. In this final section, we provide also a geometrical interpretation of the fact that the scattering of the inelastic hard sphere with emission preserves the measure in dimension $d=2$. Besides, we present examples, in dimension $d \geq 2$, with $d$ arbitrary, of scattering mappings describing inelastic collisions, dissipating a fixed amount of kinetic energy, but preserving the measure in the phase space.

\section{The inelastic hard spheres with emission}
\label{SECTI__2__}

\subsection{The model}
\label{SSec1PresentatiModèl}

We consider a system of $N$ spherical particles, of diameter $1$, evolving in $\mathbb{R}^d$, where $d \geq 2$ is a positive integer. The position and the velocity, of each of the particles will be denoted respectively by $x_i \in \mathbb{R}^d$ and $v_i \in \mathbb{R}^d$. The \emph{configuration} of the system is the vector
\begin{align}
Z_N = \left(x_1,v_1,\dots,x_N,v_N\right) \in \mathbb{R}^{2dN},
\end{align}
collecting the positions and velocities of all of the $N$ particles of the system.\\
In order to describe the collisions, we introduce the set $\mathcal{P}_N$ of pairs of ordered indices:
\begin{align}
\mathcal{P}_N = \left\{ (i,j) \in \{1,\dots,N\}^2\ /\ 1 \leq i < j \leq N \right\}.
\end{align}
We introduce also the two following spaces of configurations.

\begin{defin}[Phase space]
Let $d \geq 2$ and $N$ be two positive integers. We define the \emph{phase space}, denoted by $\overline{\mathcal{D}}_N$, as the following subset of $\mathbb{R}^{2dN}$:
\begin{align}
\overline{\mathcal{D}}_N = \big\{ Z_N \in \left(x_1,v_1,\dots,x_N,v_N\right) \in \mathbb{R}^{2dN}\ /\ \forall\, (i,j) \in \mathcal{P}_N, \ \vert x_i-x_j \vert \geq 1 \big\}.
\end{align}
In the same way, we define the set $\mathcal{D}_N$, which is the interior of $\overline{\mathcal{D}}_N$, as the following subset of $\mathbb{R}^{2dN}$:
\begin{align}
\mathcal{D}_N = \big\{ Z_N \in \left(x_1,v_1,\dots,x_N,v_N\right) \in \mathbb{R}^{2dN}\ /\ \forall\, (i,j) \in \mathcal{P}_N, \ \vert x_i-x_j \vert > 1 \big\}.
\end{align}
\end{defin}

\begin{remar}
The set $\mathcal{D}_N$ describes the configurations of $N$ spherical particles of radius $1$ evolving in the Euclidean space $\mathbb{R}^d$, such that there is no contact between the particles. The configuration $Z_N$ is describing a collision between at least two particles when $Z_N \in \partial \mathcal{D}_N = \overline{\mathcal{D}}_N \backslash \mathcal{D}_N$.
\end{remar}
\noindent
We introduce the positive number $\varepsilon_0 > 0$, which will correspond to the fixed amount of kinetic energy that is lost during any inelastic collision.\\
We define the dynamics of the particles to be the free flow inside $\mathcal{D}_N$.\\
We also say that a single collision takes place when $Z_N \in \overline{\mathcal{D}}_N \backslash \mathcal{D}_N$ and when there exists a single pair $(i,j) \in \mathcal{P}_N$ such that $\vert x_i - x_j \vert = 1$. If in addition $\vert v_i - v_j \vert^2 > 4\varepsilon_0$, an inelastic collision takes place.  
In such a case, the velocities $(v_i,v_j)$ of the pair of colliding particles are immediately changed into $(v_i',v_j')$ according to the law:
\begin{align}
\label{EQUATSect1LoideCollision_}
\text{if}\ \vert v_i - v_j \vert^2 > 4\varepsilon_0, \hspace{5mm} \left\{
\begin{array}{rcl}
v_i' &=& \displaystyle{\frac{v_i + v_j}{2}} - \sigma \sqrt{ \frac{\vert v_j-v_i \vert^2}{4} - \varepsilon_0}, \\
v_j' &=& \displaystyle{\frac{v_i + v_j}{2}} + \sigma \sqrt{ \frac{\vert v_j-v_i \vert^2}{4} - \varepsilon_0},
\end{array}
\right.
\end{align}
where $\sigma$ is the symmetry of the normalized relative velocity $(v_j-v_i)/\vert v_j-v_i \vert$ with respect to $\omega^\perp$, and $\omega$ is the line of contact between the two colliding particles, that is:
\begin{align}
\label{EQUATSect1DefinSigma}
\sigma = \frac{v_j-v_i}{\vert v_j - v_i \vert} - 2 \left( \frac{(v_j-v_i)}{\vert v_j - v_i \vert} \cdot \omega \right) \omega,
\end{align}
and
\begin{align}
\omega = \frac{x_j-x_i}{\vert x_j - x_i \vert} \cdotp
\end{align}
A fixed quantity, equal to $\varepsilon_0 > 0$, of kinetic energy is lost during each inelastic collision:
\begin{align}
\label{EQUATSect1DissipatioEnCin}
\frac{\vert v_i' \vert^2}{2} + \frac{\vert v_j' \vert^2}{2} = \frac{\vert v_i \vert^2}{2} + \frac{\vert v_j \vert^2}{2} - \varepsilon_0.
\end{align}
\noindent
Otherwise, if $\vert v_i - v_j \vert^2 \leq 4\varepsilon_0$, we assume that an elastic collision takes place:
\begin{align}
\label{EQUATSect1LoideColliElast}
\text{if}\ \vert v_i - v_j \vert^2 \leq 4\varepsilon_0, \hspace{5mm} \left\{
\begin{array}{rcl}
v_i' &=& v_i - (v_i-v_j)\cdot\omega\omega, \\
v_j' &=& v_j + (v_i-v_j)\cdot\omega\omega.
\end{array}
\right.
\end{align}

\noindent
The inelastic collision law is obtained as follows. Assuming first that the momentum is conserved during an inelastic collision, but that a fixed amount $\varepsilon_0$ of kinetic energy is lost, the post-collisional velocities $v_i'$ and $v_j'$ are necessarily of the form \eqref{EQUATSect1LoideCollision_}, with an angular parameter $\sigma \in \mathbb{S}^{d-1}$ to be determined. In our model, we assume that the angles formed between the relative velocities and the vector $\omega$ (normal to the plane of contact between the colliding particles), before and after the collision, remain the same up to a change of sign. Besides, a collision does not modify the direction of the orthogonal component with respect to $\omega$ of the relative velocities. This imposes the expression \eqref{EQUATSect1DefinSigma} for $\sigma$. Nevertheless, the magnitude of all the components of the relative velocity are scaled in the same way during the collision. This property makes the model quite different from the classical inelastic collisions described with restitution coefficient, acting only on the normal component of the relative velocity.\\
Finally, observe that by construction such a collision law associates to a pre-collisional relative velocity (that is, such that $(v_j-v_i)\cdot\omega < 0$) a post-collisional relative velocity (that is, such that $(v'_j-v'_i)\cdot\omega > 0$).\\
\newline
We will call the particle system that we introduced above the \emph{inelastic hard sphere with emission model}.\\
From now on, we will refer to this model as the IHSE model, and to its flow in the phase space as the IHSE flow.

\subsection{Definition of the Transport-Collision-Transport (TCT) dynamics}
\label{SSECT_2.2_}

In the previous section, we introduced only formally the dynamics of the IHSE. It remains to define precisely such a dynamics.\\
We will consider a positive time $\tau > 0$, and define the dynamics on such the time interval $[0,\tau]$, assuming that the system experiences at most one collision. We will see that this restriction, at first glance quite strong, is actually enough to define the dynamics of the particle system in general, following the approach originally developed by Alexander in \cite{Alex975}.\\
Under such a restriction, the dynamics we will describe in this section is the composition of three consecutive operations: free transport, collision, and free transport again.\\
Let us mention that a similar approach of decomposing the evolution in terms of transport-collision-transport can be applied to much larger classes of particle systems than the model we are considering.


We will characterize the configurations leading to a collision, grazing or not, using the two following objects.

\begin{defin}[Collision time function, collision times]
Let $d \geq 2$ and $N$ be two positive integers. For any pair $(i,j) \in \mathcal{P}_N$, we define the \emph{collision time function} $g_{i,j}$ as:
\begin{align}
\label{EQUATDefinFonctTempsColli}
g_{i,j}: \mathbb{R}^{dN} \ni W_N = \left(w_1,\dots,w_i,\dots,w_j,\dots,w_N\right)  \mapsto \left\vert w_i-w_j \right\vert^2 -1.
\end{align}
For any configuration $Z_N = (X_N,V_N) \in \mathcal{D}_N$ and any pair $(i,j) \in \mathcal{P}_N$, we define the \emph{virtual collision time of the pair $(i,j)$} as the solution $\tau_{i,j}$ of the equation:
\begin{align}
g_{i,j}\left( X_N + \tau_{i,j} V_N \right) = 0,
\end{align}
if such a solution exists. For any configuration $Z_N = (X_N,V_N) \in \mathcal{D}_N$, we define the \emph{first collision time} $t_c = t_c(Z_N)$ as:
\begin{align}
t_c &= \min_{(i,j) \in \mathcal{P}_N} \left\{\tau_{i,j} > 0 \ /\ g_{i,j}(X_N + \tau_{i,j} V_N) = 0\right\} \hspace{2mm} \text{if there exists a pair } \hspace{0mm}(i,j) \in \mathcal{P}_N \text{such that the equation} \nonumber\\
&\vspace{-3mm}\hspace{80mm}g_{i,j}(X_N+\tau_{i,j}V_N) = 0\hspace{3mm} \text{has a positive solution},\\
t_c &= +\infty \hspace{5mm} \text{otherwise}. \nonumber
\end{align}
that is, $t_c$ is the smallest positive virtual collision time, or again, the smallest positive solution of the equations (in $t$) $g_{i,j}(X_N + t V_N) = 0$, with the minimum taken over all the pairs $(i,j) \in \mathcal{P}_N$.
\end{defin}

\begin{defin}[Grazing collision discriminant]
\label{DEFINGraziColliDiscr}
Let $d \geq 2$ and $N$ be two positive integers. For any configuration $Z_N$ and any pair $(i,j) \in \mathcal{P}_N$, we define the \emph{grazing collision discriminant} as the quantity:
\begin{align}
\Delta_{i,j} = \left((x_i-x_j)\cdot(v_i-v_j)\right)^2 - \vert v_i-v_j \vert^2 \left( \vert x_i-x_j \vert^2 - 1 \right).
\end{align}
\end{defin}

\noindent
The collision time function $g$ introduced in \eqref{EQUATDefinFonctTempsColli} encodes the information that we consider identical, spherical particles, evolving in the whole Euclidean space $\mathbb{R}^d$.

\begin{paragraph}{Definition of the dynamics, without collision.} We start with defining the dynamics of the particles on the time interval $[0,\tau]$, in the case when no collision takes place on such a time interval.

\begin{defin}[Domain of the free dynamics, definition of the IHSE dynamics $\mathcal{T}_\tau$ (I)]
\label{DEFINDomai&TransportLibre}
Let $d \geq 2$ and $N$ be two positive integers and let $\tau > 0$ be a positive real number.\\
We define the \emph{domain of the free dynamics} as the subset of the interior of the phase space $\mathcal{D}_N$, denoted by $\mathcal{D}_N^{(0)}(\tau)$, and defined as:
\begin{align}
\mathcal{D}_N^{(0)}(\tau) = \big\{ Z_N \in \mathcal{D}_N \ /\ t_c(Z_N) > \tau \big\}.
\end{align}
For any $Z_N = (X_N,V_N) \in \mathcal{D}_N^{(0)}(\tau)$, we define the \emph{IHSE dynamics} as the mapping:
\begin{align}
\mathcal{T}_\tau(Z_N) : \mathcal{D}_N^{(0)}(\tau) \ni Z_N \mapsto \mathcal{T}_\tau(Z_N) = \left(X_N + \tau V_N,V_N\right).
\end{align}
\end{defin}

\noindent
By construction, for any $Z_N = (X_N,V_N) \in \mathcal{D}_N^{(0)}(\tau)$ and $0 \leq t \leq \tau$, we have:
\begin{align}
(X_N + t V_N,V_N) \in \mathcal{D}_N,
\end{align}
so that all the trajectories starting from $Z_N \in \mathcal{D}_N^{(0)}(\tau)$ and obtained by free flow can be prolongated on the whole time interval $[0,\tau]$, and in particular no collision takes place on this time interval for such trajectories.
\end{paragraph}

\begin{paragraph}{Definition of the dynamics, with a single collision (TCT dynamics).} 
We define now the dynamics of the particles on the time interval $[0,\tau]$, in the case when a single collision takes place on such a time interval.\\
\newline
We start with some remarks concerning the grazing collisions. Assuming that a collision takes place at $t_c \in ]0,\tau[$, there exists a pair $(i,j) \in \mathcal{P}_N$ such that $g_{i,j}(X_N + t_c V_N) = 0$. Let us consider in addition that:
\begin{align}
\label{EQUATHypotColli_Non_Rasan}
\Delta_{i,j} = \Delta_{i,j}(Z_N) = \left((x_i-x_j)\cdot(v_i-v_j)\right)^2 - \vert v_i-v_j \vert^2 \left( \vert x_i-x_j \vert^2 - 1 \right) > 0.
\end{align}
\noindent
This last condition \eqref{EQUATHypotColli_Non_Rasan} has a simple geometrical meaning. Indeed, let us first observe that $\Delta_{i,j} \geq 0$ is a necessary condition to have the existence of a solution $\tau_{i,j}$ to the equation $g_{i,j}(X_N+\tau_{i,j} V_N) = 0$. If now we assume $\Delta_{i,j} = 0$, the collision time $t_c = \tau_{i,j}$ is given by:
\begin{align}
t_c = - \frac{(x_i-x_j)\cdot(v_i-v_j)}{\vert v_i-v_j \vert^2},
\end{align}
so that at the time of collision we have:
\begin{align}
\label{EQUATDescrColliRasan}
\left[ (x_i+t_c v_i) - (x_j+t_c v_j) \right] \cdot(v_i-v_j) = (x_i-x_j)\cdot(v_i-v_j) + t_c \vert v_i-v_j \vert^2 = 0.
\end{align}
At time $t_c$, the two particles $i$ and $j$ are in contact, and \eqref{EQUATDescrColliRasan} describes the fact that the relative velocity between the two particles is orthogonal to the relative position of the particle $j$ with respect to the particle $i$ at the time of collision. Such a collision is singular, in the sense that the normal component of the relative velocity is zero. Using the terminology used in Definition \ref{DEFINGraziColliDiscr}, we observe a \emph{grazing collision}. In the case of a grazing collision, we cannot give a smooth expression of the time of collision $t_c$ in terms of the initial configuration of the particle system.\\
Let us indeed observe that the condition \eqref{EQUATHypotColli_Non_Rasan} is equivalent to assume that $\partial_t f(t_c,x_i,v_i,x_j,v_j) \neq 0$, where we defined:
\begin{align}
f(t,x,v,y,w) = \left\vert (x+tv) - (y+tw) \right\vert^2 - 1 = \left\vert t(v-w) + (x-y) \right\vert^2 - 1,
\end{align}
since we have:
\begin{align}
\partial_t f(t_c,x_i,v_i,x_j,v_j) = - 2 \frac{\sqrt{\Delta_{i,j}}}{\vert v_i-v_j \vert^2} \cdotp
\end{align}
With such a writing, the collision time $t_c$ satisfies:
\begin{align}
f(t_c,x_i,v_i,x_j,v_j) = 0.
\end{align}
When \eqref{EQUATHypotColli_Non_Rasan} holds, we can apply the implicit function theorem to the function $f$, which provides an expression of the time of collision $t_c$ in terms of the initial configuration, that is, such that $t_c = t_c(x_i,v_i,x_j,v_j)$.\\
\newline


\noindent
In order to be able to define the dynamics of the particle system in the case of one collision, we need assumptions in order to ensure that the trajectory starting from a configuration $Z_N \in \mathcal{D}_N$ presents a single collision in the time interval $]0,\tau[$, and that this collision is involving a single pair of particles.\\
In the following definition, we introduce a subset of initial configurations of $\mathcal{D}_N$ on which we can define trajectories involving a single collision.

\begin{defin}[Domain of the TCT dynamics]
\label{DEFINDomaine_du_TCT_Proce}
Let $d\geq 2$ and $N$ be two positive integers, and let $\varepsilon_0 > 0$ and $\tau > 0$ be two positive real numbers.\\
We define the \emph{domain of the TCT dynamics} as the subset of the interior of the phase space $\mathcal{D}_N$, denoted by $\mathcal{D}^{(1)}_N(\tau)$, and defined as:
\begin{align}
\label{EQUATDefinDomai_TCT_}
\mathcal{D}^{(1)}_N(\tau) = \hspace{-3mm} \bigcup_{(i,j) \in \mathcal{P}_N} \hspace{-3mm} \mathcal{D}^{(1),(i,j)}_N(\tau),
\end{align}
where $\mathcal{D}^{(1),(i,j)}_N(\tau)$ is defined as:
\begin{align}
\mathcal{D}^{(1),(i,j)}_N(\tau) = \bigcap_{p=1}^6 S_p^{(i,j)}
\end{align}
and
\begin{align}
\label{EQUATDefinDmTCTTps_Col_ij_<tau}
S_1^{(i,j)} = \big\{ Z_N \in \mathcal{D}_N\ /\ t_c < \tau \},
\end{align}
\begin{align}
\label{EQUATDefinDmTCTDelta_i_j__>_0_}
S_2^{(i,j)} = \big\{ Z_N \in \mathcal{D}_N\ /\ \Delta_{i,j}(Z_N) > 0 \ \text{and} \ \tau_{i,j} = t_c \big\},
\end{align}
\begin{align}
\label{EQUATDefinDmTCTDelta_k_l__>_0_}
S_3^{(i,j)} = \big\{ Z_N \in \mathcal{D}_N\ /\ \forall\, (k,l) \in \mathcal{P}_N,\, (k,l) \neq (i,j),\, \Delta_{k,l}(Z_N) < 0\ \text{or}\ \big[ \Delta_{k,l}(Z_N) > 0 \ \text{and}\ \tau_{k,l} > \tau \big] \big\},
\end{align}
\begin{align}
\label{EQUATDefinDmTCTEnergCinetNnCrt}
S_4^{(i,j)} = \big\{ Z_N \in \mathcal{D}_N\ /\ \vert v_i - v_j \vert^2 \neq 4\varepsilon_0 \big\},
\end{align}
and
\begin{align}
\label{EQUATDefinDmTCTDelta'i_l__>_0_}
S_5^{(i,j)} = \big\{ Z_N \in \bigcap_{p=1}^4 S_p^{(i,j)} \ /\ \forall\, (k,l) \in \mathcal{P}_N, k=i\ \text{or}\ l=i,\, \Delta_{k,l}(Z_N') < 0 \, \text{or} \, \big[ \Delta_{k,l}(Z_N') > 0 \, \text{and}\ \tau_{k,l}(Z_N') > \tau \big] \big\},
\end{align}
\begin{align}
\label{EQUATDefinDmTCTDelta'k_j__>_0_}
S_6^{(i,j)} = \big\{ Z_N \in \bigcap_{p=1}^4 S_p^{(i,j)} \ /\ \forall\, (k,l) \in \mathcal{P}_N, k=j\ \text{or}\ l=j,\, \Delta_{k,l}(Z_N') < 0 \, \text{or} \, \big[ \Delta_{k,l}(Z_N') > 0 \, \text{and}\ \tau_{k,l}(Z_N') > \tau \big] \big\},
\end{align}
with $Z_N' = (X_N+\tau_{i,j}V_N,V_N')$, and $V_N'$ given by the collision laws \eqref{EQUATSect1LoideCollision_}, \eqref{EQUATSect1LoideColliElast}.

\end{defin}

\noindent
Let us comment on the different assumptions used to define the domain $\mathcal{D}_N^{(1)}(\tau)$ of the TCT dynamics.\\
First, the assumption \eqref{EQUATDefinDmTCTTps_Col_ij_<tau} is necessary to have at least one collision on the time interval $]0,\tau[$, involving a certain pair $(i,j) \in \mathcal{P}_N$. \eqref{EQUATDefinDmTCTDelta_i_j__>_0_} is necessary to have that the first collision involves this pair $(i,j)$, but it is not sufficient. Indeed, another pair $(k,l)$ (or more than another pair) can collide exactly at the same time, and we can also have $i = k$ or $j = l$. The condition \eqref{EQUATDefinDmTCTDelta_k_l__>_0_} prevents such a situation to occur.\\
Therefore, assumptions \eqref{EQUATDefinDmTCTTps_Col_ij_<tau}, \eqref{EQUATDefinDmTCTDelta_i_j__>_0_}, \eqref{EQUATDefinDmTCTDelta_k_l__>_0_} ensure that a first collision takes place before $\tau$, and this collision involves the pair $(i,j)$.\\
When the collision involving the pair $(i,j)$ takes place, one has to make sure that the particles separate after the collision. It is clearly the case if such a collision is elastic (that is, $V_N'$ is computed according to the law \eqref{EQUATSect1LoideColliElast}). In the case of an inelastic collision ($V_N'$ computed according to \eqref{EQUATSect1LoideCollision_}), \eqref{EQUATDefinDmTCTEnergCinetNnCrt} ensures that the post-collisional relative velocity between $i$ and $j$ is not zero.\\
Finally, we want a single collision to take place on the whole time interval $]0,\tau[$. Since at time $t_c$ the velocities of the particles $i$ and $j$ are modified, one has to reexamine the possible collisions in the system, that is now in the configuration $(X_N + t_c V_N,V_N')$. The assumptions \eqref{EQUATDefinDmTCTDelta'i_l__>_0_} and \eqref{EQUATDefinDmTCTDelta'k_j__>_0_} prevent a new collision to take place in time interval $]t_c,\tau[$.\\
All the assumptions of Definition \ref{DEFINDomaine_du_TCT_Proce} ensure that a single collision takes place on the time interval $]0,\tau]$ if $Z_N \in \mathcal{D}^{(1)}_N(\tau)$. In other words, the dynamics of the system for any initial configuration in $\mathcal{D}^{(1)}_N(\tau)$ is given by a Transport-Collision-Transport (TCT) dynamics.

\begin{remar}
The domain $\mathcal{D}^{(1)}_N(\tau)$ is an open set.
\end{remar}
%
\noindent
For $Z_N \in \mathcal{D}^{(1)}_N(\tau)$, no triple collision takes place, and we can define uniquely the post-collisional velocities $v_i'$ and $v_j'$ of the two particles $i$ and $j$ in terms of their pre-collisional velocities $v_i$ and $v_j$ (these pre-collisional velocities corresponding to the initial velocities of these particles), using the reflection laws \eqref{EQUATSect1LoideCollision_}, or \eqref{EQUATSect1LoideColliElast}, depending if the collision is inelastic or not.\\
\newline
We introduce now the four different functions acting on the phase space $\overline{\mathcal{D}}_N$ that we will use in order to define the TCT dynamics. We will define one family of four functions $(\varphi_1,\varphi_2,\varphi_3,\varphi_4)$ for each of the sets $\mathcal{D}_N^{(1),(i,j)}(\tau)$ composing the domain $\mathcal{D}_N^{(1)}(\tau)$ of the TCT dynamics, that is, one family for each pair $(i,j) \in \mathcal{P}_N$.
\begin{defin}[Elementary functions of the TCT dynamics]
\label{DEFIN_TCT_ElemnFonctInelaHSwEm}
Let $d\geq 2$ and $N$ be two positive integers, and let $\varepsilon_0$ and $\tau$ be two positive real numbers.\\
For any $(i,j) \in \mathcal{P}_N$, we define the \emph{elementary functions of the TCT dynamics} $\varphi_k^{(i,j)}$, $k \in \{1,2,3,4\}$, as:
\begin{enumerate}
\item 
Function $\varphi_1^{(i,j)}$ (determination of the collision time $t_c$):
\begin{align}
\label{EQUATDefinFonctVPhi1}
\varphi_1^{(i,j)}: \left\{
\begin{array}{ccc}
\mathcal{D}_N^{(1),(i,j)}(\tau) \subset \mathcal{D}_N &\rightarrow& \mathcal{D}_N\times\, ]0,\tau[ \subset \mathbb{R}^{2dN + 1},\\
Z_N = (X_N,V_N) &\mapsto& \varphi_1^{(i,j)}(Z_N) = (X_N,V_N,t_c),
\end{array}
\right.
\end{align}
with
\begin{align}
\label{EQUATDefinTempsColli}
t_c = \tau_{i,j}(X_N,V_N), \hspace{1mm} \text{the smallest solution } t \text{ of the equation:} \hspace{5mm} g_{i,j}(X_N + t V_N) = 0.
\end{align}
\item Function $\varphi_2^{(i,j)}$ (first free transport, on the time interval $[0,\tau_{i,j}]$):
\begin{align}
\label{EQUATDefinFonctVPhi2}
\varphi_2^{(i,j)}: 
\left\{
\begin{array}{ccc}
\mathcal{D}_N\times\, ]0,\tau[\ \subset \mathbb{R}^{2dN + 1} &\rightarrow& \overline{\mathcal{D}}_N\times\, ]0,\tau[\ \subset \mathbb{R}^{2dN + 1},\\
(X_N,V_N,t) &\mapsto& \varphi_2^{(i,j)}(X_N,V_N,t) = (X_N+tV_N,V_N,t),
\end{array}
\right.
\end{align}
\item Function $\varphi_3^{(i,j)}$ (scattering):
\begin{align}
\label{EQUATDefinFonctVPhi3}
\varphi_3^{(i,j)}: \left\{
\begin{array}{ccc}
\left( \overline{\mathcal{D}}_N \backslash \mathcal{D}_N \cap S_\text{scatt}^{(i,j)} \right) \times\, ]0,\tau[\ \subset \mathbb{R}^{2dN+1} &\rightarrow& \left( \overline{\mathcal{D}}_N \backslash \mathcal{D}_N \right) \times\, ]0,\tau[\ \subset \mathbb{R}^{2dN+1},\\
(X_N,V_N,t) &\mapsto& \varphi_3^{(i,j)}(X_N,V_N,t) = (X_N,V'_N,t),
\end{array}
\right.
\end{align}
with $V'_N = V_N'(X_N,V_N) = (v_1',\dots,v_N')$, such that $v'_k = v_k$ for all $k \neq i,j$, and:
\begin{itemize}
\item if $\vert v_i - v_j \vert^2 > 4\varepsilon_0$, $v'_i,v'_j$ are obtained from an inelastic collision, that is, given by \eqref{EQUATSect1LoideCollision_},
\item if $\vert v_i - v_j \vert^2 < 4\varepsilon_0$, $v'_i$, $v'_j$ are obtained from an elastic collision, that is, given by \eqref{EQUATSect1LoideColliElast},
\end{itemize}
and $S_\text{scatt}^{(i,j)}$ appearing in the domain of $\varphi_3^{(i,j)}$ is defined such as:
\begin{align}
\label{EQUATDefinS_scatt___}
S_\text{scatt}^{(i,j)} = \big\{ Z_N \in \overline{\mathcal{D}}_N\ /\ \vert x_k-x_l \vert > 1, \, \forall\, (k,l) \in \mathcal{P}_N\ /\ (k,l) \neq (i,j)\big\}.
\end{align}
\item Function $\varphi_4^{(i,j)}$ (second free transport, on the time interval $[\tau_{i,j},\tau]$):
\begin{align}
\label{EQUATDefinFonctVPhi4}
\varphi_4^{(i,j)}:
\left\{
\begin{array}{ccc}
\overline{\mathcal{D}}_N \times\, ]0,\tau[\ \subset \mathbb{R}^{2dN+1} &\rightarrow& \overline{\mathcal{D}}_N,\\
(X_N,V_N,t) &\mapsto& \varphi_4^{(i,j)}(X_N,V_N,t) = (X_N + (\tau-t)V_N,V_N).
\end{array}
\right.
\end{align}
\end{enumerate}
\end{defin}

\begin{remar}
The composition of the two first mappings $\varphi_2^{(i,j)} \circ \varphi_1^{(i,j)}$ defines the following transformation:
\begin{align}
(X_N,V_N) \mapsto (X_N + \tau_{i,j} V_N,V_N).
\end{align}
In particular, and although the free transport preserves the measure in the phase space, the mapping defined here corresponds to a transport during a time interval that depends itself on the initial configuration $Z_N$. Therefore, it is not clear a priori how the Lebesgue measure on the phase space evolves under the action of such a flow.
\end{remar}
\noindent
We are now in position to define completely the TCT dynamics, that is, the dynamics of the particle system when a single collision takes place on the time interval $[0,\tau]$.

\begin{defin}[TCT dynamics, definition of the IHSE dynamics $\mathcal{T}_\tau$ (II)]
\label{DEFIN_TCT_ProceInelaHSwEm}
Let $d \geq 2$ and $N$ be two positive integers, and let $\varepsilon_0$ and $\tau$ be two positive real numbers.\\
We define on the domain $\mathcal{D}_N^{(1)}(\tau)$ the \emph{IHSE TCT dynamics} as the mapping:
\begin{align}
\label{EQUATDefin_TCT_Proce}
\mathcal{T}_\tau: \left\{
\begin{array}{ccc}
\mathcal{D}_N^{(1)}(\tau) = \bigcup_{(i,j) \in \mathcal{P}_N} \mathcal{D}_N^{(1),(i,j)}(\tau) &\rightarrow& \mathcal{D}_N \\
Z_N = (X_N,V_N) &\mapsto& \left[ \varphi_4^{(i,j)} \circ \varphi_3^{(i,j)} \circ \varphi_2^{(i,j)} \circ \varphi_1^{(i,j)} \right] (X_N,V_N) \hspace{2mm} \text{if} \hspace{2mm} Z_N \in \mathcal{D}_N^{(1),(i,j)}(\tau),
\end{array}
\right.
\end{align}
where $\varphi_1^{(i,j)}$, $\varphi_2^{(i,j)}$, $\varphi_3^{(i,j)}$ and $\varphi_4^{(i,j)}$ are the elementary functions of the TCT dynamics respectively defined in \eqref{EQUATDefinFonctVPhi1}, \eqref{EQUATDefinFonctVPhi2}, \eqref{EQUATDefinFonctVPhi3} and \eqref{EQUATDefinFonctVPhi4}.
\end{defin}
\end{paragraph}
\noindent
Observe that the definition \eqref{EQUATDefin_TCT_Proce} is consistent, since on the one hand the domain $\mathcal{D}_N^{(1)}(\tau)$ of the TCT dynamics is the disjoint union of the subsets $\mathcal{D}_N^{(1),(i,j)}(\tau)$. On the other hand, by definition of these subsets, the configuration $\mathcal{T}_\tau(Z_N)$ obtained at time $\tau$ is such that no pair of particles are in contact, so the image of $\mathcal{T}_\tau$ is indeed contained in $\mathcal{D}_N$.\\
Let us also observe that, by construction, the image of $\varphi_2^{(i,j)} \circ \varphi_1^{(i,j)}$ is contained in $\overline{\mathcal{D}}_N \backslash \mathcal{D}_N \cap S_\text{scatt}^{(i,j)}$, so that the configuration $\varphi_2^{(i,j)} \circ \varphi_1^{(i,j)}(Z_N)$ does not describe a triple collision, and so the application of the scattering mapping $\varphi_3^{(i,j)}$ is also consistent.
\begin{remar}
The expression \eqref{EQUATDefin_TCT_Proce} for the hard sphere flow can be written in a more direct and explicit way, as follows (assuming that $Z_N$ belongs to $\mathcal{D}_N^{(1),(i,j)}(\tau)$):
\begin{align}
\label{EQUATExpreExpli_Flot}
\mathcal{T}_\tau(X_N,V_N) = \left( X_N + \tau_{i,j} V_N + (\tau-\tau_{i,j})V_N',V_N' \right),
\end{align}
with $\tau_{i,j}$ being the smallest solution of the equation $g_{i,j}(X_N+t V_N) = 0$, and $V_N'$ given according to the reflection laws, that is, either by \eqref{EQUATSect1LoideCollision_}, or by \eqref{EQUATSect1LoideColliElast}.\\
Nevertheless, it will be easier to study the evolution of the Lebesgue measure under the action of the TCT dynamics using the decomposition we introduced.\\
In addition, the decomposition into the four mappings $\varphi_k^{(i,j)}$ enables to treat the evolution of the measure for a class of particle systems which is not restricted to the IHSE model. For instance, the decomposition can be used to study the evolution of the measure in the case of the classical hard sphere flow.\\
Conversely, the short expression \eqref{EQUATExpreExpli_Flot} hides in its compact form an involved dynamics, in the sense that the post-collisional velocity $V_N'$ depends on the point (in the position variable) where the particles collides, such a point being given in terms of the collision time $t_c$, which depends itself on the initial configuration $Z_N$.
\end{remar}

\section{Evolution of the measure of the phase space under the action of the TCT dynamics}
\label{SECTI__3__}

We will study in this section how the Lebesgue measure evolves along the flow of the TCT dynamics. We will perform the computations for general scattering mappings, that is, we will not restrict ourselves to the inelastic hard sphere with emission model. 
More precisely, we will consider more general scattering mappings $\widetilde{\varphi}^{(i,j)}_3$ than $\varphi_3^{(i,j)}$ introduced in Definition \ref{DEFIN_TCT_ElemnFonctInelaHSwEm}. We will then define general TCT dynamics on the elementary domains $\mathcal{D}_N^{(1),(i,j)}(\tau)$ as the composition of the mappings:
\begin{align}
\varphi_4^{(i,j)} \circ \widetilde{\varphi}_3^{(i,j)} \circ \varphi_2^{(i,j)} \circ \varphi_1^{(i,j)}.
\end{align}
With such an approach, we will be able to treat also the classical case of the elastic hard sphere system.\\
We will use continuously the notion of \emph{scattering mapping}, which describes the change of velocities due to collisions.

\begin{defin}[Scattering mapping]
\label{DEFINScatteringMappi}
Let $d \geq 2$ and $N$ be two positive integers. For any $(i,j) \in \mathcal{P}_N$, we consider a mapping $\widetilde{\varphi}^{(i,j)}$ defined as:
\begin{align}
\label{EQUATCondi_TCT_GenerDpdV'}
\widetilde{\varphi}^{(i,j)}: \left\{
\begin{array}{rcl}
\left(\overline{\mathcal{D}}_N \backslash \mathcal{D}_N  \cap S_\text{scatt}^{(i,j)} \right) \times ]0,\tau[ &\rightarrow& \left(\overline{\mathcal{D}}_N \backslash \mathcal{D}_N \right) \times ]0,\tau[,\\
(X_N,V_N,\tau) &\mapsto& (X_N,V'_N,\tau),
\end{array}
\right.
\end{align}
with $S_\text{scatt}^{(i,j)}$ defined in \eqref{EQUATDefinS_scatt___}.\\
We say that $\widetilde{\varphi}$ is a \emph{scattering mapping} if it satisfies the following assumptions:
\begin{align}
V'_N = V'_N(X_N,V_N) \in \mathcal{C}^1(\mathbb{R}^{2dN},\mathbb{R}^{dN}),
\end{align}
$v'_k = v_k$ for all $k \neq i,j$, and for all $(X_N,V_N) \in \mathbb{R}^{dN} \times \mathbb{R}^{dN}$ such that $\vert x_i - x_j \vert = 1$ and $(x_i-x_j)\cdot(v_i-v_j) < 0$, we have:
\begin{align}
(x_i-x_j)\cdot(v'_i-v'_j) > 0,
\end{align}
where $V_N' = (v_1',\dots,v_N')$.
\end{defin}
\noindent
A scattering mapping is acting only on velocities of configurations in the phase space.\\
\newline
\noindent
The evolution of the Lebesgue measure of the phase space under the action of the TCT dynamics is described thanks to the Jacobian determinant of the mapping $\mathcal{T}_\tau$, defined in \eqref{EQUATDefin_TCT_Proce}. We can now state the following result, concerning the evolution of the Lebesgue measure along the flow of a general TCT dynamics.

\begin{theor}[TCT volume formula]
\label{THEOREvoluMesur_TCT_Proce}
Let $\tau > 0$ be a positive real number. Let us consider a general TCT dynamics $\mathcal{T}_\tau$, defined as in Definition \ref{DEFIN_TCT_ProceInelaHSwEm} on the time interval $]0,\tau]$ as the composition of the four functions functions $\varphi_1^{(i,j)}$, $\varphi_2^{(i,j)}$, $\widetilde{\varphi}_3^{(i,j)}$ and $\varphi_4^{(i,j)}$, where the three functions $\varphi_1^{(i,j)}$, $\varphi_2^{(i,j)}$ and $\varphi_4^{(i,j)}$ are defined as in Definition \ref{DEFIN_TCT_ElemnFonctInelaHSwEm}, and for any $(i,j) \in \mathcal{P}_N$, $\widetilde{\varphi}_3^{(i,j)}$ is a scattering mapping in the sense of Definition \ref{DEFINScatteringMappi}.\\
Then, for any $Z_N=(X_N,V_N) \in \mathcal{D}_N^{(1)}(\tau)$, the Jacobian determinant of the flow $\mathcal{T}_\tau$ of the TCT dynamics at the point $Z_N$ is given by:
\begin{align}
\label{EQUATTheo1EvoluMesur_TCT_}
\det\left(\text{Jac}(\mathcal{T}_\tau)\right)(X_N,V_N) &= \left[1 + \nabla_{X_N}t_c \cdot \left(V_N-V'_N\right)\right] \det(N),
\end{align}
where $N$ is the Jacobian matrix of the post-collisional velocity $V'_N = (v'_1,\dots,v'_N)$ with respect to the velocity variables $V_N=(v_1,\dots,v_N)$ at the time of collision $t_c(X_N,V_N)$ (defined in \eqref{EQUATDefinTempsColli}). In other words, for any $(i,j) \in \mathcal{P}_N$ and $1 \leq j,l \leq d$:
\begin{align}
N_{(i,j),(k,l)} = \partial_{v_{k,l}} v'_{i,j}(X_N+t_cV_N,V_N),
\end{align}
where $v_{k,l}$ is the $l$-th component of the vector $v_k \in \mathbb{R}^d$, the $k$-th velocity of the initial configuration $Z_N=(X_N,V_N)$, and $v'_{i,j}$ is the $j$-th component of the $i$-th post-collisional velocity.
\end{theor}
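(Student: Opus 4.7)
The plan is to compute the Jacobian of $\mathcal{T}_\tau$ by applying the chain rule to the decomposition $\mathcal{T}_\tau = \varphi_4^{(i,j)} \circ \widetilde{\varphi}_3^{(i,j)} \circ \varphi_2^{(i,j)} \circ \varphi_1^{(i,j)}$, to derive a crucial identity relating $\nabla_{V_N} t_c$ and $\nabla_{X_N} t_c$ via the implicit function theorem, and then to reduce the resulting $2dN \times 2dN$ block matrix to block triangular form via elementary row and column operations before invoking the matrix determinant lemma.

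First, I would fix $(i,j) \in \mathcal{P}_N$ and work on a single elementary domain $\mathcal{D}_N^{(1),(i,j)}(\tau)$, on which the colliding pair is determined and $t_c$ is smooth. Starting from the explicit form
\begin{equation*}
\mathcal{T}_\tau(X_N, V_N) = \bigl(X_N + t_c V_N + (\tau - t_c)\, V'_N,\, V'_N\bigr),
\end{equation*}
with $V'_N = V'_N(X_N + t_c V_N, V_N)$, I would expand the Jacobian as a $2 \times 2$ block matrix of size $dN \times dN$ in the $(X_N, V_N)$ variables. Denoting $a = X_N + t_c V_N$, $A = \partial_a V'_N$, $N = \partial_{V_N} V'_N$, $\alpha_X = \nabla_{X_N} t_c$ and $\alpha_V = \nabla_{V_N} t_c$, each block decomposes as a sum of direct derivatives, terms produced by the variation of $t_c$, and chain-rule terms through $a$.

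The second step is the key identity. Differentiating implicitly the collision equation $g_{i,j}(X_N + t_c V_N) = 0$ with respect to $X_N$ and $V_N$, the two Jacobian contributions $\partial_{X_N} g_{i,j}$ and $\partial_{V_N} g_{i,j}$ evaluated at $X_N + t_c V_N$ differ exactly by a multiplicative factor $t_c$, which gives
\begin{equation*}
\nabla_{V_N} t_c = t_c \, \nabla_{X_N} t_c.
\end{equation*}
This relation is well-defined precisely because the non-grazing condition $\Delta_{i,j} > 0$ imposed in \eqref{EQUATDefinDmTCTDelta_i_j__>_0_} ensures $\partial_t g_{i,j}(X_N + t V_N)|_{t=t_c} \neq 0$. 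This identity is what makes the computation tractable.

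Third, I would exploit this identity through two successive block operations. Subtracting $t_c$ times the first block column from the second block column, every contribution of the form $\alpha_V^T$ is killed against the corresponding $t_c \, \alpha_X^T$ one: the right block column collapses to $\bigl((\tau - t_c) N,\, N\bigr)^T$. Then subtracting $(\tau - t_c)$ times the second block row from the first kills the top-right block and simultaneously simplifies the top-left block to $I + (V_N - V'_N) \alpha_X^T$, since all the terms involving $A$ cancel. The matrix is now block lower triangular, so its determinant factorises as $\det\bigl(I + (V_N - V'_N) \alpha_X^T\bigr) \cdot \det(N)$. The Sylvester-type identity $\det(I + u v^T) = 1 + v^T u$ applied to the upper block then yields $1 + \nabla_{X_N} t_c \cdot (V_N - V'_N)$, and the announced formula follows.

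The main obstacle I expect is purely organisational: correctly tracking all the chain-rule contributions, in particular the implicit dependence of $V'_N$ on $X_N$ through the scattering point $a = X_N + t_c V_N$, and verifying that $\nabla_{V_N} t_c = t_c \, \nabla_{X_N} t_c$ is exactly the identity needed to make all the ``cross terms'' cancel so that block triangulation becomes possible. Once this bookkeeping is done, the final determinant computation is a one-line application of the matrix determinant lemma.
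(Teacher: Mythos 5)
Your proposal is correct and follows essentially the same route as the paper's proof: the same key identity $\nabla_{V_N}t_c = t_c\,\nabla_{X_N}t_c$ obtained by implicit differentiation of $g_{i,j}(X_N+t_cV_N)=0$ under the non-grazing condition, the same two block column/row operations (subtract $t_c$ times the first block column, then $(\tau-t_c)$ times the second block row), and the same Sylvester-type determinant identity to conclude. The only cosmetic difference is that you expand the Jacobian directly from the explicit composite expression \eqref{EQUATExpreExpli_Flot} rather than multiplying the four elementary Jacobian matrices $M_4 M_3 M_2 M_1$ as the paper does; both yield the identical block matrix \eqref{EQUATExpreMatriJacob_TCT_}.
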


\begin{proof}
We consider an initial configuration $Z_N = (X_N,V_N)$ belonging to the domain $\mathcal{D}_N^{(1)}(\tau)$ of the TCT dynamics. Let us start with computing the respective Jacobian matrices of the four elementary mappings $\varphi_i$, defined respectively in \eqref{EQUATDefinFonctVPhi1}, \eqref{EQUATDefinFonctVPhi2}, \eqref{EQUATDefinFonctVPhi3} and \eqref{EQUATDefinFonctVPhi4}. We will denote by $M_i$ the Jacobian matrix of $\varphi_i$.\\
We find:
\begin{align}
M_1(X_N,V_N) = \text{Jac}\left(\varphi_1\right)(X_N,V_N) = \begin{pmatrix}
I_{dN} & 0 \\
0 & I_{dN} \\
^t\hspace{-1mm}\left(\nabla_{X_N} t_c\right) & ^t\hspace{-1mm}\left(\nabla_{V_N} t_c\right)
\end{pmatrix} \in \mathcal{M}_{(2dN+1) \times 2dN}(\mathbb{R}),
\end{align}
where $I_{dN}$ denotes the $dN \times dN$ identity matrix, $\nabla_{X_N} t_c$ and $\nabla_{V_N} t_c$ the gradients of the collision time $t_c = t_c(X_N,V_N)$ (defined in \eqref{EQUATDefinTempsColli}), respectively, with respect to the position and the velocity variables, that is, we have:
\begin{align}
^t\hspace{-1mm}\left(\nabla_X t_c\right) = \begin{pmatrix} \partial_{x_{1,1}} t_c & \dots & \partial_{x_{1,d}} t_c & \dots & \partial_{x_{i,1}} t_c & \dots & \partial_{x_{i,d}} t_c & \dots \end{pmatrix} \in \mathcal{M}_{1 \times dN}(\mathbb{R}).
\end{align}
The Jacobian matrix of $\varphi_2$ is:
\begin{align}
M_2(X_N,V_N,t) = \text{Jac}(X_N,V_N,t)\left(\varphi_2\right) = \begin{pmatrix}
I_{dN} & t I_{dN} & V_N \\
0 & I_{dN} & 0 \\
0 & 0 & 1
\end{pmatrix} \in \mathcal{M}_{(2dN+1)\times(2dN+1)}(\mathbb{R}).
\end{align}
Finally, the respective Jacobian matrices $M_3$ and $M_4$ of $\varphi_3$ and $\varphi_4$ are:
\begin{align}
M_3(X_N,V_N,t) = \text{Jac}(\varphi_3)(X_N,V_N,t) = \begin{pmatrix}
I_{dN} & 0 & 0 \\
N_1(X_N,V_N) & N_2(X_N,V_N) & 0 \\
0 & 0 & 1
\end{pmatrix} \in \mathcal{M}_{(2dN+1)\times(2dN+1)}(\mathbb{R}),
\end{align}
and
\begin{align}
M_4(X_N,V_N,t) = \text{Jac}(\varphi_4)(X_N,V_N,t) = \begin{pmatrix}
I_{dN} & (\tau-t) I_{dN} & -V_N \\
0 & I_{dN} & 0
\end{pmatrix}  \in \mathcal{M}_{2dN\times(2dN+1)},
\end{align}
with $N_1$, $N_2$ the respective Jacobian matrices of the post-collisional velocities $V_N' = V_N'(X_N,V_N)$, respectively, with respect to the position and velocity variables. In other words, we have, for $N_2$:
\begin{align}
N_2 = \begin{pmatrix} \partial_{V_q} \left(V'_N(X_N,V_N)\right)_p \end{pmatrix}_{p,q},
\end{align}
with $p,q \in dN$, both indices describing which particle ($1 \leq i \leq N$) and which coordinate ($1 \leq k \leq d$) are considered (for example, with $q = d(i-1) + k$, we have $\partial_{V_q} = \partial_{v_{i,k}}$).\\
Therefore, we find for the Jacobian matrix of the TCT dynamics that is defined in \eqref{EQUATDefin_TCT_Proce}:
\begin{align}
\label{EQUATExpreMatriJacob_TCT_}
\text{Jac}(\mathcal{T}_\tau)(X_N,V_N) &= M_4(X_N+t_cV_N,V'_N,t_c) \cdot M_3(X_N+t_cV_N,V_N,t_c) \cdot M_2(X_N,V_N,t_c) \cdot M_1(X_N,V_N) \nonumber\\
&= \begin{pmatrix} P_1 & P_2 \\
N_1 + N_1 \cdot(\nabla_{X_N}t_c \otimes V_N) & t_c N_1 + N_1 \cdot (\nabla_{V_N} t_c \otimes V_N) + N_2
\end{pmatrix}
\end{align}
with
\begin{align}
P_1 = I_{dN}+\nabla_{X_N}t_c \otimes \left(V_N-V'_N\right) + (\tau-t_c) N_1 + (\tau-t_c) N_1 \cdot (\nabla_{X_N}t_c\otimes V_N),
\end{align}
and
\begin{align}
P_2 = t_c I_{dN} + \nabla_{V_N}t_c\otimes(V_N-V'_N) + t_c(\tau-t_c) N_1 + (\tau-t_c)N_1 \cdot (\nabla_{V_N}t_c\otimes V_N) + (\tau-t_c)N_2,
\end{align}
the matrices $N_1$ and $N_2$ being evaluated at $(X_N+t_cV_N,V_N)$.\\
In order to be able to compute the determinant of the matrix $\text{Jac}(\mathcal{T}_\tau)(X_N,V_N)$, we need to be able to compare the gradients $\nabla_{X_N}t_c$ and $\nabla_{V_N}t_c$. Fixing an index $1 \leq k \leq N$, corresponding to the label of one particle, and fixing one coordinate $1 \leq l \leq d$, we find on the one hand:
\begin{align}
0 = \partial_{x_{k,l}} \left[(X_N,V_N) \mapsto g_{i,j}(X_N+t_cV_N)\right] = \partial_{w_{k,l}} g_{i,j}(X_N+t_cV_N) + \partial_{x_{k,l}}t_c \left[ V_N \cdot \nabla g_{i,j}(X_N+t_cV_N) \right],
\end{align}
and on the other hand:
\begin{align}
0 = \partial_{v_{k,l}} \left[(X_N,V_N) \mapsto g_{i,j}(X_N+t_cV_N)\right] = \partial_{v_{k,l}}t_c \left[ V_N \cdot \nabla g_{i,j}(X_N+t_cV_N) \right] + t_c \partial_{w_{k,l}} g_{i,j}(X_N+t_cV_N).
\end{align}
We observe now that $V_N \cdot \nabla g_{i,j}(X_N+t_cV_N) \neq 0$. Therefore, we obtain the key equation:
\begin{align}
\partial_{v_{k,l}}t_c = - t_c \frac{\partial_{w_{k,l}}g_{i,j}(X_N+t_cV_N)}{V_N \cdot \nabla g_{i,j}(X_N+t_cV_N)} = t_c \left[ \partial_{x_{k,l}}t_c \right],
\end{align}
which can be rewritten as:
\begin{align}
\label{EQUATLien_entreNablXNablV_t_c_}
\nabla_{V_N}t_c = t_c \left[ \nabla_{X_N}t_c\right].
\end{align}
The relation \eqref{EQUATLien_entreNablXNablV_t_c_} implies substantial simplifications when computing the Jacobian determinant of the matrix \eqref{EQUATExpreMatriJacob_TCT_}. Indeed, using the multilinearity and the alternating property of the determinant, we find first, by substracting $t_c$ times the first $dN$ columns to the $dN$ last ones:
\begin{align}
\det\left(\text{Jac}(\mathcal{T}_\tau)\right)(X_N,V_N) = \begin{vmatrix}
P_1 & (\tau-t_c)N_2 \\
N_1 + N_1 \cdot \left( \nabla_{X_N}t_c \otimes V_N \right) & N_2
\end{vmatrix}.
\end{align}
Then, substracting $(\tau-t_c)$ times the last $dN$ rows to the first $dN$ ones, we obtain:
\begin{align}
\det\left(\text{Jac}(\mathcal{T}_\tau)\right)(X_N,V_N) =
\begin{vmatrix}
I_{dN} + \nabla_{X_N} t_c \otimes (V_N-V'_N) & 0 \\
N_1 + N_1 \cdot \left( \nabla_{X_N}t_c \otimes V_N \right) & N_2
\end{vmatrix}.
\end{align}
We can now develop the determinant, providing:
\begin{align}
\det\left(\text{Jac}(\mathcal{T}_\tau)\right)(X_N,V_N) &= \det\left( I_{dN} + \nabla_{X_N} t_c \otimes (V_N-V'_N) \right) \det(N_2) \nonumber\\
&= \left[1 + \nabla_{X_N}t_c \cdot \left(V_N-V'_N\right)\right] \det(N_2),
\end{align}
and the proof of Theorem \ref{THEOREvoluMesur_TCT_Proce} is complete.
\end{proof}

\begin{remar}
Let us observe that the matrix $N_1$ (the Jacobian matrix of the post-collisional velocity $V'_N$ with respect to the position variables) plays no role in the final result.
\end{remar}

\noindent
We finally observe that it would have been possible to consider even more general TCT dynamics, assuming only that the collision time $t_c$ is given by an equation of the form $g(X_N+t_cV_N) = 0$ (the condition that defines the time of collision), with a general collision time function $g$ that satisfies only:
\begin{align}
\label{EQUATCondi_TCT_GenerNnGrz}
V_N \cdot \nabla g(X_N+t_cV_N) \neq 0.
\end{align}
\noindent
In the case of the TCT dynamics introduced in Definition \ref{DEFIN_TCT_ProceInelaHSwEm}, the function $g$ is defined as follows:
\begin{align}
g: \mathcal{D}^{(1)}_N(\tau) \rightarrow \mathbb{R},\hspace{3mm} Z_N \mapsto g_{i,j}(Z_N) \hspace{1mm} \text{if} \hspace{1mm} Z_N \in \mathcal{D}_N^{(1),(i,j)},
\end{align}
where the domain $\mathcal{D}^{(1)}_N(\tau)$ is defined in \eqref{EQUATDefinDomai_TCT_}. In this case, the condition \eqref{EQUATCondi_TCT_GenerNnGrz} is fulfilled by construction, since it corresponds to \eqref{EQUATHypotColli_Non_Rasan}, which holds true on the TCT domain $\mathcal{D}^{(1)}_N(\tau)$.\\
\newline
The condition $V_N \cdot \nabla g(X_N+t_cV_N) \neq 0$ generalizes the condition \eqref{EQUATHypotColli_Non_Rasan}, prescribing that the collision is not grazing.\\
In particular, in the case of a single particle colliding against an obstacle $\Omega = \{ x \in\mathbb{R}^d\ /\ g(x) < 0 \}$ (so that the boundary of the obstacle is given by the level set $\{x \in \mathbb{R}^d \ /\ g(x) = 0\}$ of the collision time function $g$), the condition that a collision takes place can be written as $g(x+t_cv)=0$, and the condition $v \cdot \nabla g(x+t_cv) \neq 0$ means that the velocity $v$ of the particle is not in the tangent plane to the obstacle at the point of collision $x+tv$.\\
The description in terms of an obstacle is also consistent with the approach consisting of seeing a system of $N$ hard spheres evolving in $\mathbb{R}^d$ as a single particle evolving in dimension $dN$ as in a billiard, the boundary of such a billiard being the union of the cylinders $\vert x_i-x_j \vert = 1$, corresponding to the condition of non-overlapping of the particles.

\section{Applications of the TCT volume formula \eqref{EQUATTheo1EvoluMesur_TCT_}}
\label{SECTI__4__}

In this section we will apply the general result of Theorem \ref{THEOREvoluMesur_TCT_Proce} to particular cases. The classical proof of Alexander (\cite{Alex975}) exhibits that, in order to show the well-posedness of the classical, elastic hard sphere particle system, it is enough to show that the volume is preserved in the phase space, under the action of a single TCT dynamics (as introduced in Definition \ref{DEFIN_TCT_ProceInelaHSwEm}) on the time interval $]0,\tau]$, where $\tau$ is any positive time, fixed a priori.\\
In a first time, we will apply \eqref{EQUATTheo1EvoluMesur_TCT_} to the elastic hard sphere system, recovering the well-known result that the volume is conserved in the phase space for such systems. In a second time, we will turn to the IHSE model, which is the central focus of the present article.

\subsection{The TCT volume formula, in the case of elastic hard spheres}

Concerning the classical elastic hard sphere system, as a consequence of Theorem \ref{THEOREvoluMesur_TCT_Proce}, we recover the following classical result.

\begin{theor}[TCT volume formula, elastic hard sphere case]
\label{THEOREvoluMesurTCTHSProce}
Let us consider the TCT dynamics corresponding to the elastic hard sphere system.\\
In other words, the TCT dynamics is defined as in Definition \ref{DEFIN_TCT_ProceInelaHSwEm}, except that the definition of the scattering mapping $\varphi_3^{(i,j)}$ is modified such that only elastic collisions (where $v_i'$ and $v_j'$ are given by \eqref{EQUATSect1LoideColliElast}) take place (the definition of the three other mappings $\varphi_1^{(i,j)}$, $\varphi_2^{(i,j)}$ and $\varphi_4^{(i,j)}$ are unchanged from Definition \ref{DEFIN_TCT_ElemnFonctInelaHSwEm}). Its associated domain $\mathcal{D}^{(1)}_{N,\text{HS}}(\tau)$ is defined as in Definition \ref{DEFINDomaine_du_TCT_Proce}, except that the assumption \eqref{EQUATDefinDmTCTEnergCinetNnCrt} is omitted. $g_{i,j}$ is defined by \eqref{EQUATDefinFonctTempsColli} for any $(i,j) \in \mathcal{P}_N$.\\
Then, for any $Z_N=(X_N,V_N) \in \mathcal{D}^{(1)}_{N,\text{HS}}(\tau)$, and for any neighbourhood of $Z_N$ in $\mathcal{D}^{(1)}_{N,\text{HS}}(\tau)$, the TCT dynamics of the elastic hard sphere system preserves locally the Lebesgue measure in the phase space. In other words, we have:
\begin{align}
\label{EQUATTheo2EvoluMesurElaHS}
\left\vert \det\left(\text{Jac}(\mathcal{T}_\tau)\right)(X_N,V_N) \right\vert &= 1.
\end{align}
\end{theor}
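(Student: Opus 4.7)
The plan is to apply Theorem \ref{THEOREvoluMesur_TCT_Proce} directly, since the elastic hard sphere TCT dynamics is a special case of the general TCT dynamics considered there (the elastic collision law \eqref{EQUATSect1LoideColliElast} clearly defines a scattering mapping in the sense of Definition \ref{DEFINScatteringMappi}). It then suffices to show that each of the two factors in \eqref{EQUATTheo1EvoluMesur_TCT_}, namely $1 + \nabla_{X_N}t_c \cdot (V_N - V'_N)$ and $\det N_2$, equals $-1$, so that their product has absolute value $1$.

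To compute the bracketed factor, I first make $\nabla_{X_N}t_c$ explicit. Since $g_{i,j}(W_N) = |w_i - w_j|^2 - 1$, only the blocks corresponding to the colliding pair are nonzero. Writing $\omega$ for the unit vector along the line of centers at collision time (so that $y_i - y_j = -\omega$ with $y_k = x_k + t_c v_k$), implicit differentiation of $g_{i,j}(X_N + t_c V_N) = 0$ together with $V_N \cdot \nabla g_{i,j} = -2(v_i - v_j)\cdot\omega$ yields
\begin{align}
\nabla_{x_i} t_c = -\frac{\omega}{(v_i - v_j)\cdot\omega}, \qquad \nabla_{x_j} t_c = -\nabla_{x_i} t_c.
\end{align}
The elastic law \eqref{EQUATSect1LoideColliElast} gives $v_i - v'_i = [(v_i - v_j)\cdot\omega]\,\omega$ and $v_j - v'_j = -[(v_i - v_j)\cdot\omega]\,\omega$, so both contributions to $\nabla_{X_N}t_c \cdot (V_N - V'_N)$ reduce to $-\omega\cdot\omega = -1$, and the total is $-2$. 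Hence $1 + \nabla_{X_N}t_c \cdot (V_N - V'_N) = -1$.

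For $\det N_2$, all blocks indexed by particles $k \neq i,j$ are identity blocks, so only the $2d \times 2d$ block acting on $(v_i, v_j)$ matters. This block has the structure $\begin{pmatrix} I_d - P & P \\ P & I_d - P \end{pmatrix}$ with $P = \omega\omega^T$. The cleanest way to see its determinant is to change variables to centre-of-mass $V = (v_i + v_j)/2$ and relative velocity $w = v_i - v_j$: then \eqref{EQUATSect1LoideColliElast} acts as $V \mapsto V$ and $w \mapsto w - 2(w\cdot\omega)\omega$, which is the identity on the $V$-factor and the Householder reflection $I_d - 2\omega\omega^T$ on the $w$-factor. Since the determinant is coordinate-invariant, $\det N_2 = 1 \cdot (-1) = -1$.

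Combining the two factors gives $\det(\text{Jac}(\mathcal{T}_\tau)) = (-1)\cdot(-1) = 1$, from which \eqref{EQUATTheo2EvoluMesurElaHS} follows. The main (and only mild) obstacle is sign bookkeeping: one must track the sign convention for $\omega$ at the moment of collision and correctly identify $V_N \cdot \nabla g_{i,j}$ with $-2(v_i-v_j)\cdot\omega$, since a sign error anywhere flips the bracketed factor from $-1$ to $+3$ and breaks the argument. Everything else is a routine calculation.
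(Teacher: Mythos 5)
Your proposal is correct and follows essentially the same route as the paper: apply the TCT volume formula \eqref{EQUATTheo1EvoluMesur_TCT_}, compute $\nabla_{X_N}t_c\cdot(V_N-V'_N)=-2$ from the explicit gradients of the collision time, and evaluate the velocity Jacobian. The only (harmless) difference is that you compute $\det N_2=-1$ explicitly via the centre-of-mass/relative-velocity change of variables and the Householder reflection, whereas the paper simply observes that the elastic scattering is a linear involution, hence $\det N=\pm 1$, which already suffices for the absolute value.
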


\begin{proof}
Let us recall that we denote the unit vector $(x_i+t_c v_i) - (x_j+t_c v_j) = (x_i-x_j) + t_c(v_i-v_j)$ by $\omega$.\\
The result is obtained by applying the formula \eqref{EQUATTheo1EvoluMesur_TCT_}. More precisely, relying on the explicit expression of the collision time function $g_{i,j}$ we find for $\nabla_{X_N} t_c$:
\begin{align}
\nabla_{x_i} t_c = - \frac{2 \omega}{V_N\cdot\nabla g\left(X_N+t_cV_N\right)} = - \frac{\omega}{(v_i-v_j)\cdot \omega} \hspace{5mm} \text{and} \hspace{5mm} \nabla_{x_j} t_c = \frac{\omega}{(v_i-v_j)\cdot\omega} \cdotp
\end{align}
Therefore, we find:
\begin{align}
\label{EQUATTCTHSPrdScNbTVV}
\nabla_{X_N}t_c \cdot(V_N-V'_N) &= \nabla_{x_i}t_c\cdot(v_i-v'_i) + \nabla_{x_j}t_c\cdot(v_j-v'_j) \nonumber\\
&= - \frac{\omega}{(v_i-v_j) \cdot \omega}\cdot\left[(v_i-v_j)\cdot\omega \omega\right] + \frac{\omega}{(v_i-v_j)\cdot\omega} \cdot \left[ -(v_i-v_j)\cdot\omega \omega\right] = -2.
\end{align}
On the other hand, the mapping that associates $(v_i',v_j')$ is linear in $(v_i,v_j)$, therefore we have:
\begin{align}
V_N' = N V_N,
\end{align}
where the matrix $N$ is the Jacobian matrix of the post-collisional velocity $V'_N$ with respect to the variable $V_N$, which appears in the formula \eqref{EQUATTheo1EvoluMesur_TCT_}. Since at $X_N$ fixed the mapping $V_N \mapsto V'_N$ is an involution, we have:
\begin{align}
\label{EQUATTCTHSDeterMatri__N__}
\det(N) = \pm 1.
\end{align}
Inserting \eqref{EQUATTCTHSPrdScNbTVV} and \eqref{EQUATTCTHSDeterMatri__N__} in the formula \eqref{EQUATTheo1EvoluMesur_TCT_}, we deduce the result of Theorem \ref{THEOREvoluMesurTCTHSProce}.
\end{proof}

\subsection{The TCT volume formula, in the case of inelastic hard spheres with emission}

We now turn to the main model of the present article: the inelastic hard sphere with emission system. We will see that the application of formula \eqref{EQUATTheo1EvoluMesur_TCT_} to this model has important consequences.\\
Before performing such a computation, we emphasize that the flow we introduced is not injective. Therefore, using the word ``flow'' is slightly abusive, but the dynamics of the particles being deterministic (assuming that it is well-defined), considering the evolution of the measure under the action of such a particle dynamics remains meaningful. The lack of injectivity is due to the following phenomenon: if an inelastic collision takes place between two particles, and if the energy of these two particles involved in such a collision is smaller than $\varepsilon_0$ \emph{immediately after} the collision, the configuration of the system can also be reached from an elastic collision between the same pair of particles.\\
However, if we consider the flow of the particle system restricted to the subsets of the phase space for which this flow is injective, it is possible to estimate the evolution of the Lebesgue measure. In order to have a locally injective flow, one can consider for example, for a given initial configuration of the particle system, a neighbourhood of this configuration that is small enough to ensure that all the other configurations of this neighbourhood lead to similar trajectories, in the sense that the number of the consecutive collisions, and their respective types (elastic or inelastic, and involving the same pairs of particles) are the same.\\
The evolution of the Lebesgue measure under the action of the IHSE flow will be deduced from the TCT volume formula \eqref{EQUATTheo1EvoluMesur_TCT_}, applied to the IHSE model. We start with rewriting in terms of the velocities of the colliding particles the term $1 + \nabla_{X_N}t_c \cdot (V_N-V'_N)$ in the right hand side of \eqref{EQUATTheo1EvoluMesur_TCT_}.

\begin{propo}
\label{PROPOPrdSctcVV'TCTProceIn}
Let us consider the TCT dynamics corresponding to the IHSE system, introduced in Definition \ref{DEFIN_TCT_ProceInelaHSwEm}, associated to the open domain $\mathcal{D}^{(1)}_N(\tau)$, defined as in Definition \ref{DEFINDomaine_du_TCT_Proce}, with $g_{i,j}$ being defined by \eqref{EQUATDefinFonctTempsColli}.\\
Then, for any $Z_N=(X_N,V_N) \in \mathcal{D}^{(1)}_N(\tau)$ such that
\begin{align}
\label{EQUATHypotPrdSctcVV'EnrgC}
\vert v_i - v_j \vert^2 > 4\varepsilon_0,
\end{align}
and for any neighbourhood of $Z_N$ in $\mathcal{D}^{(1)}_N(\tau)$ small enough, we have:
\begin{align}
\label{EQUATPrdSctcVV'TCTProceIn}
1 + \nabla_{X_N}t_c \cdot (V_N-V'_N) = - \sqrt{1 - \frac{4\varepsilon_0}{\vert v_i - v_j \vert^2}}.
\end{align}
\end{propo}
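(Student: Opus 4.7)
The plan is to plug the IHSE inelastic collision law directly into the general identity already derived in the proof of Theorem~\ref{THEOREvoluMesurTCTHSProce}. Since the first two mappings $\varphi_1^{(i,j)}$ and $\varphi_2^{(i,j)}$, as well as the collision time function $g_{i,j}$, are \emph{identical} to those used in the elastic case, the computation of $\nabla_{X_N} t_c$ is unchanged: by the implicit function theorem applied to $g_{i,j}(X_N + t_c V_N) = 0$, and using $\nabla_{w_i} g_{i,j} = -\nabla_{w_j} g_{i,j} = 2(w_i - w_j)$, one obtains, with $\omega$ denoting the unit vector $(x_i + t_c v_i) - (x_j + t_c v_j)$ at the collision time,
\begin{align}
\nabla_{x_i} t_c = - \frac{\omega}{(v_i - v_j) \cdot \omega}, \qquad \nabla_{x_j} t_c = \frac{\omega}{(v_i - v_j) \cdot \omega}, \qquad \nabla_{x_k} t_c = 0 \text{ for } k \neq i,j.
\end{align}

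The first step I would then carry out is the purely algebraic manipulation that already appeared in the elastic proof: grouping the two non-trivial gradients gives
\begin{align}
\nabla_{X_N} t_c \cdot (V_N - V'_N) = - \frac{\omega \cdot \bigl[(v_i - v_j) - (v'_i - v'_j)\bigr]}{(v_i - v_j) \cdot \omega} = -1 + \frac{\omega \cdot (v'_i - v'_j)}{(v_i - v_j) \cdot \omega},
\end{align}
so that the quantity to compute reduces to the scalar $\omega \cdot (v'_i - v'_j) / (v_i - v_j) \cdot \omega$. Note that the hypothesis $|v_i - v_j|^2 > 4\varepsilon_0$, together with the openness of $\mathcal{D}^{(1)}_N(\tau)$ and the continuity of the velocity difference, ensures that on a sufficiently small neighbourhood of $Z_N$ the collision is inelastic, so that the expression of $V'_N$ via \eqref{EQUATSect1LoideCollision_} is valid throughout the neighbourhood.

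The second step is to insert the IHSE collision law. From \eqref{EQUATSect1LoideCollision_} I read off
\begin{align}
v'_i - v'_j = -\sigma \sqrt{|v_i - v_j|^2 - 4\varepsilon_0},
\end{align}
which reduces the problem to computing $\omega \cdot \sigma$. Here I would exploit the geometric meaning of $\sigma$: by \eqref{EQUATSect1DefinSigma}, $\sigma$ is the reflection of the unit vector $\hat u := (v_j - v_i)/|v_j - v_i|$ across the hyperplane orthogonal to the contact direction (the formula is invariant under the sign of that direction, so it is immaterial whether one uses $\omega$ or $-\omega$). Reflection flips the normal component, hence
\begin{align}
\omega \cdot \sigma = -\omega \cdot \hat u = \frac{\omega \cdot (v_i - v_j)}{|v_i - v_j|}.
\end{align}
Substituting back yields
\begin{align}
\frac{\omega \cdot (v'_i - v'_j)}{(v_i - v_j) \cdot \omega} = - \frac{\sqrt{|v_i - v_j|^2 - 4\varepsilon_0}}{|v_i - v_j|} = - \sqrt{1 - \frac{4\varepsilon_0}{|v_i - v_j|^2}},
\end{align}
which combined with the reduction above is exactly \eqref{EQUATPrdSctcVV'TCTProceIn}.

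There is essentially no technical obstacle: every ingredient is already in place from the elastic computation and from the definition of the IHSE collision law. The only point that requires some care is the sign bookkeeping at the reflection step, to verify that the choice of direction for $\omega$ in the formula for $\nabla_{X_N} t_c$ (from $j$ to $i$ at the collision point) is consistent with the convention for $\omega$ in \eqref{EQUATSect1DefinSigma} (from $i$ to $j$); as noted, the reflection formula for $\sigma$ is invariant under this choice, so no sign issue ultimately appears. The appearance of the negative sign in front of the square root, which is what breaks measure preservation of the flow and reflects the energy dissipation, is exactly the geometric manifestation of the fact that the IHSE scattering sends pre-collisional relative velocities to post-collisional ones and shrinks the relative speed from $|v_i - v_j|$ to $\sqrt{|v_i - v_j|^2 - 4\varepsilon_0}$.
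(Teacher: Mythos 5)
Your proposal is correct and follows essentially the same route as the paper's proof: it reuses the expression for $\nabla_{X_N}t_c$ from the elastic computation, inserts the inelastic collision law \eqref{EQUATSect1LoideCollision_}, and reduces everything to the evaluation of $\sigma\cdot\omega = (v_i-v_j)\cdot\omega/\vert v_i-v_j\vert$; your intermediate regrouping into $-1+\omega\cdot(v'_i-v'_j)/\big((v_i-v_j)\cdot\omega\big)$ is only a cosmetic reorganization of the paper's term-by-term computation. The sign bookkeeping for $\omega$ and the remark that the inelastic branch of the collision law persists on a small neighbourhood are both handled correctly.
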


\begin{proof}
The computation of the scalar product $\nabla_{X_N}t_c \cdot(V_N-V'_N)$ is very similar to the elastic hard sphere case addressed in Theorem \ref{THEOREvoluMesurTCTHSProce}. In particular, since the function $g_{i,j}$, defining the collision time $t_c$, is the same, the only change lies in the difference $V_N-V'_N$. Relying on \eqref{EQUATSect1LoideCollision_} and \eqref{EQUATSect1DefinSigma}, for the respective definitions of $(v'_i,v'_j)$ and $\sigma$, we find:
\begin{align}
\nabla_{X_N}t_c \cdot (V_N-V'_N) &= \nabla_{x_i}t_c \cdot (v_i-v'_i) + \nabla_{x_j}t_c \cdot (v_j-v'_j) \nonumber\\
&= 2 \left[ -\frac{(v_i-v_j)\cdot\omega}{2(v_i-v_j)\cdot\omega} - \frac{\sigma \cdot \omega}{(v_i-v_j)\cdot\omega} \sqrt{\frac{\vert v_i-v_j \vert^2}{4} - \varepsilon_0} \right] \nonumber\\
&= -1 - \frac{2(v_i-v_j)\cdot\omega}{\vert v_i-v_j \vert (v_i-v_j)\cdot\omega} \sqrt{\frac{\vert v_i-v_j \vert^2}{4} - \varepsilon_0} \nonumber\\
&= -1 - \sqrt{1 - \frac{4\varepsilon_0}{\vert v_i-v_j \vert^2}},
\end{align}
and the result of Proposition \ref{PROPOPrdSctcVV'TCTProceIn} follows.
\end{proof}


\noindent
Let us now turn to the second term of the formula \eqref{EQUATTheo1EvoluMesur_TCT_}, that is, the Jacobian determinant of the matrix $N$. Here, we have to study the scattering mapping, defined by \eqref{EQUATSect1LoideCollision_} in case of inelastic collisions. Such a mapping acts only on the velocity variable $V_N$ in the phase space, and formula \eqref{EQUATTheo1EvoluMesur_TCT_} implies in particular that only the partial derivatives of the scattering mapping with respect to the velocity variables matter.\\
Contrary to the elastic hard spheres, the scattering for the IHSE is not given by a linear mapping in $V_N$. However, we will prove a surprising result concerning such a scattering: although a positive amount of kinetic energy is lost in any collision that is energetic enough, such a scattering will preserve the measure in the phase space.\\
We will make use of the following notation throughout the rest of this section.

\begin{notat}[Tensor product of two vectors]
For two vectors $u$ and $v$ that belong to $\mathbb{R}^d$, we denote by $u\otimes v$ the matrix of $\mathcal{M}_{d\times d}(\mathbb{R})$ associated to the following linear mapping:
\begin{align}
\label{EQUATSect1NotatProduTenso}
\mathbb{R}^d \ni x \mapsto (v\cdot x)u,
\end{align}
where $(v\cdot x)$ denotes the scalar product of the two vectors $v$ and $x$.
\end{notat}
\noindent
We study now in detail the scattering \eqref{EQUATSect1LoideCollision_}. In particular, we have the following result.

\begin{theor}[Measure-preserving property of the scattering of the inelastic hard sphere with emission model]
\label{THEORSect2ConservatiMesur}
In dimension $d=2$, the scattering mapping of the inelastic hard sphere with emission model (\eqref{EQUATSect1LoideCollision_} if $\vert v_i - v_j \vert^2 > 4\varepsilon_0$, \eqref{EQUATSect1LoideColliElast} else) preserves locally the measure in the phase space.\\
In other words, at $t$ and $X_N$ fixed, the absolute value of the Jacobian determinant of the scattering $V_N \mapsto V'_N = V'_N(X_N,V_N)$, associated to the inelastic hard spheres with emission, is equal to $1$.
\end{theor}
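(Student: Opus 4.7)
My plan is to reduce the problem to a map on the relative velocity alone and then observe a cancellation that is special to dimension $d=2$. The elastic branch \eqref{EQUATSect1LoideColliElast} is an involution and so has Jacobian determinant of modulus $1$ automatically; the work therefore concerns the inelastic branch \eqref{EQUATSect1LoideCollision_}, which is what I treat below.

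The first step is to pass from $(v_i,v_j)$ to the center-of-mass coordinates $V = (v_i+v_j)/2$ and the relative velocity $w = v_j - v_i$. Since this is a linear change of variables with block-Jacobian $\bigl(\begin{smallmatrix} \tfrac12 I_d & \tfrac12 I_d \\ -I_d & I_d\end{smallmatrix}\bigr)$, its determinant equals $1$ in absolute value, and the same holds for the analogous change $(v_i',v_j') \leftrightarrow (V',w')$. Because $v_i' + v_j' = v_i + v_j$, we have $V' = V$, and the inelastic law \eqref{EQUATSect1LoideCollision_} reduces to the map
\begin{align*}
w \;\longmapsto\; w' \;=\; \sigma(w)\,\sqrt{|w|^2 - 4\varepsilon_0},
\qquad \sigma(w) = \frac{w}{|w|} - 2\Bigl(\frac{w}{|w|}\cdot\omega\Bigr)\omega.
\end{align*}
Since $\sigma(w) = R_\omega(w)/|w|$, where $R_\omega(u) := u - 2(u\cdot\omega)\omega$ is the orthogonal reflection across $\omega^\perp$ (and in particular $|R_\omega(w)| = |w|$), I can factor this map as the composition $w \mapsto R_\omega(w) \mapsto f(|R_\omega(w)|)\,R_\omega(w)$, where $f(r) := \sqrt{r^2 - 4\varepsilon_0}/r$. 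The first factor is a linear isometry with Jacobian determinant $\pm 1$.

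The whole question thus boils down to computing the Jacobian determinant of the radial rescaling $g: u \mapsto f(|u|)u$ on $\mathbb{R}^d$. A direct differentiation gives
\begin{align*}
\operatorname{Jac}(g)(u) \;=\; f(|u|)\,I_d + \frac{f'(|u|)}{|u|}\,u \otimes u.
\end{align*}
In the orthonormal basis formed by $u/|u|$ together with any orthonormal frame of $u^\perp$, this matrix is diagonal with one eigenvalue $f(|u|) + |u|\,f'(|u|)$ (in the radial direction) and $d-1$ eigenvalues $f(|u|)$ (in the tangential directions). Differentiating the identity $r f(r) = \sqrt{r^2 - 4\varepsilon_0}$ yields $f(r) + r f'(r) = r/\sqrt{r^2 - 4\varepsilon_0} = 1/f(r)$, so
\begin{align*}
\det\bigl(\operatorname{Jac}(g)(u)\bigr) \;=\; f(|u|)^{d-1} \cdot \frac{1}{f(|u|)} \;=\; f(|u|)^{d-2}.
\end{align*}
In dimension $d=2$ the exponent vanishes, yielding $\det = 1$ irrespective of $|u|$ and $\varepsilon_0$; combined with the $\pm 1$ determinants of the reflection $R_\omega$ and of the two changes of variables, this gives $|\det(\partial V_N'/\partial V_N)| = 1$ and proves Theorem~\ref{THEORSect2ConservatiMesur}.

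There is no serious obstacle: the only non-routine point is recognizing the factorization ``reflection $\circ$ radial rescaling'' on the relative velocity, after which the miracle is simply that the tangential contraction by $f$ and the radial dilation by $1/f$ cancel exactly when the sphere has codimension $1$ in $\mathbb{R}^d$ with $d=2$, i.e.~when there is a single tangential direction. This also makes transparent why the statement fails in higher dimension: for $d \geq 3$ one has $f(|u|)^{d-2} < 1$, so the scattering is strictly contracting in the velocity measure, consistent with the kinetic energy loss $\varepsilon_0 > 0$.
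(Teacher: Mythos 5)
Your proof is correct, and it takes a genuinely different route from the paper's. The paper proves Theorem \ref{THEORSect2ConservatiMesur} by brute force: it differentiates $\sigma$ and $\kappa$ componentwise, assembles the full $2d\times 2d$ Jacobian into the block form $\bigl(\begin{smallmatrix}\frac12 I_d+A & \frac12 I_d-A\\ \frac12 I_d-A & \frac12 I_d+A\end{smallmatrix}\bigr)$, reduces its determinant to $\det(2A)$ by row operations, and then evaluates $\det(2A)$ via an ad hoc two-dimensional lemma on determinants of the form $\det(I_2+\lambda u\otimes u+\mu u\otimes\omega+\nu\omega\otimes\omega)$, finishing with a trigonometric identity. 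You instead pass to center-of-mass and relative-velocity coordinates (a unimodular linear change, with $V'=V$ making the Jacobian block-triangular), factor the relative-velocity map as a reflection composed with the radial rescaling $u\mapsto f(|u|)u$ with $f(r)=\sqrt{r^2-4\varepsilon_0}/r$, and read off the determinant $f^{d-1}\cdot(f+rf')=f^{d-2}$ from the eigenvalues. This is shorter, avoids the tensor-product lemma entirely, explains the role of $d=2$ structurally (one tangential contraction by $f$ cancels one radial dilation by $1/f$), and gives the exact value $f(|w|)^{d-2}$ of the Jacobian in every dimension rather than only the $d=2$ case. It is in fact very close in spirit to the polar-coordinates discussion the authors give later in Section \ref{SECTI__5__} as an \emph{a posteriori} interpretation of the theorem; you have essentially promoted that interpretation to the proof itself, which is a legitimate and arguably preferable way to establish the result.
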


\begin{remar}
In terms of the notations introduced in Theorem \ref{THEOREvoluMesur_TCT_Proce}, Theorem \ref{THEORSect2ConservatiMesur} states that $\vert \det(N) \vert = 1$. We introduced a particle system such that its scattering, surprisingly, does not always conserve the kinetic energy, but does always conserve locally the measure in the phase space.\\
Let us emphasize though that the result holds only for the dimension $d=2$.
\end{remar}
\noindent
In order to prove Theorem \ref{THEORSect2ConservatiMesur}, we will a generalization of the well-know formula:
\begin{align}
\det(I_d + u \otimes \omega) = 1 + u\cdot\omega,
\end{align}
concerning the determinant of a single tensor product, and which is a particular case of the Sylvester's determinant theorem, stating that $\det(I_m + AB) = \det(I_n + BA)$ for $A$ a $m\times n$ matrix, and $B$ a $n \times m$ matrix.\\
In the two-dimensional case we have the following result:
\begin{lemma}[Determinant of the sum of tensors products of two vectors]
We consider the two-dimensional case: $d=2$. Let $\lambda$, $\mu$, $\nu$ be three real numbers, and $u$, $\omega$ be two vectors of $\mathbb{R}^2$.\\
Then:
\begin{align}
\label{EQUATLemmeDeterProduTenso}
\det(I_2 + \lambda u\otimes u + \mu u\otimes\omega + \nu \omega\otimes\omega) = 1 + \lambda \vert u \vert^2 + \mu u\cdot \omega + \nu \vert \omega \vert^2 + \lambda \nu \left(\det(u,\omega)\right)^2.
\end{align}
\end{lemma}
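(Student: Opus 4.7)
My plan is to factor the rank-at-most-two matrix $M = \lambda u\otimes u + \mu u\otimes\omega + \nu \omega\otimes\omega$ as a product $AB$ of two $2\times 2$ matrices and then apply Sylvester's determinant theorem $\det(I_2 + AB) = \det(I_2 + BA)$ to reduce the problem to the computation of an explicit $2\times 2$ determinant whose entries are the scalar products $|u|^2$, $|\omega|^2$ and $u\cdot\omega$. The identification with the right-hand side of \eqref{EQUATLemmeDeterProduTenso} will then rely on the Lagrange (Gram) identity $|u|^2|\omega|^2 - (u\cdot\omega)^2 = (\det(u,\omega))^2$, which is specific to $\mathbb{R}^2$ and explains why the statement is restricted to dimension $d=2$.

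Concretely, using the convention $x\otimes y = xy^T$ induced by \eqref{EQUATSect1NotatProduTenso}, I would take $A$ to be the $2\times 2$ matrix whose columns are $u$ and $\omega$, and $B$ to be the $2\times 2$ matrix whose rows are $\lambda u^T + \mu\omega^T$ and $\nu\omega^T$; a direct multiplication confirms $AB = \lambda uu^T + \mu u\omega^T + \nu \omega\omega^T = M$. Sylvester's identity then yields
\begin{align*}
BA = \begin{pmatrix} \lambda|u|^2 + \mu\, u\cdot\omega & \lambda\, u\cdot\omega + \mu|\omega|^2 \\ \nu\, u\cdot\omega & \nu|\omega|^2 \end{pmatrix},
\end{align*}
and expanding the $2\times 2$ determinant $\det(I_2 + BA)$ will produce, after cancellation of the cross term $\mu\nu|\omega|^2(u\cdot\omega)$,
\begin{align*}
1 + \lambda|u|^2 + \mu\, u\cdot\omega + \nu|\omega|^2 + \lambda\nu\bigl(|u|^2|\omega|^2 - (u\cdot\omega)^2\bigr),
\end{align*}
which is the announced expression once the Lagrange identity is used.

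I do not foresee any genuine obstacle: the argument is a one-line application of Sylvester's theorem followed by a mechanical expansion. The only point worth flagging is that the hypothesis $d=2$ enters exactly at the step where the Gram determinant $|u|^2|\omega|^2 - (u\cdot\omega)^2$ is rewritten as the square of the single scalar $\det(u,\omega)$; in higher dimensions the Gram determinant does not reduce to a single determinant squared, consistent with the dimensional restriction in Theorem \ref{THEORSect2ConservatiMesur}. As an equivalent alternative, one could bypass Sylvester and invoke the two-dimensional identity $\det(I_2 + M) = 1 + \mathrm{tr}(M) + \det(M)$, using $\mathrm{tr}(a\otimes b) = a\cdot b$ for the trace term and computing $\det(M)$ in the basis $(u,\omega)$ when these vectors are linearly independent (and by continuity otherwise), which yields the same $2\times 2$ matrix as $BA$ above.
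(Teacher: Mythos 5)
Your proof is correct. Note that the paper itself states this lemma \emph{without} proof, merely alluding in the preceding paragraph to Sylvester's determinant theorem $\det(I_m+AB)=\det(I_n+BA)$; your proposal fills in exactly the argument the authors leave implicit. The factorization $M=AB$ with $A=(u\ \omega)$ and $B$ having rows $\lambda u^T+\mu\omega^T$ and $\nu\omega^T$ is valid, the $2\times 2$ matrix $BA$ and the cancellation of the cross term $\mu\nu\,|\omega|^2(u\cdot\omega)$ check out, and the final step via the Lagrange identity $|u|^2|\omega|^2-(u\cdot\omega)^2=(\det(u,\omega))^2$ is precisely where the restriction to $d=2$ enters, as you correctly flag. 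The alternative route via $\det(I_2+M)=1+\mathrm{tr}(M)+\det(M)$ is equally sound and arguably even more elementary.
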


\begin{proof}[Proof of Theorem \ref{THEORSect2ConservatiMesur}]
In the case when the scattering corresponds to an elastic collision, it is well-known (and we have shown it along the proof of Theorem \ref{THEOREvoluMesurTCTHSProce}) that such a mapping is a linear involution, therefore it conserves the measure.\\
Let us then assume that an inelastic collision takes place, described by the scattering mapping \eqref{EQUATSect1LoideCollision_}. To simplify the notations, let us denote by $v$ and $v_*$ the pre-collisional velocities of the two colliding particles $i$ and $j$, and by $v'$ and $v'_*$ their post-collisional velocities. If we denote:
\begin{align}
\kappa = \sqrt{ \frac{\vert v_* - v \vert^2}{4} - \varepsilon_0 },
\end{align}
we find, for any pair of components $i,j \in \{1,\dots,d\}$:
\begin{align}
\label{EQUATSect2DerivPartiSigma}
\partial_{v_i} \sigma_k =  - \frac{\delta_{i,k}}{\vert v_*-v \vert} + \frac{(v_{*,k}-v_k)(v_{*,i}-v_i)}{\vert v_*-v \vert^3} + 2 \frac{\omega_i\omega_k}{\vert v_*-v \vert} - 2 (v_*-v)\cdot\omega\frac{(v_{*,i}-v_i)\omega_k}{\vert v_*-v \vert^3}
\end{align}
where $\sigma_k$ is the $k$-th component of the vector $\sigma$ defined in \eqref{EQUATSect1DefinSigma}, $\delta_{i,k}$ is the Kronecker's delta symbol, equal to $1$ if and only if $i=k$, and $0$ in the other cases. We find also:
\begin{align}
\label{EQUATSect2DerivPartiKappa}
\partial_{v_i} \kappa = \partial_{v_i} \sqrt{ \frac{\vert v_*-v \vert^2}{4} - \varepsilon_0 } = - \frac{(v_{*,i}-v_i)}{4 \kappa} \cdotp
\end{align}
\noindent
Therefore, since the expression of the post-collisional velocities $v'$ and $v'_*$ are given by \eqref{EQUATSect1LoideCollision_}, collecting \eqref{EQUATSect2DerivPartiSigma} and \eqref{EQUATSect2DerivPartiKappa} together we obtain:
\begin{align}
\label{EQUATSect2DerivPartiVPost}
\nabla_v v' &= \frac{1}{2} I_d - \frac{\kappa}{\vert v*-v \vert} I_d + \frac{\kappa}{\vert v_*-v \vert} \frac{(v_*-v)}{\vert v_*-v \vert} \otimes \frac{(v_*-v)}{\vert v_*-v \vert} + 2 \frac{\kappa}{\vert v_*-v \vert} \omega \otimes \omega \nonumber \\
&\hspace{5mm} - 2 \frac{\kappa}{\vert v_*-v \vert} \frac{(v_*-v)\cdot\omega}{\vert v_*-v \vert} \frac{(v_*-v)}{\vert v_*-v \vert} \otimes \omega + \frac{(v_*-v)\otimes(v_*-v)}{4 \vert v_*-v \vert \kappa} - 2\frac{(v_*-v)\cdot\omega}{\vert v_*-v \vert} \frac{(v_*-v)\otimes\omega}{4 \kappa} \nonumber\\
&= \frac{1}{2} I_d + \frac{\kappa}{\vert v_*-v \vert} \Big[ - I_d + \left( 1 + \frac{\vert v_*-v \vert^2}{\kappa^2} \right) \frac{(v_*-v)}{\vert v_*-v \vert} \otimes \frac{(v_*-v)}{\vert v_*-v \vert} \nonumber\\
&\hspace{35mm}- 2 \frac{(v_*-v)}{\vert v_*-v \vert}\cdot\omega \left( 1+\frac{\vert v_*-v \vert^2}{4\kappa^2} \right) \frac{(v_*-v)}{\vert v_*-v \vert} \otimes \omega + 2 \omega\otimes\omega \Big] 
\end{align}
where we used the notation introduced in \eqref{EQUATSect1NotatProduTenso} for the tensor product of two vectors, and $I_d$ is the $d \times d$ identity matrix. Writing the formula \eqref{EQUATSect2DerivPartiVPost} in the form:
\begin{align*}
\nabla_v v' = \frac{1}{2} I_d + A
\end{align*}
with
\begin{align*}
A &= \frac{\kappa}{\vert v_*-v \vert} \Big[ - I_d + \left( 1 + \frac{\vert v_*-v \vert^2}{\kappa^2} \right) \frac{(v_*-v)}{\vert v_*-v \vert} \otimes \frac{(v_*-v)}{\vert v_*-v \vert} \nonumber\\
&\hspace{55mm}- 2 \frac{(v_*-v)}{\vert v_*-v \vert}\cdot\omega \left( 1+\frac{\vert v_*-v \vert^2}{4\kappa^2} \right) \frac{(v_*-v)}{\vert v_*-v \vert} \otimes \omega + 2 \omega\otimes\omega \Big],
\end{align*}
and computing the other partial derivatives $\nabla_{v_*} v'$, $\nabla_v v_*'$ and $\nabla_{v_*} v_*'$, we obtain an expression of the following form for the Jacobian matrix $J$ of the scattering mapping $(v,v_*) \mapsto (v',v_*')$ defined by \eqref{EQUATSect1LoideCollision_}:
\begin{align}
J = \begin{pmatrix} \nabla_{v} v' & \nabla_{v} v'_* \\ \nabla_{v_*} v' & \nabla_{v_*} v_*' \end{pmatrix} = \begin{pmatrix} \frac{1}{2}I_d + A & \frac{1}{2}I_d - A \\ \frac{1}{2}I_d - A & \frac{1}{2}I_d + A \end{pmatrix}.
\end{align}
The determinant of such a matrix can be computed as follows. First we obtain:
\begin{align*}
\det(J) = \begin{vmatrix} \frac{1}{2}I_d + A & \frac{1}{2}I_d - A \\ \frac{1}{2}I_d - A & \frac{1}{2}I_d + A \end{vmatrix} = \begin{vmatrix} \frac{1}{2} I_d + A & \frac{1}{2}I_d - A \\ I_d & I_d \end{vmatrix} = \begin{vmatrix} 2A & \frac{1}{2}I_d - A \\ 0 & I_d \end{vmatrix} = \det(2A).
\end{align*}
It remains to compute $\det(2A)$. Such a determinant is of the form:
\begin{align}
\label{EQUATSect2PrsntDeterminan}
\det(2A) &= \left( -\frac{2\kappa}{\vert v_*-v \vert}\right)^d \det(I_d + \lambda u\otimes u + \mu u\otimes\omega + \nu \omega\otimes \omega),
\end{align}
where
\begin{align}
u = \frac{(v_*-v)}{\vert v_*-v \vert},
\end{align}
\begin{align}
\left\{
\begin{array}{rcl}
\lambda &=& - \left( 1 + \frac{\vert v_*-v \vert^2}{\kappa^2} \right),\\
\mu &=& 2 \frac{(v_*-v)}{\vert v_*-v \vert}\cdot\omega \left( 1+\frac{\vert v_*-v \vert^2}{4\kappa^2} \right),\\
\nu &=& -2.
\end{array}
\right.
\end{align}
Applying now the formula \eqref{EQUATLemmeDeterProduTenso}, \eqref{EQUATSect2PrsntDeterminan} becomes:
\begin{align}
\det(2A) &= \frac{4 \kappa^2}{\vert v_*-v \vert^2} \Bigg[ 1 - \left( 1+\frac{\vert v_*-v \vert}{4\kappa^2} \right) + 2 \frac{(v_*-v)}{\vert v_*-v \vert}\cdot \omega \left(1+\frac{\vert v_*-v \vert^2}{4\kappa^2}\right) \left(\frac{(v_*-v)}{\vert v_*-v \vert}\cdot\omega\right) \nonumber\\
&\hspace{70mm} -2 + 2 \left(1+\frac{\vert v_*-v \vert^2}{\kappa^2}\right) \det\left( \frac{(v_*-v)}{\vert v_*-v \vert},\omega \right)^2 \Bigg] \nonumber\\
&= \frac{4 \kappa^2}{\vert v_*-v \vert^2} \Bigg[ -1 - \left( 1+\frac{\vert v_*-v \vert}{4\kappa^2} \right) + 2 \left(1-\frac{\vert v_*-v \vert^2}{4\kappa^2}\right) \left(\frac{(v_*-v)}{\vert v_*-v \vert}\cdot\omega\right)^2 \nonumber\\
&\hspace{75mm} + 2 \left(1+\frac{\vert v_*-v \vert^2}{\kappa^2}\right) \det\left( \frac{(v_*-v)}{\vert v_*-v \vert},\omega \right)^2 \Bigg],
\end{align}
and writing:
\begin{align}
\frac{(v_*-v)}{\vert v_*-v \vert} \cdot \omega = \cos\theta,
\end{align}
we have
\begin{align}
\det\left( \frac{(v_*-v)}{\vert v_*-v \vert},\omega \right) = \sin\theta,
\end{align}
so that
\begin{align*}
\det(2A) &= \frac{4 \kappa^2}{\vert v_*-v \vert^2} \Bigg[ -1 - \left( 1+\frac{\vert v_*-v \vert}{4\kappa^2} \right) + 2 \left(1-\frac{\vert v_*-v \vert^2}{4\kappa^2}\right) \cos^2 \theta + 2 \left(1+\frac{\vert v_*-v \vert^2}{\kappa^2}\right) \sin^2\theta \Bigg] \nonumber\\
&= \frac{4 \kappa^2}{\vert v_*-v \vert^2} \Bigg[ -1 + \left( 1+\frac{\vert v_*-v \vert}{4\kappa^2} \right) \Bigg] = 1.
\end{align*}
\end{proof}
\noindent
With Proposition \ref{PROPOPrdSctcVV'TCTProceIn} and Theorem \ref{THEORSect2ConservatiMesur}, through \eqref{EQUATTheo1EvoluMesur_TCT_} we have now a complete understanding of the evolution of the Lebesgue measure under the action of the TCT dynamics, in the case of the IHSE model.

\begin{remar}
Let us observe that the flow of the IHSE system is a composition of free transport, and scatterings (given by \eqref{EQUATSect1LoideCollision_}, or \eqref{EQUATSect1LoideColliElast}), and such scattering mappings preserve the Lebesgue measure in the phase space. From such a point of view, this particle system is very similar to the classical, elastic hard sphere system. However, the measure-preserving property of the scattering is \emph{not enough} to conclude that the flow of such particle systems preserves the measure in the phase space.  Actually, Proposition \ref{PROPOPrdSctcVV'TCTProceIn} and Theorem \ref{THEORSect2ConservatiMesur} together prove that there exist particle systems such that their flows is not measure-preserving, even if their scattering mappings can be.\\
Formula \eqref{EQUATTheo1EvoluMesur_TCT_} of Theorem \ref{THEOREvoluMesur_TCT_Proce} shows that not only the scattering is playing a role in the evolution of the Lebesgue measure under the action of the flow of a TCT dynamics. Another important quantity is $\nabla_{X_N}t_c \cdot (V_N-V'_N)$, exhibiting that the link between the scattering and the collision time plays an important role in the change of the Lebesgue measure along the flow.
\end{remar}

\section{Consequences and interpretation of Theorem \ref{THEORSect2ConservatiMesur}}
\label{SECTI__5__}

\subsection{Global well-posedness of the flow of the inelastic hard spheres with emission}

In the case of the classical model of elastic hard spheres (see for instance \cite{Szas000}), the question of the global well-posedness of the dynamics of the particles is addressed by Alexander's theorem \cite{Alex975}, \cite{Alex976} (see also \cite{GSRT013} for a modern presentation). Such a well-posedness property for almost every initial configuration is the first step in order to complete the proof of Lanford's theorem \cite{Lanf975}, which provides a rigorous derivation of the Boltzmann equation taking as a starting point the elastic hard sphere system. This well-posedness is a delicate question, because particles can experience triple collisions, which causes problems to define further the dynamics, therefore such a dynamics has no chance to be globally well-posed for every initial configuration. Nevertheless, Alexander's theorem establishes such a result, for \emph{almost every} initial datum (with respect to the Lebesgue measure in the phase space of $N$ particles).\\
Since one of the main ingredients in Alexander's proof is the estimation of the Lebesgue measure of the preimages by the TCT dynamics, as a direct consequence of Theorem \ref{THEORSect2ConservatiMesur}, we can adapt such a proof in the case of the IHSE.\\
Let us recall the main steps of Alexander's proof. In order to define the dynamics of the particle system on the time interval $]0,T]$, one starts with decomposing this time interval into $T/\delta$ smaller intervals of the form $]k\delta,(k+1)\delta]$. Then, one identifies a set of initial configurations for which the dynamics is well-defined, and such that its complement is of small measure. We emphasize that, since the dynamics is not defined yet, one \emph{cannot} consider the preimages of the configurations such that three particles or more are in contact. The main idea is to observe that, except if two different pairs of particles are at a distance of order $\delta$, then at most one collision can occur on the time interval $]0,\delta]$. We denote by $E_0$ the set of such pathological initial configurations. Using the terminology we used in the present article, the transport can be defined either as the free flow, or as a TCT dynamics until time $\delta$ on the complement of $E_0$.  On the other hand, the pathological set $E_0$ has a measure smaller than $\delta^2$. 
Since now the flow is defined up to time $\delta$ for all the configurations that are not in $E_0$ (let us denote its flow by $\mathcal{T}_\delta$), we can repeat the procedure at time $\delta$, obtaining a pathological set $E_1$  of measure $\delta^2$ that could lead to triple collisions on the time interval $]\delta,2\delta]$. Therefore, outside $E_0$ and $\mathcal{T}_\delta^{-1}(E_1)$ (where $\mathcal{T}_\delta^{-1}(E_1)$ is the preimage of the set $E_1$ by the flow $\mathcal{T}_\delta$), the dynamics is now defined on $]0,2\delta]$. Besides, $\mathcal{T}_\delta^{-1}(E_1)$ and $E_1$ have the same measure, because the transport preserves the measure in the case of the elastic hard spheres. Proceeding until the final time $T$, we would define the dynamics on the whole interval $]0,T]$, up to exclude $T/\delta$ pathological sets of initial configurations, each of measure $\delta^2$, so that the dynamics is defined on $]0,T]$ up to exclude a set of measure $\delta$. Considering in the end the intersection for all $\delta>0$, we would define the elastic hard sphere dynamics almost everywhere in the phase space.\\
\newline
For our model, there are two main difficulties. First, in the present case, we defined a flow of particles that is not injective in the phase space. Therefore, each pathological set at time $(k+1)\delta$, has a preimage at time $k\delta$ that can be the union of two different sets (one as the preimage of a TCT dynamics with an elastic collision, the other of another TCT dynamics, with an inelastic collision). Second, the measure is not conserved by the TCT dynamics in the case of an inelastic collision. In particular, \eqref{EQUATPrdSctcVV'TCTProceIn} implies the following. If $A$ is a measurable subset of the phase space such that $\vert v_i - v_j \vert^2 - 4\varepsilon_0$ is small for all $Z_N=(x_1,\dots,x_N,\dots,v_i,\dots,v_j,\dots) \in A$, the Lebesgue measure of $\mathcal{T}_\delta(A)$ is much smaller than the measure of $A$ itself.\\
The first difficulty can be addressed using the following key argument: there exists an a priori uniform estimate on the number of collisions in a system of inelastic hard spheres with emission, as soon as the kinetic energy is bounded. This estimate is a consequence of the uniform bounds that are known for the systems of elastic hard spheres, obtained by Burago, Ferleger and Kononenko (Theorem 1.3 in \cite{BuFK998}). We recall here their result:

\begin{theor}[Burago, Ferleger, Kononenko \cite{BuFK998}]
\label{THEORBuragFerleKonon}
Let $d \geq 2$ and $N$ be two positive integers.\\
To any initial configuration $Z_N \in\overline{\mathcal{D}}_N$, we can associate a non-negative real number $T(Z_N) \in [0,+\infty]$ such that the maximal interval of existence of the trajectory starting from $Z_N$ is $[0,T(Z_N)[$.\\
In addition, there exists a positive constant $C_\text{HS} = C_\text{HS}(N)$, depending only $N$, such that for any initial configuration $Z_N$ of $N$ elastic hard spheres, the trajectory starting from $Z_N$ presents at most $C_\text{HS}$ collisions on the whole time interval $[0,T(Z_N)[$.
\end{theor}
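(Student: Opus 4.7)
The natural framework is to reinterpret the $N$-hard-sphere system as a single point particle of unit speed moving inside a billiard in $\mathbb{R}^{dN}$ whose boundary is the union of the $\binom{N}{2}$ cylinders $C_{i,j} = \{Z_N \in \mathbb{R}^{dN} : |x_i-x_j| = 1\}$. Each cylinder is convex from the interior, so the billiard is semi-dispersing. The outward normal to $C_{i,j}$ at a contact point is proportional to the vector with $x_i-x_j$ in the $i$-th position slot, $x_j-x_i$ in the $j$-th, and zeros elsewhere; the orthogonal reflection in this normal reproduces exactly the elastic collision law \eqref{EQUATSect1LoideColliElast}. The first step of the plan is therefore to verify this equivalence carefully and to construct the maximal trajectory by the standard extension procedure: starting from $Z_N \in \overline{\mathcal{D}}_N$, iterate free transport until the first contact, apply the reflection, and repeat; the supremum of the resulting concatenation of time intervals defines $T(Z_N) \in [0,+\infty]$.

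The existence of the \emph{uniform} constant $C_{\text{HS}}(N)$ is the substantive part. I would proceed by induction on $N$. For $N=2$ the pair collides at most once, which gives the base case. For the inductive step the aim is to prove that after a bounded number of reflections of the billiard point one can identify a cluster of particles that has \emph{detached permanently} from the remaining ones, i.e.\ moves in a straight line without ever colliding again. Once this is shown, the problem splits into two subsystems of size strictly smaller than $N$, and the inductive hypothesis yields the bound.

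The key step, and the main obstacle, is proving the detachment lemma. Here one uses the semi-dispersing property in the form of a quantitative angular spreading estimate: at each reflection, the angle between the incoming velocity and the normal to the struck cylinder accumulates a strictly positive increment of the solid angle occupied by possible future directions. Tracking this increment along a sequence of consecutive collisions yields that, after a number of collisions depending only on $N$ and $d$, the total angular deflection forces a subset of coordinates of the configuration point to enter a half-space from which no further cylinder $C_{k,l}$ intersecting this subset can be reached. Formalizing this requires a careful bookkeeping of which pairs of particles interact in the sequence, which can be encoded in a graph on $\{1,\dots,N\}$ whose edges correspond to pairs colliding at least once in the window considered; a pigeonhole step ensures that if the graph remains connected for too long, the angular spreading has already forced the detachment.

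Combining the inductive step with the base case gives the bound $C_{\text{HS}}(N)$ depending only on $N$ (in fact also on $d$, but once $d$ is fixed this is absorbed). The finiteness of the maximal trajectory length need not be asserted: $T(Z_N)$ is merely defined as the supremum of the times on which the extension procedure succeeds, and the collision bound applies a priori to any finite sub-trajectory, hence to the whole of $[0,T(Z_N)[$. The hardest part of the argument, and the one I expect to require the most care to reproduce from the original reference, is the quantitative angular spreading estimate that drives the detachment lemma; everything else is a reformulation and an inductive packaging.
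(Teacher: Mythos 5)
The paper does not prove this statement: it is quoted directly from Burago, Ferleger and Kononenko \cite{BuFK998} and used as a black box, so there is no internal proof to compare against. Your first paragraph (the reduction to a semi-dispersing billiard in $\mathbb{R}^{dN}$ whose boundary is the union of the cylinders $\vert x_i - x_j\vert = 1$, the identification of the elastic collision law with orthogonal reflection in the cylinder normal, and the construction of $T(Z_N)$ as the supremum of the concatenated free-flight intervals) is correct and matches the point of view the paper itself alludes to at the end of Section \ref{SECTI__3__}.

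The substantive part of your argument, however, has a genuine gap. The claimed ``quantitative angular spreading estimate'' --- that each reflection contributes a strictly positive, uniformly bounded-below increment of angular deflection --- is false: collisions can be arbitrarily close to grazing, in which case the velocity is deflected by an arbitrarily small angle, and no per-collision lower bound exists. This is precisely the obstruction that makes a \emph{uniform} bound (independent of the initial configuration) difficult, and it is why the earlier finiteness results of Vaserstein and Galperin, which do follow a cluster-decomposition and induction-on-$N$ scheme close to the one you outline, yield only finiteness of the number of collisions along each individual trajectory, not a constant $C_{\text{HS}}(N)$ uniform over all initial data. The actual mechanism in \cite{BuFK998} is entirely different: one unfolds the billiard trajectory by gluing copies of the configuration domain along the reflecting walls, shows that the resulting space is non-positively curved in the sense of Alexandrov (using the convexity of each cylinder), and deduces the uniform collision bound from comparison geometry together with a non-degeneracy condition on the family of cylinders. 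If you want the theorem as stated, with $C_{\text{HS}}$ depending only on $N$, the detachment lemma as you formulate it cannot be rescued by angular bookkeeping alone; you would need to import the non-positive-curvature unfolding argument or an equivalent global substitute.
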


\begin{remar}
The theorem of Burago, Ferleger and Kononenko is stated for any initial configuration $Z_N \in \overline{\mathcal{D}}_N$. In particular, there exist some initial configurations $Z_N \in \overline{\mathcal{D}}_N$ generating trajectories that are not globally defined, because triple or higher order collisions might occur. 
In such a case the dynamics introduced in \cite{BuFK998} is not defined further after a collision that is not binary. Nevertheless, the estimate on the number of collisions holds on the whole maximal time interval of definition $[0,T(Z_N)]$ of the trajectories.\\
Notice that Alexander's theorem for elastic hard spheres implies that $T(Z_N) = +\infty$ for almost every initial configuration $Z_N$ in $\overline{\mathcal{D}}_N$.
\end{remar}

\noindent
The second difficulty will be addressed by introducing a cut-off in the phase space. More precisely, if we assume, for $0 < \mu < 1$ (where $\mu$ is meant to be a small cut-off parameter), that:
\begin{align}
\label{EQUATCutOfIntro_Mu__}
\frac{4\varepsilon_0}{1 - \mu} \leq \vert v_i - v_j \vert^2,
\end{align}
then we have $\displaystyle{\sqrt{1 - \frac{4 \varepsilon_0}{\vert v_i-v_j \vert^2}} > \sqrt{\mu}}$, so that:
\begin{align}
\label{EQUATDescrJacobTrans}
\left\vert \det\left( \text{Jac}(\mathcal{T}_\tau) \right) \right\vert \geq \sqrt{\mu}.
\end{align}
Alexander's proof starts with establishing global well-posedness of the trajectories for almost every initial datum $Z_N$ such that
\begin{align}
\label{EQUATIntroCutOfR1_R2}
\vert X_N \vert \leq R_1 \hspace{3mm} \text{and} \hspace{3mm} \vert V_N \vert \leq R_2,
\end{align}
with $0 < R_1,R_2 < +\infty$. At the end of the argument, $R_1$ and $R_2$ are sent to infinity. 
We will proceed in the same way. More precisely, the parameters $R_1$ and $R_2$, that will be used extensively in the intermediate results that follow, will play the role of cut-offs for the variables $X_N$ and $V_N$ respectively.\\
The cut-off parameter $\delta$ will be used, as in the proof of Alexander's theorem, as the time discretization parameter. Besides, for $\delta > 0$ fixed, we will introduce another cut-off parameter $\mu$ and define a region $\mathcal{P} = \mathcal{P}(\delta,\mu)$ in the phase space by means of the condition \eqref{EQUATCutOfIntro_Mu__}. This condition states that the difference between the norm of the relative velocity of a pair of colliding particles and $2\sqrt{\varepsilon_0}$ is large enough. In the opposite case, that is, when \eqref{EQUATCutOfIntro_Mu__} does not hold, for such a colliding pair, the relative post-collisional velocity is small. $\mathcal{P}$ will be defined as the subset of the phase space such that \eqref{EQUATCutOfIntro_Mu__} does not hold.\\
We will show that the measure of the region $\mathcal{P}$ will be small, and any trajectory starting from the complement of $\mathcal{P}$ satisfies the cut-off \eqref{EQUATCutOfIntro_Mu__}. As a consequence, since the cut-off \eqref{EQUATCutOfIntro_Mu__} implies \eqref{EQUATDescrJacobTrans}, the preimage of a set of small measure by the flow, intersected with the complement of $\mathcal{P}$, has also a small measure.\\
We introduce also, as in the proof of Alexander's theorem, another pathological set $\mathcal{A}$, that contains the initial configurations $Z_N$ such that two collisions or more can take place in any time interval of the form $]k\delta,(k+1)\delta]$. In this way, we prove global well-posedness of the flow outside the region $\mathcal{P} \cup \mathcal{A}$. $\mathcal{P}$ and $\mathcal{A}$ having respective measures that vanish when $\delta,\mu \rightarrow 0$, we will define the flow almost everywhere, provided that $X_N$ and $V_N$ are estimated as in \eqref{EQUATIntroCutOfR1_R2}. We will conclude the argument by sending the cut-offs $R_1$ and $R_2$ to infinity to recover a domain in the phase space, of full Lebesgue measure, on which the flow is well-defined for all positive times.\\
In summary, result we obtain is the following.

\begin{theor}[Alexander's theorem for the inelastic hard sphere with emission model]
\label{THEORSect3AlexanderIHSWEM}
Let $N$ be any positive integer, and let $\varepsilon_0 > 0$ be a positive real number.\\ Then, the dynamics of the system of $N$ inelastic hard spheres with emission, introduced in Section \ref{SSec1PresentatiModèl}, is almost everywhere (with respect to the Lebesgue measure in the phase space $\mathcal{D}_N$) globally well-defined in dimension $d = 2$.\\ 
Moreover, for almost every initial configuration in the phase space, this dynamics is defined only in terms of binary collisions.\\
In addition for any $T > 0$, and for almost every initial configuration $Z_N \in \mathcal{D}_N$, the system of particles starting initially from $Z_N$ experiences only a finite number of collisions in the time interval $[0,T]$.
\end{theor}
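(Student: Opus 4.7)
The plan is to adapt Alexander's strategy (\cite{Alex975}): discretize $[0,T]$ into slots of length $\delta$, isolate two families of pathological initial configurations (leading respectively to multi-collisions in a slot and to collisions at near-threshold velocities), exclude them iteratively, and control the measure of their preimages using the Jacobian lower bound \eqref{EQUATDescrJacobTrans} furnished by Proposition \ref{PROPOPrdSctcVV'TCTProceIn} and Theorem \ref{THEORSect2ConservatiMesur}. A second new ingredient is an a priori control on the total number of collisions for trajectories of bounded energy: combining Theorem \ref{THEORBuragFerleKonon} with the energy dissipation, any trajectory starting with kinetic energy at most $\tfrac{1}{2}NR_2^2$ undergoes at most $M = \lfloor NR_2^2/(2\varepsilon_0)\rfloor$ inelastic collisions (each dissipating $\varepsilon_0$), between which the system evolves elastically and contributes at most $C_\text{HS}(N)$ further collisions; in total there are at most $K = (M+1)C_\text{HS}(N)$ collisions on $[0,T]$.

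Fix $T > 0$ and cut-offs $R_1, R_2, \mu, \delta > 0$ with $\mu < 1$, and work in the box $B = \{Z_N \in \mathcal{D}_N : |X_N| \leq R_1,\ |V_N| \leq R_2\}$. Discretize $[0,T]$ into $\lceil T/\delta\rceil$ slots. Inductively in $k$, using the TCT dynamics of the previous slots, define $\mathcal{A}_k \subset B$ as the set of initial configurations whose image at time $(k-1)\delta$ admits two or more collisions in slot $k$, and $\mathcal{P}_k \subset B$ as the set of initial configurations whose image at time $(k-1)\delta$ undergoes an inelastic collision in slot $k$ with $|v_i - v_j|^2 < 4\varepsilon_0/(1-\mu)$. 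A standard Alexander-type transversality argument yields $|\mathcal{A}_k| \leq C\delta^2$, and an explicit thin-shell estimate in the relative velocity (combined with the $\delta$-slab of positions about to collide) yields $|\mathcal{P}_k| \leq C\mu\delta$, with constants depending on $R_1, R_2, N, d, \varepsilon_0$. On the complement of $\bigcup_{j<k}\mathcal{P}_j$ every prior inelastic collision satisfies the good-velocity bound, so by \eqref{EQUATDescrJacobTrans} each injectivity branch of $\mathcal{T}_{(k-1)\delta}$ (indexed by a finite combinatorial tree of elastic/inelastic collision patterns of depth at most $K$) has absolute Jacobian at least $\mu^{M/2}$. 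A branch-by-branch change of variables then produces
\begin{align*}
\bigl|\mathcal{T}_{(k-1)\delta}^{-1}(\mathcal{A}_k \cup \mathcal{P}_k) \cap B\bigr| \leq C(K,N)\,\mu^{-M/2}\,(\delta^2 + \mu\delta).
\end{align*}
Summing over $k$, the measure of initial configurations in $B$ for which the flow is not definable as a finite chain of binary-collision TCT dynamics on $[0,T]$ is at most $C\,T\,\mu^{-M/2}(\delta + \mu)$. Sending $\delta, \mu \to 0$ jointly along a suitable countable sequence produces a full-measure good subset of $B$ on which the conclusions of the theorem hold (binary collisions only, number of collisions $\leq K < \infty$); exhausting $B$ by $R_1, R_2 \to \infty$ along $\mathbb{N}$ and $T$ along $\mathbb{N}$ yields the a.e. global result on $\mathcal{D}_N$.

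The main obstacle is precisely the competition between the Jacobian amplification $\mu^{-M/2}$ and the bad-set measure $O(\mu\delta)$: already for $M \geq 2$ the naïve joint choice $\delta = \mu$ does not close the estimate, since $T\mu^{1-M/2}$ need not vanish. The argument must therefore refine one of the two sides, either by sharpening the bound on $|\mathcal{P}_k|$ (using that after an inelastic collision at near-threshold velocity the post-collisional relative velocity is very small, so the same pair is unlikely to re-enter a bad slot, which decouples the factor $\mu^{-M/2}$ into a product over genuinely distinct pairs) or by letting $\mu$ depend on the slot (small only where it must be), effectively replacing $\mu^{-M/2}$ by $\mu^{-1/2}$ or a bounded constant at the price of slot-wise vigilance. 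A subsidiary technical difficulty is to give a rigorous inductive definition of $\mathcal{A}_k$ and $\mathcal{P}_k$ despite the non-injectivity of the flow; this is handled by the fact that the combinatorial tree of collision types has at most a $K$-bounded number of leaves at each stage, each carrying its own unambiguous TCT dynamics, and exclusion is performed simultaneously over all leaves.
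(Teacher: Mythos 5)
Your architecture coincides with the paper's: time discretization, the two families of pathological sets (multiple collisions and near-threshold relative velocities), the measure bounds $C\delta^2$ and $C\mu\delta$, the use of Burago--Ferleger--Kononenko together with energy dissipation to get a $\delta$-independent bound on the total number of collisions, and a partition into finitely many cells on which the (non-injective) flow becomes injective so that the change of variables can be applied branch by branch. All of this matches the paper's sets $E_k$, $P_k$, $F_k$, $G_k$ and the cells $C_k^R$, $L_{p,q}^Q$.

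However, there is a genuine gap, and you have located it yourself without closing it: with a single cut-off $\mu$, the preimage estimate accumulates a factor $\mu^{-M/2}$ against a bad-set measure $O(\mu\delta)$, and no joint choice of $(\delta,\mu)$ makes $T\,\mu^{1-M/2}$ vanish once $M\geq 2$. Stating that ``the argument must refine one of the two sides'' and gesturing at two possible fixes is not a proof; the entire difficulty of the theorem beyond the elastic case is concentrated in this step. The paper's resolution is close to your second suggestion but with two essential twists you do not supply. First, the cut-off parameters $\mu_1,\dots,\mu_K$ are indexed by the \emph{ordinal number of the inelastic collision along the trajectory} (the cardinality $\vert R\vert$ of the set of slots containing inelastic collisions), not by the time slot $k$: this is what keeps the number of parameters bounded by $K$ (depending only on $N,R_2,\varepsilon_0$) rather than growing like $n=T/\delta$. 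Second, the parameters are chosen in a geometric cascade, $\mu_i = h^{(3/2)^i}$, so that every ratio $\mu_j/\sqrt{\mu_1\cdots\mu_{j-1}}$ collapses to the single value $h^{3/2}$; the contribution of the $j$-th inelastic collision to the bad set, after division by the accumulated Jacobian $\sqrt{\mu_1\cdots\mu_{j-1}}$, is then uniformly small, and taking $h=\delta^{(3/2)^{K+1}}$ makes both $\vert\mathcal{A}(\delta)\vert$ and $\vert\mathcal{P}\vert$ of order $\delta^{(2/3)^K}\to 0$. Without this hierarchical choice (or an equivalent device) the induction does not close, so the proposal as written does not prove the theorem.
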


\noindent
We divide the proof of Theorem \ref{THEORSect3AlexanderIHSWEM} in the following steps. For any positive number $\delta > 0$, we start with decomposing the time interval $]0,T]$ into the intervals $]k\delta,(k+1)\delta]$, of length $\delta$.
\begin{itemize}
\item First, we introduce the different subsets of the phase space that allow to construct the dynamics of the particles recursively, on the time intervals $]k\delta,(k+1)\delta]$, one after the other, in order to cover the whole time interval $[0,T]$.
\item In a second time, we will identify a partition of the phase space into cells, such that the flow is injective on each of these cells, and such that the number of cells of this partition is independent from $\delta$.
\item Third, we will estimate the measure of the pathological sets we introduced in the first step. The measure of the pathological sets being given explicitely in terms of an expression that vanishes when $\delta \rightarrow 0$, we will be able to conclude the proof of the theorem.
\end{itemize}

\noindent
We start with introducing the two families of pathological sets, on the complement of which we will define the IHSE dynamics.

\begin{defin}[Pathological sets $E_k(\delta)$, $P_k(\delta,\mu)$]
\label{DEFINPatho_Sets_E_P_}
Let $N$ be a positive integer and let $\varepsilon_0 > 0$ be a positive real number. Let $R_1,R_2 > 0$ be two positive real numbers. Let $k \geq 0$ be a non-negative integer, and $\delta,\mu > 0$ be two positive real numbers.\\
We define the \emph{pathological sets of possible multiple collisions}, denoted by $E_k(\delta)$, as:
\begin{align}
\label{EQUATDefinEnsmbPatho_E_kd}
E_k(\delta) = \mathcal{D}_N \cap \big\{ Z_N \in B_{X_N}(0,R_1+k\delta R_2)\times &B_{V_N}(0,R_2)\ /\ \exists\,(i,j),(l,m) \in \mathcal{P}_N\ /\ (i,j)\neq (l,m) \text{ and }\nonumber\\
&\hspace{10mm} \vert x_i-x_j \vert \leq 1 + \frac{3}{2}\sqrt{2}\delta R_2, \vert x_l-x_m \vert \leq 1 + \frac{3}{2}\sqrt{2}\delta R_2 \big\}.
\end{align}
We define also the \emph{pathological sets of large Jacobian}, denoted by $P_k(\delta,\mu)$, as:
\begin{align}
\label{EQUATDefinEnsmbPatho_P_kd}
P_k(\delta,\mu) &= \mathcal{D}_N \cap \big\{ Z_N \in B_{X_N}(0,R_1+k\delta R_2) \times B_{V_N}(0,R_2)\ /\ \exists\, (i,j) \in \mathcal{P}_N \text{ and }\nonumber\\
&\hspace{32mm}\vert x_i-x_j \vert \leq 1 + \sqrt{2} \delta R_2,\hspace{1mm} 2\sqrt{\varepsilon_0} \leq \vert v_i-v_j \vert \leq 2\sqrt{\varepsilon_0}\left( 1 +(\sqrt{2}-1)\mu \right) \big\}.
\end{align}
Finally, we introduce the set of \emph{grazing initial configurations}, denoted by $\Delta$, as:
\begin{align}
\label{EQUATDefinEnsmbPathoDelta}
\Delta = \{Z_N \in \mathcal{D}_N\ /\ \exists\, (i,j) \in \{1,\dots,N\}^2,\, i<j \ /\ \Delta_{i,j} = 0 \text{ or } \vert v_i-v_j \vert = 2\sqrt{\varepsilon_0} \},
\end{align}
and $\Delta_{i,j}$ defined in \eqref{EQUATHypotColli_Non_Rasan}.
\end{defin}

\noindent
The set $E_k(\delta)$ is the set of the initial configurations $Z_N$ such that at least two pairs of particles are initially at a distance smaller than $\frac{3}{2}\sqrt{2}\delta R_2$ from each other.\\
The set $P_k(\delta,\mu)$ is the set of initial configurations such that there exists at least one pair of particles initially at a distance smaller than $\sqrt{2}\delta R_2$, and such that the norm of the relative velocity of such a pair is close to $2\sqrt{\varepsilon_0}$, so that if a collision takes place involving this pair, the post-collisional relative velocity will be small.\\
\newline
\noindent
We begin proving that the dynamics  of the IHSE is well-defined on the time interval $0 \leq t \leq \delta$, for any initial configuration $Z_N$ that does not belong to $E_0(\delta)$, where $E_0(\delta)$ is defined according to Definition \ref{DEFINPatho_Sets_E_P_}, taking $k = 0$.

\begin{lemma}[Well-posedness of the IHSE dynamics on the complement of $E_0(\delta) \cup \Delta$]
\label{LEMMEBonneDefin_IHSE_0_d_}
Let $N$ be a positive integer and let $\varepsilon_0 > 0$ be a positive real number. Let $R_1,R_2 > 0$ be two positive real numbers. Let $\delta > 0$ be a positive real number.\\
Then, we have:
\begin{align}
\mathcal{D}_N \cap \left[ \left( B_{X_N}(0,R_1) \times B_{V_N}(0,R_2) \right) \backslash \left( E_0(\delta) \cup \Delta \right) \right] \subset \left( \mathcal{D}_N^{(0)}(\delta) \cup \mathcal{D}_N^{(1)}(\delta) \right),
\end{align}
where $E_0(\delta)$ and and $\Delta$ are defined by \eqref{EQUATDefinEnsmbPatho_E_kd} and \eqref{EQUATDefinEnsmbPathoDelta} respectively.\\
Therefore, the IHSE dynamics is well-defined on  the time interval $[0,\delta]$ for any initial configuration that belongs to the subset $\mathcal{D}_N \cap \left[ \left( B_{X_N}(0,R_1) \times B_{V_N}(0,R_2) \right) \backslash \left( E_0(\delta) \cup \Delta \right) \right]$ of the phase space $\mathcal{D}_N$. We will denote by $\mathcal{T}_\delta$ such a dynamics:
\begin{align}
\mathcal{T}_\delta : \mathcal{D}_N \cap \left[ \left( B_{X_N}(0,R_1) \times B_{V_N}(0,R_2) \right) \backslash \left( E_0(\delta) \cup \Delta \right) \right] \rightarrow \mathcal{D}_N.
\end{align}
More precisely, if $Z_N \in \mathcal{D}_N^{(0)}$, $\mathcal{T}_\delta(Z_N)$ is defined according to Definition \ref{DEFINDomai&TransportLibre}, and if $Z_N \in \mathcal{D}_N^{(1)}$, $\mathcal{T}_\delta(Z_N)$ is defined according to Definition \ref{DEFIN_TCT_ProceInelaHSwEm} (taking $\tau = \delta$ in Definitions \ref{DEFINDomai&TransportLibre} and \ref{DEFIN_TCT_ProceInelaHSwEm}). 
\end{lemma}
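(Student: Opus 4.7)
The goal is to show that any $Z_N$ in the hypothesized set belongs either to $\mathcal{D}_N^{(0)}(\delta)$ (no collision on $[0,\delta]$) or to some $\mathcal{D}_N^{(1),(i,j)}(\delta)$ (a single, well-identified collision). All six defining conditions $S_1$--$S_6$ must then be checked.

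I would first establish a single uniform pairwise velocity bound that makes the rest of the argument work. Since $V_N$ lies in the Euclidean ball $B_{V_N}(0,R_2) \subset \mathbb{R}^{dN}$, one has $\sum_k |v_k|^2 \leq R_2^2$ and hence the sharp estimate
\begin{align*}
|v_k - v_l|^2 \leq 2(|v_k|^2 + |v_l|^2) \leq 2 R_2^2
\end{align*}
for every pair $(k,l)$. Because total kinetic energy is non-increasing (conserved under free transport and elastic scattering, decreased by $\varepsilon_0$ under inelastic scattering thanks to \eqref{EQUATSect1DissipatioEnCin}), the same bound applies to the post-collision velocity $V_N'$ and to all pairs $(v_k',v_l')$.

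Next I would split into cases on the first collision time $t_c(Z_N)$. If $t_c>\delta$ then $Z_N \in \mathcal{D}_N^{(0)}(\delta)$ and we are done. Otherwise $t_c \leq \delta$; since $Z_N \notin \Delta$ every candidate collision is non-grazing ($\Delta_{k,l}>0$), so there is a unique pair $(i,j)$ realising $\tau_{i,j}=t_c$, which yields $S_1$ and $S_2$, and $S_4$ is immediate from $|v_i-v_j|\neq 2\sqrt{\varepsilon_0}$. For $S_3$ I argue by contradiction: if another pair $(k,l) \neq (i,j)$ satisfied $\Delta_{k,l}>0$ and $\tau_{k,l}\leq \delta$, then at the virtual collision time one has $|x_k - x_l| \leq 1 + \tau_{k,l}\,|v_k-v_l| \leq 1 + \sqrt{2}\,\delta R_2$, and the same bound holds for $(i,j)$; both distinct pairs then fall within the threshold $1 + \tfrac{3}{2}\sqrt{2}\,\delta R_2$ of the definition \eqref{EQUATDefinEnsmbPatho_E_kd}, placing $Z_N$ in $E_0(\delta)$, a contradiction.

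Finally, for $S_5$ and $S_6$ I would treat the pair $(i,j)$ itself separately: the post-collisional outgoing-relative-velocity property of the IHSE scattering (observed after \eqref{EQUATSect1LoideColliElast}) forces $\tau_{i,j}(Z_N') = +\infty$. For any other pair involving $i$ or $j$, say $(i,k)$ with $k \neq j$, if a virtual collision of $(i,k)$ occurred in the post-collision configuration within the relevant time window, the two-leg triangle estimate
\begin{align*}
|x_i - x_k| \leq 1 + t_c\,|v_i-v_k| + \tau_{i,k}(Z_N')\,|v_i'-v_k|,
\end{align*}
combined with the sharp velocity bound applied to both $v_i-v_k$ and $v_i'-v_k$, forces $|x_i-x_k|$ back inside the $E_0(\delta)$ threshold; together with $|x_i-x_j| \leq 1 + \sqrt{2}\,\delta R_2$ this again puts $Z_N \in E_0(\delta)$, contradicting the hypothesis. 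The main obstacle I anticipate is precisely this step: because the trajectory splits into two legs with different velocities, the triangle estimate accumulates over both legs, and one must verify that the particular constant $\tfrac{3}{2}\sqrt{2}$ in the definition of $E_0(\delta)$ is exactly calibrated to absorb the accumulated slack. The measure-zero boundary configurations $\{t_c = \delta\}$ are harmless and can be folded into $\Delta$ if needed.
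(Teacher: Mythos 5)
Your proposal rests on the same mechanism as the paper's (very brief) proof: the bound $\vert V_N\vert \leq R_2$ gives the uniform pairwise estimate $\vert v_k - v_l\vert \leq \sqrt{2}R_2$, preserved after a collision since the kinetic energy is non-increasing, so any pair that reaches contact within a short time window must start within $1 + (\text{time})\cdot\sqrt{2}R_2$ of contact, and $E_0(\delta)$ forbids two distinct pairs from being simultaneously that close. You go further than the paper, whose proof is a two-sentence sketch asserting ``at most one collision on $]0,3\delta/2[$'': you check the conditions $S_1$--$S_6$ one by one, and your handling of $S_1$--$S_4$, including the use of $\Delta$ to rule out grazing collisions and the critical relative speed $2\sqrt{\varepsilon_0}$, is correct and more complete than the published argument.

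However, the step you yourself flag as the main obstacle does not close as written. For $S_5$/$S_6$ your two-leg estimate gives $\vert x_i - x_k\vert \leq 1 + t_c\vert v_i - v_k\vert + \tau_{i,k}(Z_N')\vert v_i' - v_k\vert \leq 1 + 2\sqrt{2}\,\delta R_2$, since $t_c$ and $\tau_{i,k}(Z_N')$ are each only bounded by $\delta$; this exceeds the threshold $1 + \tfrac{3}{2}\sqrt{2}\,\delta R_2$ of \eqref{EQUATDefinEnsmbPatho_E_kd}, so the claimed conclusion ``$Z_N \in E_0(\delta)$'' does not follow. The constant $\tfrac{3}{2}\sqrt{2}$ is \emph{not} calibrated for the literal conditions \eqref{EQUATDefinDmTCTDelta'i_l__>_0_}--\eqref{EQUATDefinDmTCTDelta'k_j__>_0_}, which require $\tau_{k,l}(Z_N') > \delta$, i.e.\ no second contact up to total elapsed time $t_c + \delta$, possibly $2\delta$. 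What the constant does deliver is the absence of a second collision up to \emph{total} elapsed time $\tfrac{3}{2}\delta$: if the second contact occurs at total time $t_c + s \leq \tfrac{3}{2}\delta$, the accumulated displacement is $(t_c+s)\sqrt{2}R_2 \leq \tfrac{3}{2}\sqrt{2}\,\delta R_2$ and the contradiction goes through. This is precisely the statement the paper's proof invokes, and it suffices for the dynamics to be well-defined on $[0,\delta]$; but to obtain the formal inclusion into $\mathcal{D}_N^{(1)}(\delta)$ one must either read $S_5$/$S_6$ in the weakened form $\tau_{i,j}+\tau_{k,l}(Z_N') > \tau$ or enlarge the constant in $E_0(\delta)$ to $2\sqrt{2}$. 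You should make this reduction explicit; as it stands, the last step of your plan asserts an inequality that fails by a factor $4/3$.
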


\begin{proof}
let $Z_N$ be an initial configuration outside $E_0(\delta)$. 
Then, such initial configuration $Z_N$ leads to a well-defined trajectory on the time interval $[0,\delta]$, because such an initial configuration generates a trajectory with at most one collision on the time interval $]0,3\delta/2[$, and so, we may define the dynamics on $]0,\delta]$ with the help, either of the free flow (Definition \ref{DEFINDomai&TransportLibre}), or the TCT dynamics (Definitions \ref{DEFIN_TCT_ElemnFonctInelaHSwEm} and \ref{DEFIN_TCT_ProceInelaHSwEm}).
\end{proof}
\noindent
\noindent
As a next step, we evaluate the Lebesgue measure of the pathological sets $E_k(\delta)$ and $P_k(\delta,\mu)$.

\begin{lemma}[Estimation of the measures of the pathological sets $E_k(\delta)$ and  $P_k(\delta,\mu)$]
\label{LEMMEMesurEnsmbPatho_E_kd_P_kd}
Let $N$ be a positive integer and let $\varepsilon_0 > 0$ be a positive real number. Let $T,R_1,R_2 > 0$ be three positive real numbers. Let $\delta,\mu > 0$ be two positive real numbers, such that $T/\delta$ is a positive integer $n > 0$.\\
Let us assume that:
\begin{align}
\delta \leq 1,\hspace{5mm} \delta \leq \frac{2}{3\sqrt{2}R_2},\hspace{5mm} \mu \leq \frac{1}{2}\cdotp
\end{align}
Then, in dimension $d=2$, there exist two positive constants $C_1=C_1(N,T,R_1,R_2) > 0$ and $C_2=C_2(N,T,\varepsilon_0,R_1,R_2) > 0$ (that depend on the parameters $N$, $\varepsilon_0$, $T$, $R_1$ and $R_2$, but that are independent from $\delta$ and $\mu$) such that, for any integer $0 \leq k \leq n-1$, the Lebesgue measures of the pathological sets $\vert E_k \vert$ and $\vert P_k \vert$ is estimated as follows:
\begin{align}
\label{EQUATSect3MesurEnsmbEdelt}
\left\vert E_k(\delta) \right\vert &\leq C_1(N,T,R_1,R_2)\delta^2,
\end{align}
and:
\begin{align}
\label{EQUATSect3MesurEnsmbPdelt}
\left\vert P_k(\delta,\mu) \right\vert \leq C_2(N,T,\varepsilon_0,R_1,R_2) \delta \mu.
\end{align}
\end{lemma}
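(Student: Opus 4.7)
The plan is to reduce each estimate to a union bound over the (finitely many) colliding pairs appearing in the definitions of $E_k(\delta)$ and $P_k(\delta,\mu)$, and, for each fixed choice, to perform a constant-Jacobian linear change of variables which isolates the relevant relative coordinates. Under such a change of variables, the constraints defining the two pathological sets translate into the requirement that one or two coordinates in $\mathbb{R}^2$ lie in thin annular regions whose areas can be computed directly, while the remaining (spectator) positions and velocities stay in bounded sets thanks to the cut-offs $B_{X_N}(0,R_1+k\delta R_2) \subset B_{X_N}(0,R_1+TR_2)$ (since $k\delta\leq T$) and $B_{V_N}(0,R_2)$, and so contribute only a constant depending on $N$, $T$, $R_1$, $R_2$.

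For $E_k(\delta)$ one applies the union bound over ordered pairs $((i,j),(l,m))$ of distinct elements of $\mathcal{P}_N$, of which there are at most $N^4$. For each such pair of pairs, the velocity integration over $B_{V_N}(0,R_2)$ contributes a constant, and for the positions one distinguishes whether $\{i,j\}$ and $\{l,m\}$ are disjoint or share exactly one particle (a shared particle count of two is impossible since $(i,j)\neq (l,m)$). In the disjoint case one uses the linear bijection $(x_i,x_j,x_l,x_m)\mapsto (x_i-x_j,\, (x_i+x_j)/2,\, x_l-x_m,\, (x_l+x_m)/2)$; in the case $i=l$ (and similarly for the other overlap patterns) one uses $(x_i,x_j,x_m)\mapsto (x_i,\, x_i-x_j,\, x_i-x_m)$. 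In either case the two relative positions $x_i-x_j$ and $x_l-x_m$ range independently over the annulus
\begin{align*}
\{ y \in \mathbb{R}^2 \ /\ 1 \leq |y| \leq 1+\tfrac{3}{2}\sqrt{2}\delta R_2 \},
\end{align*}
whose area is $\pi\left( (1+\tfrac{3}{2}\sqrt{2}\delta R_2)^2 - 1\right) \leq C(R_2)\,\delta$ (using $\delta\leq 1$), while the remaining position variables stay in a ball of radius $R_1+TR_2$. The product of the two shell areas yields the factor $\delta^2$, and summing over the bounded number of pairs of pairs gives \eqref{EQUATSect3MesurEnsmbEdelt}.

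For $P_k(\delta,\mu)$ a union bound over $(i,j)\in\mathcal{P}_N$ reduces the estimate to a single colliding pair. One then performs the change of variables $y=x_i-x_j$, $s=(x_i+x_j)/2$, $u=v_i-v_j$, $w=(v_i+v_j)/2$ (together with the identity on all the other $(x_k,v_k)$), which has constant Jacobian. The position constraint $1\leq |y|\leq 1+\sqrt{2}\delta R_2$ restricts $y$ to an $\mathbb{R}^2$-annulus of area bounded by $C(R_2)\,\delta$, while the velocity constraint $2\sqrt{\varepsilon_0}\leq |u|\leq 2\sqrt{\varepsilon_0}(1+(\sqrt{2}-1)\mu)$ restricts $u$ to an $\mathbb{R}^2$-shell of area
\begin{align*}
\pi \cdot 4\varepsilon_0 \cdot \left( (1+(\sqrt{2}-1)\mu)^2 - 1 \right) \leq C(\varepsilon_0)\,\mu
\end{align*}
(using $\mu\leq 1/2$). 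The spectator variables $s$, $w$ and the other positions and velocities stay in bounded sets, producing the bound $|P_k(\delta,\mu)|\leq C_2\,\delta\mu$ as claimed.

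There is no deep obstacle here; the estimate is essentially a dimension count. The care to take is, on the one hand, to organise the case analysis for $E_k(\delta)$ according to whether the two pairs of indices share a particle, so that a linear change of variables with nonzero Jacobian is available in every case; and, on the other hand, to verify that the spectator variables remain bounded uniformly in $k\leq n-1$, which is precisely ensured by the inclusion $B_{X_N}(0,R_1+k\delta R_2)\subset B_{X_N}(0,R_1+TR_2)$ together with the hypothesis $\delta\leq 2/(3\sqrt{2}R_2)$ needed to control the remainder in the annulus area.
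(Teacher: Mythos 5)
Your proposal is correct and follows essentially the same route as the paper: a union bound over the finitely many (pairs of) colliding pairs, followed by the observation that each position constraint confines a relative coordinate to an annulus of area $O(\delta)$ and the velocity constraint in $P_k$ confines the relative velocity to an annulus of area $O(\mu)$, while the spectator variables remain in balls of radius controlled by $R_1+TR_2$ and $R_2$. The paper writes the resulting product of volumes directly without spelling out the change of variables or the case distinction on shared indices, so your version is simply a slightly more explicit rendering of the same dimension-counting argument.
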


\begin{proof}
Let $0 \leq k \leq n-1$ be a non-negative integer, where $n = T/\delta$. Concerning the pathological set of multiple collisions $E_k(\delta)$, we have:
\begin{align}
\left\vert E_k(\delta) \right\vert &\leq {N \choose 2}^2 \left\vert B_{\mathbb{R}^{2(N-2)}}(0,R_1+k\delta R_2) \right\vert \times \Big( \Big\vert B_{\mathbb{R}^2}(0,1 + \frac{3\sqrt{2}}{2}\delta R_2) \Big\vert - \left\vert B_{\mathbb{R}^2}(0,1) \right\vert \Big)^2 \times \left\vert B_{\mathbb{R}^{2N}}(0,R_2) \right\vert \nonumber \\
&\leq C(N) (R_1 + k\delta R_2)^{2(N-2)} R_2^{2N+2} \delta^2 \nonumber \\
&\leq C(N) (R_1 + T R_2)^{2(N-2)} R_2^{2N+2} \delta^2 \nonumber
\end{align}
In particular, we have:
\begin{align}
\vert E_k(\delta) \vert \leq C_1(N,T,R_1,R_2) \delta^2.
\end{align}
The estimate on $\vert P_k(\delta,\mu) \vert$ is obtained in the same way:
\begin{align}
\left\vert P_k(\delta,\mu) \right\vert = {N \choose 2}& \left\vert B_{\mathbb{R}^{2(N-1)}}(0,R_1+k\delta R_2)\right\vert \times \left( \Big\vert B_{\mathbb{R}^2}(0,1 + \frac{3\sqrt{2}}{2}\delta R_2) \Big\vert - \left\vert B_{\mathbb{R}^2}(0,1) \right\vert \right) \nonumber\\
&\times \left\vert B_{\mathbb{R}^{2(N-1)}}(0,R_2)\right\vert \times \left( \Big\vert B_{\mathbb{R}^2}(0,2\sqrt{\varepsilon_0} + 2(\sqrt{2}-1)\sqrt{\varepsilon_0}\mu) \Big\vert - \left\vert B_{\mathbb{R}^2}(0,2\sqrt{\varepsilon_0}) \right\vert \right),
\end{align}
so that we find:
\begin{align}
\left\vert P_k(\delta,\mu) \right\vert \leq C(N) (R_1+TR_2)^{2(N-1)} R_2^{2(N-1)+1} \delta \sqrt{\varepsilon_0}\mu,
\end{align}
which concludes the proof of Lemma \ref{LEMMEMesurEnsmbPatho_E_kd_P_kd}.
\end{proof}

\noindent
We now introduce two additional pathological subsets $\mathcal{A}(\delta)$ and $\mathcal{P}(\delta,\mu)$ of the phase space $\mathcal{D}_N$. Outside the first pathological subset $\mathcal{A}(\delta)$, we will be able to define the IHSE dynamics on any time interval $[0,T]$. Outside the second pathological subset $\mathcal{P}(\delta,\mu)$, we will estimate the Jacobian of the IHSE dynamics, which will be necessary in turn to estimate the measure of the pathological set $\mathcal{A}(\delta)$. We first define recursively $\mathcal{A}(\delta)$, together with the IHSE dynamics on the complement of $\mathcal{A}(\delta)$, on the time interval $[0,T]$.\\
According to Lemma \ref{LEMMEBonneDefin_IHSE_0_d_}, the IHSE dynamics is well-defined on the complement of $E_0(\delta) \cap \Delta$ in the product $B_{X_N}(0,R_1) \times B_{V_N}(0,R_2)$. Due to the dispersion of the particles, we know that the image of the product $B_{X_N}(0,R_1) \times B_{V_N}(0,R_2)$ by the IHSE dynamics $\mathcal{T}_\delta$ lies in $B_{X_N}(0,R_1+\delta R_2) \times B_{V_N}(0,R_2)$. If in addition $\mathcal{T}_\delta(Z_N)$ does not belong to $E_1 \cup \Delta$, then it is possible to define the image of $\mathcal{T}_\delta(Z_N)$ by the IHSE dynamics. In other words, this dynamics is now defined on $[0,2\delta]$, provided that $Z_N \in \left( B_{X_N}(0,R_1) \times B_{V_N}(0,R_2) \right) \cap \mathcal{D}_N$ and:
\begin{align}
Z_N \notin E_0 \cup \Delta \cup \mathcal{T}_\delta^{-1}\left( E_1 \cup \Delta \right).
\end{align}
This observation motivates the following definition.

\begin{defin}[IHSE dynamics, final pathological subset of possible multiple collisions $\mathcal{A}(\delta)$]
\label{DEFINBonneDefin_IHSE_0_kd_A(d)}
Let $N$ be a positive integer and let $\varepsilon_0 > 0$ be a positive real number. Let $T,R_1,R_2 > 0$ be three positive real numbers. Let $\delta,\mu > 0$ be two positive real numbers, such that $T/\delta$ is a positive integer $n > 0$.\\
By convention we define $\mathcal{T}_0$ as the identity mapping on the phase space $\mathcal{D}_N$.\\
For any $0 \leq k \leq n-1$, we define recursively the set $F_k$ and the IHSE dynamics $\mathcal{T}_{k\delta}$ on the time interval $[0,k\delta]$ as follows:
\begin{itemize}
\item $F_k \subset \left[ B_{X_N}(0,R_1) \times B_{V_N}(0,R_2) \right] \cap \mathcal{D}_N$ is the subset of the phase space defined as
\begin{align}
F_k = \Big( \bigcap_{j=0}^{k-1} F_j^c \Big) \cup \mathcal{T}_{k\delta}^{-1}( E_k \cup \Delta ),
\end{align}
\item $\mathcal{T}_{(k+1)\delta}$ is the mapping defined from $\bigcap_{j=0}^k F_j^c \subset \left[ B_{X_N}(0,R_1) \times B_{V_N}(0,R_2) \right] \cap \mathcal{D}_N$, taking values in $\left[ B_{X_N}(0,R_1+k\delta R_2) \times B_{V_N}(0,R_2) \right] \cap \mathcal{D}_N$, as:
\begin{align}
\mathcal{T}_{(k+1)\delta}(Z_N) = \mathcal{T}_\delta\left(\mathcal{T}_{k\delta}(Z_N)\right) \hspace{3mm} \forall Z_N \in \bigcap_{j=0}^k F_j^c,
\end{align}
with $\mathcal{T}_\delta$ being the IHSE dynamics on the time interval $[0,\delta]$ introduced in Lemma \ref{LEMMEBonneDefin_IHSE_0_d_}.
\end{itemize}

\noindent
We introduce then the \emph{final pathological set of multiple collisions}, denoted by $\mathcal{A} = \mathcal{A}(\delta)$, as:
\begin{align}
\label{EQUATDefinEnsmbPatho_A(d)}
\mathcal{A}(\delta) = \bigcup_{k = 0}^{n-1} F_k \subset \left( B_{X_N}(0,R_1) \times B_{V_N}(0,R_2) \right) \cap \mathcal{D}_N.
\end{align}
\end{defin}

\noindent
In order to introduce the final pathological set $\mathcal{P}(\delta,\mu_1,\dots,\mu_n)$ of large Jacobian, we will partition the domain of the IHSE dynamics into cells, that contain the initial configurations leading to trajectories experiencing inelastic collisions during the same time intervals $]j\delta,(j+1)\delta[$.

\begin{defin}[First partition of the domain of the IHSE dynamics]
Let $N$ be a positive integer and let $\varepsilon_0 > 0$ be a positive real number. Let $T,R_1,R_2 > 0$ be three positive real numbers. Let $\delta,\mu > 0$ be two positive real numbers, such that $T/\delta$ is a positive integer $n > 0$.\\
For any integer $0 \leq k \leq n-1$, and for any subset $R \subset \{1,\dots,k+1\}$, we define the \emph{cell of first kind, of type $(k,R)$}, denoted by $C_k^R$, as the subset of $\left(\bigcup_{j=0}^k F_j\right)^c$ defined as:
\begin{align}
C_k^R = \big\{ &Z_N \in \Big(\bigcup_{j=0}^k F_j\Big)^c \ /\ \forall\, l \in \{1,\dots,k+1\},\text{ the trajectory originating from }Z_N\text{ presents an inelastic}\nonumber\\
&\hspace{40mm}\text{collision during the time interval } ](l-1)\delta,l\delta[ \text{ if and only if } l \in R\big\}.
\end{align}
\end{defin}

\noindent
The subsets $C_k^R$ form a partition of $\left(F_0 \cup \dots \cup F_k\right)^c$. Observe that $R$ can be the empty set.\\
We are now in position to introduce the final pathological set $\mathcal{P}(\delta,\mu_1,\dots,\mu_n)$, where the Jacobian of the TCT dynamics might be large.

\begin{defin}[Final pathological set $\mathcal{P}(\delta,\mu_1,\dots,\mu_n)$ of large Jacobian]
\label{DEFINEnsmbPatho_P(d)}
Let $N$ be a positive integer and let $\varepsilon_0 > 0$ be a positive real number. Let $T,R_1,R_2 > 0$ be three positive real numbers. Let $\delta,\mu > 0$ be two positive real numbers, such that $T/\delta$ is a positive integer $n > 0$, and let finally $(\mu_1,\dots,\mu_n)$ be a finite family of $n$ positive real numbers.\\
We define the \emph{final pathological set of large Jacobian}, denoted by $\mathcal{P}(\delta,\mu_1,\dots,\mu_n)$, as:
\begin{align}
\mathcal{P}(\delta,\mu_1,\dots,\mu_n) = \bigcup_{k=0}^{n-1} G_k,
\end{align}
with
\begin{align}
G_0^\emptyset = \emptyset,\hspace{5mm} G_0^{\{1\}} = P_0(\delta,\mu_1),
\end{align}
and recursively, for any $1 \leq k \leq n-1$:
\begin{align}
\label{EQUATDefinEnsmb_G_k_delta}
G_k = \hspace{-6mm}\bigcup_{\substack{ R \subset \{1,\dots,k+1\} \\ (k+1) \in R}} \hspace{-6mm}G_k^R.
\end{align}
with
\begin{align}
G_k^R = \mathcal{T}_{k\delta}^{-1}\left(P_k(\delta,\mu_{\vert R \vert})\right) \cap \left( \left(G_0^{R \cap \{1\}}\right)^c \cap \left(G_1^{R \cap \{1,2\}}\right)^c \cap \dots \cap \left( G_{(k-1)}^{R \cap \{1,\dots,k\}}\right)^c \right) \cap C_k^R.
\end{align}
\end{defin}

\noindent
In order to illustrate the complicated definition of the sets $G_k$, we will present completely the decomposition of $G_1$, which is the first non trivial set of this family. We define:
\begin{align}
G_1 = \underbrace{\left(\mathcal{T}_\delta^{-1}(P_1(\delta,\mu_1)) \cap C_\delta^\emptyset\right)}_{G_1^{\{2\}}} \cup \underbrace{\left(\mathcal{T}_\delta^{-1}(P_1(\delta,\mu_2)) \cap P_0(\delta,\mu_1)^c \cap C_\delta^{\{1\}}\right)}_{G_1^{\{1,2\}}}.
\end{align}
where $\mu_2$ is the second element of the family $(\mu_i)_i$. $G_1$ is the union of the two following subsets of the phase space:
\begin{itemize}
\item $G_1^{\{2\}}$ contains the initial configurations for which a single inelastic collision takes place, during the time interval $]\delta,2\delta]$, and such that the Jacobian determinant is not estimated on this interval. \item $G_1^{\{1,2\}}$ contains the initial configurations such that two inelastic collisions take place, in $]0,\delta]$, and in $]\delta,2\delta]$, and such that the Jacobian determinant is estimated in the interval $]0,\delta]$ (due to the term $P_0(\delta,\mu_1)^c$~; following the notations we introduced, note that such a term can then be written as $P_0(\delta,\mu_1) = G_0^{\{1\}}$), but the Jacobian is not estimated in the interval $]\delta,2\delta]$.
\end{itemize}
Therefore, on $\left[ \left(F_0 \cup F_1 \right) \cup \left( G_0 \cup G_1 \right) \right]^c$, the Jacobian determinant of $\mathcal{T}_{2\delta}$ is larger than $\sqrt{\mu_1}$ in $C_\delta^\emptyset$, while it is larger than $\sqrt{\mu_1\mu_2}$ in $C_\delta^{\{1\}}$.\\
\newline
More generally, we can state the following result.

\begin{lemma}[Estimate of the Jacobian of the TCT dynamics in the cells $C_k^R$]
\label{LEMMEEstimJacob}
Let $N$ be a positive integer and let $\varepsilon_0 > 0$ be a positive real number. Let $T,R_1,R_2 > 0$ be three positive real numbers. Let $\delta,\mu > 0$ be two positive real numbers, such that $T/\delta$ is a positive integer $n > 0$, and let finally $(\mu_1,\dots,\mu_n)$ be a finite family of $n$ positive real numbers.\\
For any integer $k \in \{0,\dots,n-1\}$ and any subset $R$ of $\{1,\dots,k+1\}$ such that $(k+1) \in R$, we have:
\begin{align}
&\hspace{50mm}\det\left(\mathcal{T}_{k\delta}(Z_N)\right) \geq \sqrt{\mu_1 \dots \mu_{(\vert R \vert-1)}} \\
&\hspace{60mm}\forall\, Z_N \in \left( \left(G_0^{R \cap \{1\}}\right)^c \cap \left(G_1^{R \cap \{1,2\}}\right)^c \cap \dots \cap \left( G_{(k-1)}^{R \cap \{1,\dots,k\}}\right)^c \right) \cap C_k^R, \nonumber
\end{align}
where $\vert R \vert$ denotes the cardinal of the set $R$.
\end{lemma}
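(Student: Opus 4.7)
The plan is to proceed by strong induction on $k$, decomposing $\mathcal{T}_{k\delta}$ into its $k$ one-step TCT dynamics via the chain rule and bounding each Jacobian factor separately. The key inputs from earlier in the paper are the TCT volume formula of Theorem~\ref{THEOREvoluMesur_TCT_Proce}, which factorises the one-step Jacobian determinant as $[1+\nabla_{X_N}t_c\cdot(V_N-V_N')]\det N$; Proposition~\ref{PROPOPrdSctcVV'TCTProceIn}, which evaluates the first factor to $-\sqrt{1-4\varepsilon_0/|v_i-v_j|^2}$ at an inelastic step; Theorem~\ref{THEORSect2ConservatiMesur}, which forces $|\det N|=1$ in dimension two; and Theorem~\ref{THEOREvoluMesurTCTHSProce}, which yields $|\det\text{Jac}(\mathcal{T}_\delta)|=1$ when the step involves no inelastic collision. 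Combined, a single TCT step contributes a factor $\sqrt{1-4\varepsilon_0/|v_i-v_j|^2}$ (in absolute value) for an inelastic collision of the pair $(i,j)$, and $1$ otherwise.

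The base case $k=0$ is immediate: $R\subset\{1\}$ with $1\in R$ forces $R=\{1\}$, so $|R|-1=0$, the empty product equals $1$, and $\mathcal{T}_0=\operatorname{Id}$. For the inductive step at level $k$, I split on whether $k\in R$. If $k\in R$, I apply the induction hypothesis at level $k-1$ with $R'=R\setminus\{k+1\}$, which is admissible since $k\in R'$: the conditions translate verbatim because $R'\cap\{1,\dots,j+1\}=R\cap\{1,\dots,j+1\}$ for every $j\leq k-2$, and $Z_N\in C_k^R$ implies $Z_N\in C_{k-1}^{R'}$. The IH yields $|\det\text{Jac}(\mathcal{T}_{(k-1)\delta})(Z_N)|\geq\sqrt{\mu_1\cdots\mu_{|R|-2}}$. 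The remaining condition $Z_N\notin G_{k-1}^{R\cap\{1,\dots,k\}}$ then unwinds to $\mathcal{T}_{(k-1)\delta}(Z_N)\notin P_{k-1}(\delta,\mu_{|R|-1})$; since the pair colliding in step $k-1$ must be close in position and has $|v_i-v_j|>2\sqrt{\varepsilon_0}$ (the collision being inelastic), this forces $|v_i-v_j|>2\sqrt{\varepsilon_0}(1+(\sqrt{2}-1)\mu_{|R|-1})$, so the last step contributes a factor of at least $\sqrt{\mu_{|R|-1}}$, and the product gives the claim.

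If instead $k\notin R$, I let $K+1$ be the largest element of $R\cap\{1,\dots,k\}$, handling the subcase $R=\{k+1\}$ trivially (only elastic and free steps occur, so $|\det\text{Jac}(\mathcal{T}_{k\delta})|=1=\sqrt{\mu_1\cdots\mu_0}$). Applying the IH at level $K$ with $R'=R\cap\{1,\dots,K+1\}$ (admissible since $K+1\in R'$ and $|R'|=|R|-1$) gives $|\det\text{Jac}(\mathcal{T}_{K\delta})(Z_N)|\geq\sqrt{\mu_1\cdots\mu_{|R|-2}}$. By the maximality of $K+1$, steps $K+1,\dots,k-1$ of $\mathcal{T}_{k\delta}$ involve no inelastic collisions and each contribute a Jacobian factor of $1$, while step $K$ is inelastic and contributes $\sqrt{\mu_{|R'|}}=\sqrt{\mu_{|R|-1}}$ by the same unwinding of $Z_N\notin G_K^{R'}$ as before, yielding the desired lower bound.

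The main technical obstacle is the bookkeeping required to navigate the recursive definitions of the sets $G_j^S$ and their interplay with the cells $C_j^S$. Two consistency facts must be verified carefully: first, that $Z_N\in C_k^R$ implies $Z_N\in C_j^{R\cap\{1,\dots,j+1\}}$ for every $j\leq k$, which makes the cell condition of the IH automatic (it follows because the cells track actual trajectory behaviour, which is well-defined for $Z_N\notin\mathcal{A}(\delta)$); and second, that if $R$ is listed in increasing order as $r_1<\cdots<r_{|R|}=k+1$, then $|R\cap\{1,\dots,r_l\}|=l$, so that the index of $\mu$ attached to the $l$-th inelastic step is precisely $l$, and the accumulated product telescopes exactly to $\mu_1\cdots\mu_{|R|-1}$ as stated.
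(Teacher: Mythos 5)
The paper states Lemma~\ref{LEMMEEstimJacob} without proof --- the only justification offered is the worked example of $G_1$ immediately preceding it --- so there is no official argument to compare against; your induction supplies exactly the intended one, and it is correct. The chain-rule factorisation of $\det\mathrm{Jac}(\mathcal{T}_{k\delta})$ into one-step determinants, each evaluated via Theorem~\ref{THEOREvoluMesur_TCT_Proce} combined with Proposition~\ref{PROPOPrdSctcVV'TCTProceIn} and Theorem~\ref{THEORSect2ConservatiMesur} for an inelastic step and with Theorem~\ref{THEOREvoluMesurTCTHSProce} for an elastic or free step, together with the bookkeeping that attaches the index $\mu_m$ to the $m$-th inelastic collision through the condition $Z_N\notin G_j^{R\cap\{1,\dots,j+1\}}$ (using $\vert R\cap\{1,\dots,j+1\}\vert=m$ and the fact that $C_k^R\subset C_j^{R\cap\{1,\dots,j+1\}}$), is sound; both branches of your induction telescope to the right product. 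Two small caveats, neither of which is a flaw you introduced. First, the determinant must be read in absolute value, as you do, since Proposition~\ref{PROPOPrdSctcVV'TCTProceIn} produces a negative factor and $\det(N)=\pm 1$. Second, the constants in the paper's definition of $P_j(\delta,\mu)$ do not quite deliver the advertised bound: outside $P_j$ an inelastic pair satisfies $\vert v_i-v_j\vert>2\sqrt{\varepsilon_0}\bigl(1+(\sqrt{2}-1)\mu\bigr)$, which yields a one-step factor $\sqrt{1-\bigl(1+(\sqrt{2}-1)\mu\bigr)^{-2}}$, and one checks that $1-\bigl(1+(\sqrt{2}-1)\mu\bigr)^{-2}<\mu$ for all $0<\mu\leq 1/2$, so the factor is only $\geq c\sqrt{\mu}$ for a universal $c<1$ rather than $\geq\sqrt{\mu}$. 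The lemma as literally stated therefore holds only up to a universal constant per inelastic step; since the number of inelastic collisions is bounded by $K$, this is harmless for the proof of Theorem~\ref{THEORSect3AlexanderIHSWEM}, but it is a discrepancy in the paper's own constants that your write-up silently inherits.
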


\noindent
To complete the proof of Theorem \ref{THEORSect3AlexanderIHSWEM}, we will need two additional tools, namely, another partition of the domain of the IHSE dynamics, and a general result of set theory.

\begin{defin}[Second partition of the domain of the IHSE dynamics]
\label{DEFINDeuxiPartiCelluC^Qpq}
Let $N$ be a positive integer and let $\varepsilon_0 > 0$ be a positive real number. Let $T,R_1,R_2 > 0$ be three positive real numbers. Let $\delta,\mu > 0$ be two positive real numbers, such that $T/\delta$ is a positive integer $n > 0$.\\
For any integers $0 \leq p \leq n$ and $0 \leq q \leq p$, and any subset $Q \subset \{1,\dots,p\}$ with $\vert Q \vert = q$, we define the \emph{cell of second kind, of type $(p,q,Q)$}, denoted by $L_{p,q}^Q$, as the subset of $\left(\bigcup_{j=0}^{n-1} F_j\right)^c$ defined as:
\begin{align}
L_{p,q}^Q = \Big\{ Z_N \in F_0^c \cap \dots \cap F_{(k-1)}^c\ /\ &t \mapsto \mathcal{T}_t(Z_N) \text{ has $p$ collisions on $]0,k\delta[$, $q$ of them are inelastic,} \nonumber\\
&\hspace{5mm} \text{and the label of such inelastic collisions are the elements of $Q$.}\Big\}
\end{align}
\end{defin}

\begin{remar}
Knowing $Q$ does not allow to know directly in which intervals $]j\delta,(j+1)\delta]$ take place the inelastic collisions. In this regard, the cells $L_{p,q}^Q$ are very different from the cells $C_k^R$. Restricted on such cells $L_{p,q}^Q$, the transport $Z_N \mapsto \mathcal{T}_T(Z_N)$ is injective.\\
In addition, and it is one of the key arguments of the proof of Theorem \ref{THEORSect3AlexanderIHSWEM}, we will see that the number of possible cells $L_{p,q}^Q$ is smaller than a constant that does not depend on $\delta$.
\end{remar}

\begin{lemma}
Let $X$ and $Y$ be two sets, let $f:X \rightarrow Y$ be a function, and let $C \subset X$ and $P \subset Y$, such that $f$ is injective on $C$. Then:
\begin{itemize}
\item for any $A \subset X$ we have:
\begin{align}
f^{-1}(P) \cap A \cap C = f^{-1}\left( P \cap f(A \cap C) \right),
\end{align}
\item for any $(A_i)_{i \in I} \subset X^I$ with $I$ any set, and such that the $A_i$ are pairwise disjoint, we have:
\begin{align}
\label{EQUATTheorEnsmb2Decp__P__}
\bigcup_{i\in I} \left[ P \cap f(A_i \cap C) \right] = P \cap f( \Big( \bigcup_{i\in I} A_i \Big) \cap C ),
\end{align}
and the $P \cap f(A_i\cap C)$ are pairwise disjoint.
\end{itemize}
\end{lemma}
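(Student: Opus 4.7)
The statement is a purely set-theoretic claim, so the plan is to prove each part by direct element-chasing, pinpointing where the injectivity of $f$ on $C$ actually intervenes.

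For the first identity $f^{-1}(P) \cap A \cap C = f^{-1}(P \cap f(A\cap C))$, I would treat the two inclusions separately. The inclusion $\subseteq$ is immediate and uses no hypothesis on $f$: if $x \in f^{-1}(P) \cap A \cap C$ then $f(x) \in P$, and since $x \in A \cap C$ we have $f(x) \in f(A \cap C)$, whence $x \in f^{-1}(P \cap f(A\cap C))$. For the reverse inclusion, given such an $x$ (which one reads, in the intended sense, as an element of $C$), one has $f(x) \in P$ and $f(x) = f(y)$ for some $y \in A \cap C$; since both $x$ and $y$ lie in $C$, the injectivity of $f$ on $C$ forces $x = y \in A \cap C$, and hence $x \in f^{-1}(P) \cap A \cap C$. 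This is the only place the injectivity hypothesis is needed for the first identity.

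For the union equality $\bigcup_{i \in I}[P \cap f(A_i \cap C)] = P \cap f\bigl((\bigcup_{i \in I} A_i)\cap C\bigr)$, no hypothesis on $f$ is required: one verifies both inclusions by combining the standard fact that $f$ preserves arbitrary unions, namely $f(\bigcup_i X_i) = \bigcup_i f(X_i)$, with the distributivity of intersection over unions, together with the identity $(\bigcup_i A_i) \cap C = \bigcup_i (A_i \cap C)$. The disjointness assertion, on the other hand, uses injectivity essentially. Assuming by contradiction that some $y$ lies in $P \cap f(A_i \cap C) \cap f(A_j \cap C)$ with $i \neq j$, there exist $x_i \in A_i \cap C$ and $x_j \in A_j \cap C$ with $f(x_i) = f(x_j) = y$; injectivity of $f$ on $C$ then gives $x_i = x_j \in A_i \cap A_j$, contradicting the pairwise disjointness of $(A_i)_{i \in I}$.

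I do not anticipate any substantive obstacle. The only care required is bookkeeping, namely, to keep track of where injectivity on $C$ (as opposed to injectivity on all of $X$, which is neither assumed nor needed) is invoked: precisely in the $\supseteq$ direction of the first identity and in the disjointness part of the second.
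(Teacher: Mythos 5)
The paper states this lemma without proof, so there is no argument of the authors to compare against; your element-chasing verification is correct and supplies what the paper leaves implicit. One point is worth making explicit rather than parenthetical: as literally written, the first identity is false in general, because the right-hand side $f^{-1}\left(P \cap f(A\cap C)\right)$ may contain points of $X \setminus C$ whose image under $f$ coincides with that of a point of $A\cap C$ (take $X=\{1,2\}$, $Y=\{a\}$, $f(1)=f(2)=a$, $C=A=\{1\}$, $P=\{a\}$: the left-hand side is $\{1\}$ while the right-hand side is $\{1,2\}$). You noticed this and read the preimage ``in the intended sense'' as an element of $C$; that is the right fix — equivalently, one should intersect the right-hand side with $C$, or interpret $f^{-1}$ as the preimage under the restriction $f|_C$ — and it is harmless for the way the lemma is used in the paper, where only the inclusion $\subseteq$ (which holds with no hypothesis on $f$) enters the chain of inequalities bounding $\vert G_k \cap L_{p,q}^Q\vert$, and where the preimages are in any case always intersected with subsets of the injectivity cell. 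Your bookkeeping of exactly where injectivity on $C$ is invoked — the reverse inclusion of the first identity and the pairwise disjointness in the second — is accurate, and the remaining steps (images commuting with unions, distributivity of intersection) are standard and need no hypothesis, as you say.
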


\noindent
After the previous preliminary results, we conclude now the proof of Theorem \ref{THEORSect3AlexanderIHSWEM}.

\begin{proof}[Proof of Theorem \ref{THEORSect3AlexanderIHSWEM}]
We consider any arbitrary time $T > 0$, and two cut-off parameters $R_1, R_2 > 0$. Let us denote by $Z_N = \left(x_1,v_1,\dots,x_N,v_N\right)$ the initial configurations of the system of $N$ particles, and we will assume that the initial positions and velocities are such that:
\begin{align}
\left\vert X_N \right\vert = \left\vert \left(x_1,x_2,\dots,x_N\right) \right\vert \leq R_1, \hspace{3mm} \left\vert V_N \right\vert = \left\vert \left(v_1,v_2,\dots,v_N\right) \right\vert \leq R_2.
\end{align}
We introduce a third cut-off parameter $\delta$, meant to be small, such that $T/\delta = n$ is an integer, and such that:
\begin{align}
\delta \leq 1,\hspace{3mm} \delta \leq \frac{2}{3\sqrt{2} R_2} \cdotp
\end{align}
Finally, we introduce a finite family of $n$ positive real numbers $(\mu_i)_i$ all smaller than $1/2$, to be chosen later.\\
According to Lemma \ref{LEMMEBonneDefin_IHSE_0_d_} and Definition \ref{DEFINBonneDefin_IHSE_0_kd_A(d)}, the IHSE dynamics is well-defined on the time interval $[0,T]$, for any initial configuration in $\mathcal{A}(\delta)$ (defined in \eqref{EQUATDefinEnsmbPatho_A(d)}). We will now show that the Lebesgue measure of $\mathcal{A}(\delta)$ is vanishing as $\delta$ goes to zero.\\
\newline
To estimate $\left\vert \mathcal{A}(\delta) \right\vert$ and $\left\vert \mathcal{P}(\mu_1,\dots,\mu_n) \right\vert$, we need an a priori bound on the number of collisions of the trajectories that we have defined. For any initial configuration $Z_N \notin \mathcal{A}(\delta)$, well-defined on $[0,T]$, the maximal number of \emph{inelastic} collisions is strictly smaller than $R_2^2/\varepsilon_0$. Let $K = K(\varepsilon_0,R_2)$ be the largest integer strictly smaller than $R_2^2/\varepsilon_0$.\\
In between two consecutive inelastic collisions, by definition the system can experience only elastic collisions. Now, there exists a universal bound, depending only on the number of particles $N$, denoted by $C_\text{HS}(N)$, on the maximal number of collisions that a system of elastic hard spheres can undergo (see Theorem \ref{THEORBuragFerleKonon}). Therefore, the trajectory starting from $Z_N$ can experience at most:
\begin{align}
K + (K+1) C_{\text{HS}} = C_\text{IHS}
\end{align}
collisions (elastic and inelastic). This bound is also universal, in the sense that it depends only on $N$ and $R_2$, but not on the initial configuration $Z_N$ (taken in $B_{X_N}(0,R_1) \times B_{V_N}(0,R_2)$), nor the time $T$, nor the cut-off parameter $\delta$.\\
As a consequence, since we will use the parameters $\mu_i$ to estimate the modification of the Jacobian determinant of the transport due to the different inelastic collisions, only $K$ parameters $\mu_i$ will be needed.\\
\newline
We start with estimating the measure of the pathological set $\mathcal{P}(\delta,\mu_1,\dots,\mu_K)$, introduced in Definition \ref{DEFINEnsmbPatho_P(d)}. We will estimate separately the measure of the sets $G_k$ introduced in \eqref{EQUATDefinEnsmb_G_k_delta}, by decomposing each of them on the cells $L_{p,q}^Q$ (introduced in Definition \ref{DEFINDeuxiPartiCelluC^Qpq}).\\
By definition, the sets $G_k$ are the unions of the sets $G_k^R$, that are themselves by definition subsets of the cells $C_k^R$. Since by definition a cell $C_k^R$ does not intersect the sets $F_0,\dots,F_{(k-1)}$, the IHSE dynamics $\mathcal{T}_{k\delta}$ is well-defined on the cells $C_k^R$.\\
So, let $0 \leq p \leq \min(n,C_\text{IHS})$ and $0 \leq q \leq \min(p,K)$ be two integers, and let $Q$ be a subset of $\{1,\dots,p\}$, defining a cell $L_{p,q}^Q$. Let $0 \leq k \leq n-1$ be another integer. We will decompose the set $G_k \cap L^Q_{p,q}$ into the sets $G_k^R \cap L_{p,q}^Q$, for all the non empty subsets $R \subset \{1,\dots,k+1\}$ that contain $k+1$ (so that, in particular, all the subsets $R$ that we consider are not empty). We have by definition:
\begin{align}
G_k \cap L_{p,q}^Q &= \hspace{-6mm}\bigcup_{\substack{ R \subset \{1,\dots,k+1\} \\ (k+1) \in R}} \hspace{-6mm} \left( G_k^R \cap L_{p,q}^Q \right) \nonumber\\
&\hspace{-10mm}= \bigcup_{j=1}^{k+1} \hspace{-0mm}\bigcup_{\substack{ R \subset \{1,\dots,k+1\} \\ (k+1) \in R \\ \vert R \vert = j}} \hspace{-6mm} \left( \left[ \mathcal{T}_{k\delta}^{-1}\left(P_k(\mu_{\vert R \vert})\right) \cap \left( \left(G_0^{R \cap \{1\}}\right)^c \cap \left(G_1^{R \cap \{1,2\}}\right)^c \cap \dots \cap \left( G_{(k-1)}^{R \cap \{1,\dots,k\}}\right)^c \right) \cap C_k^R \right] \cap L_{p,q}^Q \right).
\end{align}
In order to simplify the notations, let us denote by $H_k^R$ the set:
\begin{align}
H_k^R = \left(G_0^{R \cap \{1\}}\right)^c \cap \left(G_1^{R \cap \{1,2\}}\right)^c \cap \dots \cap \left( G_{(k-1)}^{R \cap \{1,\dots,k\}}\right)^c.
\end{align}
In the cell $C_k^R$ with $\vert R \vert = j$, the trajectories experience $j-1$ inelastic collisions on $]0,k\delta]$, the last collision taking place in $]k\delta,(k+1)\delta]$. We find:
\begin{align}
&\left\vert G_k \cap L_{p,q}^Q \right\vert \leq \sum_{j=1}^{k+1} \Bigg\vert \bigcup_{\substack{ R \subset \{1,\dots,k+1\} \\ k+1 \in R \\ \vert R \vert = j}} \hspace{-6mm} \left( \left[ \mathcal{T}_{k\delta}^{-1}\left(P_k(\mu_j)\right) \cap H_k^R \cap C_k^R \right] \cap L_{p,q}^Q \right) \Bigg\vert.
\end{align}
Let us now use on the one hand that $\mathcal{T}_{k\delta}$ is injective on $L_{p,q}^Q$, and on the other hand that the cells $\left(C_k^R\right)_R$ form a partition of the initial configurations of $F_0^c \cap \dots \cap F_{(k-1)}^c$, on which in particular $\mathcal{T}_{k\delta}$ is defined. 
We deduce:
\begin{align}
\Bigg\vert \bigcup_{\substack{ R \subset \{1,\dots,k+1\} \\ (k+1) \in R \\ \vert R \vert = j}}& \hspace{-6mm} \left( \mathcal{T}_{k\delta}^{-1}\left(P_k(\mu_{k+1})\right) \cap C_k^R \cap H_k^R \cap L_{p,q}^Q \right) \Bigg\vert \nonumber\\
&\leq \sum_{\substack{ R \subset \{1,\dots,k+1\} \\ (k+1) \in R \\ \vert R \vert = j}} \Big\vert \mathcal{T}_{k\delta}^{-1}\left(P_k(\mu_j)\right) \cap C_k^R \cap H_k^R \cap L_{p,q}^Q \Big\vert \nonumber\\
&\leq \sum_{\substack{ R \subset \{1,\dots,k+1\} \\ (k+1) \in R \\ \vert R \vert = j}} \Big\vert \mathcal{T}_{k\delta}^{-1}\Big(P_k(\mu_j) \cap \mathcal{T}_{k\delta}\left(C_k^R \cap L_{p,q}^Q\right) \Big) \cap H_k^R \Big\vert \nonumber\\
&\hspace{50mm}\text{(because }\mathcal{T}_{k\delta}\text{ is injective on the cell }L_{p,q}^Q\text{)}\nonumber\\
&\leq \frac{1}{\sqrt{\mu_1\dots\mu_{j-1}}} \sum_{\substack{ R \subset \{1,\dots,k+1\} \\ (k+1) \in R \\ \vert R \vert = j}} \Big\vert P_k(\mu_j) \cap \mathcal{T}_{k\delta}\left(C_k^R \cap L_{p,q}^Q\right) \Big)\Big\vert \nonumber\\
&\hspace{-22mm}\text{(by Lemma \ref{LEMMEEstimJacob}, for initial configurations in }H_k^R\text{, the Jacobian determinant of }\mathcal{T}_{k\delta}\text{ is larger than }\sqrt{\mu_1\dots\mu_{j-1}})\nonumber\\
&\leq \frac{1}{\sqrt{\mu_1\dots\mu_{j-1}}} \Big\vert P_k(\mu_j) \cap \hspace{-6mm} \bigcup_{\substack{ R \subset \{1,\dots,k+1\} \\ (k+1) \in R \\ \vert R \vert = j}} \hspace{-6mm} \mathcal{T}_{k\delta} \left(C_k^R \cap L_{p,q}^Q\right) \Big\vert \leq \frac{1}{\sqrt{\mu_1\dots\mu_{j-1}}} \big\vert P_k(\mu_j) \big\vert.
\end{align}
In the last line, we used \eqref{EQUATTheorEnsmb2Decp__P__} in a crucial way to remove the sum over all the subsets $R$, relying once again on the fact that $\mathcal{T}_{k\delta}$ is injective on $L_{p,q}^Q$ and that the family $\left(C_k\cap L_{p,q}^Q\right)_R$ forms a partition of $(\cup_R C_k^R)\cap L_{p,q}^Q$.\\
In the end, we obtain:
\begin{align}
\left\vert \mathcal{P}(\delta,\mu_1,\dots,\mu_n) \right\vert &\leq \sum_{k=0}^{n-1} \left\vert G_k \right\vert \leq \sum_{p=0}^{\min(n,C_\text{IHS})}\sum_{q=0}^{\min(p,K)}\sum_{\substack{Q \subset \{1,\dots,p\}\\ \vert Q \vert = q}} \sum_{k=0}^{n-1} \left\vert G_k \cap L_{p,q}^Q \right\vert \nonumber\\
&\leq \sum_{p=0}^{\min(n,C_\text{IHS})}\sum_{q=0}^{\min(p,K)}\sum_{\substack{Q \subset \{1,\dots,p\}\\ \vert Q \vert = q}} \sum_{k=0}^{n-1} \sum_{j=1}^{k+1} \Bigg\vert \bigcup_{\substack{ R \subset \{1,\dots,k+1\} \\ k+1 \in R \\ \vert R \vert = j}} \hspace{-6mm} \left( \left[ \mathcal{T}_{k\delta}^{-1}\left(P_k(\mu_j)\right) \cap H_k^R \cap C_k^R \right] \cap L_{p,q}^Q \right) \Bigg\vert \nonumber\\
&\leq \sum_{p=0}^{\min(n,C_\text{IHS})}\sum_{q=0}^{\min(p,K)}\sum_{\substack{Q \subset \{1,\dots,p\}\\ \vert Q \vert = q}} \sum_{k=0}^{n-1} \sum_{j=1}^{k+1} \frac{1}{\sqrt{\mu_1\dots\mu_{j-1}}} \vert P_k(\mu_j) \vert \nonumber\\
&\leq \sum_{p=0}^{\min(n,C_\text{IHS})}\sum_{q=0}^{\min(p,K)} \sum_{\substack{Q \subset \{1,\dots,p\}\\ \vert Q \vert = q}} \sum_{k=0}^{n-1} \sum_{j=1}^{k+1} \frac{C(N,T,\varepsilon_0,R_1,R_2) \delta \mu_j}{\sqrt{\mu_1\dots\mu_{j-1}}} \cdotp
\end{align}
Finally, let us observe that the last estimate on the measure of $\mathcal{P}(\mu_1,\dots,\mu_K)$ can be improved in a dramatic way: although the sum over $k$ cannot be estimated better, the sum over $j$ can. Indeed, $j$ was introduced to consider all the possible cardinals of the subsets $R \subset \{1,\dots,k+1\}$. If we keep in mind that $R$ was used to label the time intervals $]l\delta,(l+1)\delta]$ in which the trajectories experienced inelastic collisions, we have actually $\vert R \vert \leq K$, so that:
\begin{align}
\label{EQUATCntrlMesur__P__}
\left\vert \mathcal{P}(\delta,\mu_1,\dots,\mu_K) \right\vert &\leq \sum_{p=0}^{\min(n,C_\text{IHS})}\sum_{q=0}^{\min(p,K)}\sum_{\substack{Q \subset \{1,\dots,p\}\\ \vert Q \vert = q}} \sum_{k=0}^{n-1} \sum_{j=1}^{\min(k+1,K)} \frac{C(N,T,\varepsilon_0,R_1,R_2) \delta \mu_j}{\sqrt{\mu_1\dots\mu_{j-1}}} \nonumber\\
&\leq C(N,T,\varepsilon_0,R_1,R_2) \sum_{j=1}^K \frac{\mu_j}{\sqrt{\mu_1\dots\mu_{j-1}}},
\end{align}
using in the last line that by definition $n = T/\delta$, so that $\sum_{k=0}^{n-1} \delta = T$, and that $C_\text{IHS}$ and $K$ depend only on $N$ and $R_2$ (as a consequence of the theorem of Burago-Ferleger-Kononenko \cite{BuFK998}).\\
Let us observe that we managed to construct a pathological set $\mathcal{P}(\mu_1,\dots,\mu_K)$ outside which the Jacobian determinant of the transport of the IHSE system is controlled in terms of the parameter $\mu_i$, and such that, according to \eqref{EQUATCntrlMesur__P__}, the measure of $\mathcal{P}(\mu_1,\dots,\mu_K)$ depends only on these parameters $\mu_i$. In particular, the measure of $\mathcal{P}(\mu_1,\dots,\mu_K)$ is independent from the time truncation parameter $\delta$. This is a crucial result to adapt Alexander's approach to the model we are considering in the present article.\\
\newline
\noindent
We can now estimate $\left\vert \mathcal{A}(\delta) \right\vert$. We have:
\begin{align}
\left\vert \mathcal{A}(\delta) \right\vert \leq \sum_{k=0}^{n-1} \left\vert F_k \right\vert.
\end{align}
To estimate the measure of each of the sets $F_k$, we perform a decomposition similar as for the study of $G_k$. More precisely, let us assume that $F_0,\dots,F_{(k-1)}$ and $G_0,\dots,G_{(k-1)}$ are defined, such that the transport $\mathcal{T}_{k\delta}$ is well-defined on $F_0^c \cap \dots \cap F_{(k-1)}^c$, and its Jacobian determinant is estimated outside $G_0\cup\dots\cup G_{(k-1)}$.\\
First, we decompose:
\begin{align}
F_k = \left[ F_k \cap \left(G_0\cup\dots\cup G_{(k-1)}\right)^c\right] \cup \left[ F_k \cap \left(G_0\cup\dots\cup G_{(k-1)}\right) \right].
\end{align}
The measure of the second term is of course estimated by the measure of $G_0\cup\dots\cup G_{(k-1)}$, so that we have:
\begin{align}
\bigcup_{k=0}^{n-1} \left[ F_k \cap \left(G_0\cup\dots\cup G_{(k-1)}\right) \right] \subset \bigcup_{k=0}^{n-1} \left[ F_k \cap \mathcal{P}(\mu_1,\dots,\mu_K) \right] \subset \mathcal{P}(\mu_1,\dots,\mu_K).
\end{align}
We will then focus on the first term. As for $G_k$ we decompose, first, on the cells $L_{p,q}^Q$, then on the cells $C_{(k-1)}^R$. We write:
\begin{align}
&\hspace{-20mm}\left\vert F_k \cap \left(G_0^c\cap\dots\cap G_{(k-1)}^c\right) \cap L_{p,q}^Q \right\vert \nonumber \\
&\leq \sum_{j=0}^k \sum_{\substack{R \subset \{1,\dots,k\} \\ \vert R \vert = j}} \Big\vert \mathcal{T}_{k\delta}^{-1} \left( E_k \cup \Delta \cap \mathcal{T}_{k\delta} \left(C_{(k-1)}^R \cap L_{p,q}^Q \right) \right) \cap \left(G_0^c\cap\dots\cap G_{(k-1)}^c\right) \Big\vert \nonumber\\
&\leq \sum_{j=0}^k \frac{1}{\sqrt{\mu_1\dots\mu_j}}\sum_{\substack{R \subset \{1,\dots,k\} \\ \vert R \vert = j}} \big\vert E_k \cup \Delta \cap \mathcal{T}_{k\delta}\left(C_{(k-1)}^R \cap L_{p,q}^Q\right) \big\vert \nonumber\\
&\leq \sum_{j=0}^k \frac{1}{\sqrt{\mu_1\dots\mu_j}} \big\vert E_k \cup \Delta \big\vert \nonumber\\
&\leq \sum_{j=0}^k \frac{C(N,T,R_1,R_2)\delta^2}{\sqrt{\mu_1\dots\mu_j}} \cdotp
\end{align}
Using that the total number of collisions is uniformly bounded, as well as the number of inelastic collisions, we find in the end:
\begin{align}
\left\vert \mathcal{A}(\delta) \right\vert \leq \sum_{p=0}^{\min(n,C_\text{IHS})}\sum_{q=0}^{\min(p,K)}\sum_{\substack{Q\subset\{1,\dots,p\} \\ \vert Q \vert = q}}\sum_{k=0}^{n-1} \sum_{j=0}^K \frac{C(N,T,R_1,R_2)\delta^2}{\sqrt{\mu_1\dots\mu_j}} + \left\vert \mathcal{P}(\mu_1,\dots,\mu_K) \right\vert,
\end{align}
so that, using finally that $\sum_{k=0}^{n-1}\delta^2 = T\delta$, we obtain:
\begin{align}
\left\vert \mathcal{A}(\delta) \right\vert \leq C(N,T,R_1,R_2)\sum_{j=0}^K \frac{\delta}{\sqrt{\mu_1\dots\mu_j}} + C(N,T,\varepsilon_0,R_1,R_2) \sum_{j=1}^K \frac{\mu_j}{\sqrt{\mu_1\dots\mu_{j-1}}} \cdotp
\end{align}
In order to conclude, it remains to choose the parameters $\mu_i$ in an appropriate manner. Let us observe that for $h>0$ small enough such that $h^{3/2} \leq 1/2$, if we define:
\begin{align}
\forall 1\leq i \leq K,\ \mu_i = h^{\left(3/2\right)^i},
\end{align}
we have, on the one hand:
\begin{align}
\mu_1 = \frac{\mu_2}{\sqrt{\mu_1}} = \dots = \frac{\mu_K}{\sqrt{\mu_1\dots\mu_{K-1}}} = h^{3/2},
\end{align}
and, on the other hand:
\begin{align}
\frac{1}{\sqrt{\mu_1\dots\mu_K}} = \frac{1}{h^{\frac{3}{2}\left(\left(\frac{3}{2}\right)^K-1\right)}}\cdotp
\end{align}
Therefore, if we choose $h$ as:
\begin{align}
h = \delta^{\left(\frac{3}{2}\right)^{K+1}},
\end{align}
then, up to multiplicative constants that depend only on $N$, $T$, $\varepsilon_0$, $R_1$ and $R_2$, we find the following equivalents for $\left\vert \mathcal{A}(\delta) \right\vert$ and $\left\vert \mathcal{P}(\delta,\mu_1,\dots,\mu_K) \right\vert$ as $\delta \rightarrow 0$, and :
\begin{align}
\left\vert \mathcal{A}(\delta) \right\vert, \left\vert \mathcal{P}(\mu_1,\dots,\mu_K) \right\vert \sim \delta^{\left(\frac{2}{3}\right)^K}.
\end{align}
We can now consider the intersection of the pathological sets $\vert \mathcal{A}(\delta) \vert$ and $\mathcal{P}(\mu_1,\dots,\mu_K)$ when $\delta$ is sent to zero, which provides a set of zero measure in $B_{X_N}(0,R_1) \times B_{V_N}(0,R_2)$, such that the flow of IHSE is defined on its complement on $[0,T]$. Repeating the argument for three countable sequences $\left(R_{1,n}\right)_n$, $\left(R_{2,n}\right)_n$ and $\left(T_n\right)_n$ that all tend to infinity as $n$ goes to infinity, we complete the proof of Theorem \ref{THEORSect3AlexanderIHSWEM}.
\end{proof}

\begin{remar}
Our proof of Theorem \ref{THEORSect3AlexanderIHSWEM} shows that an Alexander's-like result holds for any system of particles that can undergo only a finite number of inelastic collisions, uniformly on any compact set of the phase space, and such that the scattering $S$ is not shrinking too much the measure, namely, such that:
\begin{align}
\vert S(A) \vert \geq C_K \vert A \vert,
\end{align}
where $C_K$ is a local constant, that may depend on the compact set $K$ of the phase space on which we consider the restriction of the scattering mapping $S$. Indeed, to prove Alexander's theorem for the system of inelastic hard spheres with emission, we did not need that the measure was preserved.
\end{remar}

\begin{remar}
In the case of the IHSE system, when the kinetic energy in bounded, we proved that only a finite number of collisions can take place in the system, uniformly in time, and uniformly in the initial configurations. We used this result in a crucial way, to decompose the phase space in a finite number of cells, such that the flow was injective when restricted to such cells.\\
The absence of such a uniform upper bound is a fundamental difference between our model, and the classical inelastic model \eqref{EQUAT_Loi_ColliCoeffRestiRFixe}, with fixed restitution coefficient. In particular, since for such a model it can be proved that the inelastic collapse can take place, even for a very small kinetic energy, there is no hope to obtain an upper bound on the number of collisions in compact subsets of the phase space. Therefore the method used to prove Theorem \ref{THEORSect3AlexanderIHSWEM} does not apply to the inelastic hard sphere system with fixed restitution coefficient, and its well-posedness remains a challenging open question.
\end{remar}

\subsection{Direct proof of Theorem \ref{THEORSect3AlexanderIHSWEM} in the case of a single inelastic collision}

Let us note that the proof of Alexander's theorem can be directly adapted for systems of inelastic hard spheres with emission, provided that the total energy of the system is small. In particular, a proof can be provided without  without using the strong result of \cite{BuFK998} concerning the uniform number of collisions for a system of $N$ hard spheres.\\
However, the assumption on the smallness of the kinetic energy is quite stringent. Namely, if we assume that the initial energy is so small that only a single inelastic collision can take place, the proof that can be found in \cite{GSRT013} can be adapted in the following way. The main idea is to consider non-uniform time steps, that depend not only on the number $n$ of sub-intervals used to decompose $[0,T]$, but also on the $k$-th iteration of the TCT dynamics that allowed to define the transport. We obtain the following result, which is a particular case of Theorem \ref{THEORSect3AlexanderIHSWEM}.

\begin{theor}[Alexander's theorem for the inelastic hard sphere with emission model, in the low energy case]
\label{THEORSect3AlexanderIHSWEMFaiblEnerg}
Let $N$ be any positive integer, and let $\varepsilon_0 > 0$ be a positive real number. On the subset of the phase space $\mathcal{D}_N$ such that
\begin{align}
\label{EQUATHypotFaiblEnerg}
\sum_{k=1}^N \frac{\vert v_k \vert^2}{2} < 2 \varepsilon_0,
\end{align}
the same results as in Theorem \ref{THEORSect3AlexanderIHSWEM} hold.

\end{theor}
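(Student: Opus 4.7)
The plan is to reduce the statement to the classical Alexander theorem for elastic hard spheres, exploiting the key observation that hypothesis \eqref{EQUATHypotFaiblEnerg} forces each trajectory to experience at most one inelastic collision. Indeed, after any inelastic collision the total kinetic energy drops by exactly $\varepsilon_0$ (by \eqref{EQUATSect1DissipatioEnCin}), and since the initial energy is strictly below $2\varepsilon_0$, the post-collision energy falls strictly below $\varepsilon_0$. Consequently, from that point on, for any pair $(i,j)$ one has
\begin{align*}
|v_i - v_j|^2 \leq 2\bigl(|v_i|^2 + |v_j|^2\bigr) \leq 4 \sum_{k=1}^N \frac{|v_k|^2}{2} < 4\varepsilon_0,
\end{align*}
placing every subsequent collision in the purely elastic regime \eqref{EQUATSect1LoideColliElast}. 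Thus every trajectory decomposes as an elastic segment, possibly followed by a single inelastic collision, followed by a second elastic segment.

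I would then adapt the classical Alexander argument as presented in \cite{GSRT013}. Decompose $]0,T]$ into $n$ sub-intervals of \emph{non-uniform} lengths $\delta_1, \ldots, \delta_n$ with $\sum_k \delta_k = T$, and build the IHSE dynamics step by step. At iteration $k$, exclude a pathological set $E_k$ of initial configurations whose trajectory could produce simultaneous multiple collisions in the $k$-th sub-interval (as in Definition \ref{DEFINPatho_Sets_E_P_}), together with a pathological set $P_k$ of configurations where the pair about to collide has near-critical relative velocity ($|v_i - v_j|^2$ within $\mu$ of $4\varepsilon_0$). On every sub-interval other than the unique one (if any) containing the inelastic collision, the TCT dynamics is purely elastic and preserves Lebesgue measure by Theorem \ref{THEOREvoluMesurTCTHSProce}. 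On the single sub-interval carrying the inelastic collision, Proposition \ref{PROPOPrdSctcVV'TCTProceIn} combined with Theorem \ref{THEORSect2ConservatiMesur} implies that the flow stretches volumes by a factor at most $1/\sqrt{1 - 4\varepsilon_0/|v_i-v_j|^2} \leq 1/\sqrt{\mu}$.

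The non-uniform choice of $\delta_k$ is used precisely to balance these estimates. Following Lemma \ref{LEMMEMesurEnsmbPatho_E_kd_P_kd}, one has $|E_k| \lesssim \delta_k^2$ and $|P_k| \lesssim \delta_k \mu$; pulling back through at most one inelastic collision multiplies each contribution by at most $1/\sqrt{\mu}$. Summing over $k$, the total measure of excluded initial configurations is bounded by $C\sum_k \delta_k^2/\sqrt{\mu} + C\mu \sum_k \delta_k \leq C(\max_k \delta_k) T/\sqrt{\mu} + C T \mu$, which is arbitrarily small after first choosing $\max_k \delta_k$ small compared to $\mu$ and then sending $\mu \to 0$. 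The usual diagonal extraction over a countable sequence of cut-offs $R_1, R_2, T$ then yields global well-posedness almost everywhere, together with the finiteness of the number of collisions on $[0,T]$, since the two elastic segments each contain only finitely many collisions a.e.\ by classical Alexander.

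The main obstacle is controlling the pathological set on the single sub-interval containing the inelastic collision, where the local measure must be small enough to survive the $1/\sqrt{\mu}$ magnification. This is manageable precisely because only one inelastic collision can occur: the elaborate cascade of cut-offs $\mu_1, \ldots, \mu_K$ and the Burago--Ferleger--Kononenko bound required in the proof of Theorem \ref{THEORSect3AlexanderIHSWEM} are not needed, since here the Jacobian magnification is applied exactly once and the combinatorial cell structure reduces to specifying at most one \emph{inelastic step} among the $n$ sub-intervals, the identity of which can be absorbed into the choice of the non-uniform time steps $\delta_k$.
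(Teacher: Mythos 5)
Your overall strategy coincides with the paper's: exploit that \eqref{EQUATHypotFaiblEnerg} forces at most one inelastic collision, so the Jacobian magnification $1/\sqrt{\mu}$ is applied at most once, $P_k$ need only be pulled back through the measure-preserving elastic flow, and neither the cascade $\mu_1,\dots,\mu_K$ nor the Burago--Ferleger--Kononenko bound is required. However, there is a genuine quantitative gap in your estimate of the pulled-back sets $E_k$. The map $\mathcal{T}_{m_k}$ is not injective on the whole good set; it is injective only on each cell $C_{m_{k-1}}^{\emptyset}, C_{m_{k-1}}^{\{1\}},\dots,C_{m_{k-1}}^{\{k\}}$, indexed by which sub-interval (if any) hosts the single inelastic collision. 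The preimage $\mathcal{T}_{m_k}^{-1}(E_{m_k})$ therefore consists of up to $k+1$ sheets, each of measure at most $\vert E_{m_k}\vert/\sqrt{\mu}$, and the images of distinct cells can genuinely overlap (the paper stresses that the flow is not injective), so one cannot sum the intersections $\vert E_{m_k}\cap\mathcal{T}_{m_k}(C^{\{i\}})\vert$ into a single $\vert E_{m_k}\vert$. The correct bound is thus
\begin{align*}
\sum_{k} \sum_{j=0}^{k}\frac{C\,\delta_k^2}{\sqrt{\mu}} \;\sim\; \frac{C}{\sqrt{\mu}}\sum_k (k+1)\,\delta_k^2,
\end{*align*}
\begin{align*}
\text{and not}\quad \frac{C}{\sqrt{\mu}}\sum_k \delta_k^2 \;\leq\; \frac{C}{\sqrt{\mu}}\Big(\max_k\delta_k\Big)\,T.
\end{align*}
With uniform steps $\delta_k = T/n$ the corrected sum equals $\frac{C T^2}{\sqrt{\mu}}\cdot\frac{n+1}{2n}$, which does \emph{not} vanish as $n\to\infty$, so your conclusion ``choose $\max_k\delta_k$ small compared to $\mu$'' does not follow from the bound you wrote.

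This is exactly why the paper's proof makes the non-uniform choice $\delta_n(k) = T/(k\ln n)$ explicit and uses it in the estimate: then $\sum_k (k+1)\delta_n(k)^2 \sim T^2/\ln n \to 0$, which defeats the extra factor of $k$ coming from the number of possible backward histories. You announce non-uniform time steps at the outset and even remark that the location of the inelastic step ``can be absorbed into the choice of the non-uniform time steps,'' but your actual estimate never uses the non-uniformity (the bound $\sum_k\delta_k^2\leq(\max_k\delta_k)\sum_k\delta_k$ holds just as well for uniform steps), so the key mechanism is missing. To repair the argument, insert the sum over the $k+1$ cells in the pullback of $E_{m_k}$, obtaining $\sum_k\sum_{j\leq k}\delta_k^2/\sqrt{\mu}$, and then choose $\delta_k$ decaying like $1/k$ (up to the normalization $\sum_k\delta_k\to T$) as in \eqref{EQUATChoixDelta}; a single uniform $\mu$ then suffices with your two-stage limit, which is a mild simplification of the paper's $k$-dependent $\mu_n(k)$.

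**Correction note:** The first display above should read
\begin{align*}
\sum_{k} \sum_{j=0}^{k}\frac{C\,\delta_k^2}{\sqrt{\mu}} \;\sim\; \frac{C}{\sqrt{\mu}}\sum_k (k+1)\,\delta_k^2,
\end{align*}
(the closing command was mistyped).
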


\begin{remar}
The low energy condition \eqref{EQUATHypotFaiblEnerg} implies that at most one inelastic collision can take place.\\
Indeed, after the first inelastic collision, the total kinetic energy after such a collision becomes smaller than $\varepsilon_0$. We denote by $w_k$ ($1 \leq k \leq N$) the velocities of the particles of the system after this first inelastic collision. In this case, for any pair of particles, the relative velocity $w_i-w_j$ after this first inelastic collision satisfies:
\begin{align}
\vert w_i - w_j \vert^2 \leq 2 \left( \vert w_i \vert^2 + \vert w_j \vert^2 \right) \leq 4 \left( \sum_{k=1}^N \frac{\vert w_k \vert^2}{2} \right) < 4 \varepsilon_0.
\end{align}
Therefore, no inelastic collision can further take place in the particle system.
\end{remar}

\begin{proof}[Proof of Theorem \ref{THEORSect3AlexanderIHSWEMFaiblEnerg}]

We consider:
\begin{align}
\label{EQUATChoixDelta}
\delta_n(k) = \frac{T}{k \ln(n)},
\end{align}
together with, for $k \geq 2$:
\begin{align}
\label{EQUATChoix_Mu__}
\mu_n(k) = \alpha\frac{\ln^p(k)}{\ln^q(n)}, \hspace{5mm} \text{with} \hspace{5mm} 0 < p < q < 1,
\end{align}
and $\alpha$ small enough such that $\mu_n(k) \leq 1/2$.
In that case, considering $k$ iterations of the process allows to reach the final time, as $n\rightarrow +\infty$:
\begin{align}
\sum_{k=1}^n \delta_n(k) = \sum_{k=1}^n \frac{T}{k \ln(n)} \sim T.
\end{align}
As in the proof of Theorem \ref{THEORSect3AlexanderIHSWEM}, we eliminate step by step subsets $F_{0+\delta_1+\dots+\delta_k}$ and $G_{0+\delta_1+\dots+\delta_k}$ of initial configurations that allow in their complement, respectively, to consider a flow of particle which is well-defined until time $\delta_1+\dots+\delta_{k+1}$ on the one hand, and such that the Jacobian determinant of such a flow is not too small on the other hand. However, in the case when at most one inelastic collision can take place for any trajectory, the decomposition presented in proof of Theorem \ref{THEORSect3AlexanderIHSWEM} simplifies.\\
We write $\sum_{j=0}^k \delta_j = m_k$. The definition of the sets $F_{0+\delta_1+\dots+\delta_k} = F_{m_k}$ (taking as a convention that $\delta_0 = 0$), outside which the flow will be defined, is similar. We define the first subset as:
\begin{align}
F_0 = E_0 \cup \Delta,
\end{align}
outside which the flow is defined on the time interval $]0,\delta_1]$ (where $E_0$ is defined as in \eqref{EQUATDefinEnsmbPatho_E_kd}), and the others as:
\begin{align}
F_{m_k} = \mathcal{T}_{m_k}^{-1}\left( E_{m_k} \right) \cap F_0^c \cap \dots \cap F_{m_{k-1}}^c,
\end{align}
outside which the flow is defined on the time interval $]0,m_{k+1}] =\ ]0,\sum_{j=0}^{k+1} \delta_j]$. One of the main simplifications takes place for the sets $G_{m_k}$, we define them recursively as well. The first set will be:
\begin{align}
G_0 = P_0(\mu_n(1)),
\end{align}
set outside which the Jacobian determinant of $\mathcal{T}_{\delta_1}$ is larger than $\sqrt{\mu_1}$ (where $P_0(\mu_1)$ is defined as in \eqref{EQUATDefinEnsmbPatho_P_kd}). For $k \geq 1$, as a main difference with the general case treated in Theorem \ref{THEORSect3AlexanderIHSWEM}, it has to be observed that to define the sets $G_{m_k}$, we do not need to decompose the set $ F_0^c \cap \dots \cap F_{m_{k-1}}^c$. The reason is that we do not need to separate the configurations depending on the number of inelastic collisions they undergo on the time interval $]0,m_k] =\ ]0,\sum_{j=0}^k \delta_j]$. Indeed, the set $P_{m_k}$ is introduced to estimate the Jacobian of the TCT dynamics associated to the trajectory for the time interval $]m_k,m_{k+1}] =\ ]\sum_{j=0}^k \delta_j,\sum_{j=0}^{k+1} \delta_j]$, when such a TCT dynamics describes an inelastic collision. But since by assumption a single inelastic can take place on the whole time interval $]0,T]$ for any trajectory, it is relevant to consider only the trajectories reaching $P_{m_k}$ without any inelastic collision on the time interval $]0,m_k]$. This means that there is ``a single preimage'' of $P_{m_k}$ to be considered among the pathological initial configurations, the one obtained with the classical elastic hard sphere transport (that is, the one obtained only with free flow or elastic TCT dynamics). As a consequence, and reusing the notations introduced in the proof of Theorem \ref{THEORSect3AlexanderIHSWEM} concerning the cells $C_k^R$, we define, for $k\geq 1$:
\begin{align}
G_{m_k} = \mathcal{T}_{m_k}^{-1}\left( P_{m_k}(\mu_{k+1}) \right) \cap F_0^c \cap \dots \cap F_{m_{k-1}}^c \cap C_{m_{k-1}}^\emptyset.
\end{align}
Defining now the set $\mathcal{P}(n)$ as:
\begin{align}
\mathcal{P}(n) = \bigcup_{k=0}^{n-1} G_{m_k},
\end{align}
we can directly estimate its measure, because $\mathcal{T}_{m_k}$ is the elastic hard sphere transport on $C_{m_{k-1}}^\emptyset$, so it preserves the measure, and we have:
\begin{align}
\left\vert \mathcal{P}(n) \right\vert &\leq \sum_{k=0}^{n-1} \left\vert \mathcal{T}_{m_k}^{-1}\left( P_{m_k}(\mu_{k+1}) \right) \cap C_{m_{k-1}}^\emptyset \right\vert \nonumber\\
&\leq \sum_{k=0}^{n-1} \left\vert P_{m_k}(\mu_{k+1}) \right\vert \nonumber\\
&\leq C \sum_{k=1}^{n} \delta_n(k) \mu_n(k), 
\end{align}
where $C$ is a constant that depends only on $N$, $T$, $\varepsilon_0$, $R_1$ and $R_2$.\\
As for $\mathcal{A}(n)$, the union of the sets $F_{m_k}$, as for the proof of Theorem \ref{THEORSect3AlexanderIHSWEM}, we have to consider a decomposition of the initial trajectories, depending when the inelastic collisions take place. Nevertheless, using again the fact that a single inelastic collision can take place for any trajectory, the number of possibles cells is much smaller than in the general case. We perform the following decomposition. For the second set $F_{\delta_1}$, we write:
\begin{align}
F_{\delta_1} &= \mathcal{T}_{\delta_1}^{-1}(E_{\delta_1}) \cap F_0^c \nonumber\\
&= \left( \mathcal{T}_{\delta_1}^{-1}(E_{\delta_1}) \cap F_0^c \cap P_0(\mu_n(1)) \right) \nonumber\\
&\hspace{20mm}\cup \Big[ \left( \mathcal{T}_{\delta_1}^{-1}(E_{\delta_1}) \cap F_0^c \cap P_0(\mu_n(1))^c \cap C_0^\emptyset \right) \cup \left( \mathcal{T}_{\delta_1}^{-1}(E_{\delta_1}) \cap F_0^c \cap P_0(\mu_n(1))^c \cap C_0^{\{1\}} \right) \Big],
\end{align}
decomposing in three terms, the first that will be absorbed in $\mathcal{P}(n)$, while the second and third terms correspond respectively to the preimages (outside $P_0$) of the pathological set $E_{\delta_1}$, obtained by the elastic hard sphere flow and by an inelastic TCT dynamics respectively. More generally we decompose:
\begin{align}
F_{m_k} &= \Big( F_{m_k} \cap \bigcup_{l=0}^{k-1} G_{m_l} \Big) \nonumber\\
&\hspace{2mm} \cup \Big[ \Big( F_{m_k} \cap \bigcap_{l=0}^{k-1} G_{m_l}^c \cap C_{m_{k-1}}^\emptyset \Big) \cup \bigcup_{i=1}^k \Big( F_{m_k} \cap \bigcap_{l=0}^{k-1} G_{m_l}^c \cap C_{m_{k-1}}^{\{i\}} \Big) \Big],
\end{align}
so that:
\begin{align}
\left\vert \mathcal{A}(n) \right\vert &\leq \bigg\vert \bigcup_{k=0}^{n-1} \Big( F_{m_k} \cap \bigcup_{l=0}^{k-1} G_{m_l} \Big) \bigg\vert \nonumber\\
&\hspace{5mm} + \bigg\vert \bigcup_{k=0}^{n-1} \Big( F_{m_k} \cap C_{m_{k-1}}^\emptyset \Big) \bigg\vert  +  \bigg\vert \bigcup_{k=0}^{n-1}\bigcup_{i=1}^k \Big( F_{m_k} \cap \bigcap_{l=0}^{k-1} G_{m_l}^c \cap C_{m_{k-1}}^{\{i\}} \Big) \bigg\vert \nonumber\\
&\leq \left\vert \mathcal{P}(n) \right\vert + \sum_{k=0}^{n-1} \Big\vert E_{m_k} \Big\vert + \sum_{k=0}^{n-1} \sum_{j=1}^k \frac{1}{\sqrt{\mu_n(j)}} \Big\vert E_{m_k} \Big\vert \nonumber\\
&\leq \left\vert \mathcal{P}(n) \right\vert + \sum_{k=1}^n \sum_{j=0}^k \frac{1}{\sqrt{\mu_n(j)}} C \delta_n(k)^2,
\end{align}
where $C$ denotes again a constant that depends only on $N$, $T$, $\varepsilon_0$, $R_1$ and $R_2$, and where $\mu_n(0) = 1$ by convention.\\
There are two series to estimate. By replacing $\delta_n(k)$ and $\mu_n(j)$ with the particular values we have chosen in \eqref{EQUATChoixDelta} and \eqref{EQUATChoix_Mu__}, we find on the one hand:
\begin{align}
\label{EQUATPremiSommeAlexaSsBFK}
\sum_{k=2}^n \delta_n(k) \mu_n(k) = \frac{\alpha T}{\ln^{q+1}(n)}\sum_{k=2}^n \frac{\ln^p(k)}{k} \underset{n \rightarrow +\infty}{\sim} \frac{\alpha T}{\ln^{q+1}(n)} \frac{\ln^{p+1}(n)}{p+1},
\end{align}
and on the other hand:
\begin{align}
\sum_{k=1}^n\sum_{j=2}^k \frac{1}{\sqrt{\mu_n(j)}} \delta_n(k)^2 = \frac{T^2}{\sqrt{\alpha}} \frac{\ln^{q/2}(n)}{\ln^2(n)} \sum_{k=1}^n\sum_{j=2}^k \frac{1}{\ln^{p/2}(j)}\frac{1}{k^2},
\end{align}
where the last sum can be estimated by decomposing the summation on $k$, between $1$ and $\ln(n)$ on the one hand, and between $\ln(n)$ and $n$ on the other hand. Using the equivalent:
\begin{align}
\sum_{k=2}^{n} \frac{1}{\ln^p(k)} \underset{n\rightarrow+\infty}{\sim}\frac{n}{\ln^p(n)}
\end{align}
provides in the two following sums:
\begin{align}
\sum_{k=1}^{\ln(n)}\sum_{j=2}^k \frac{1}{\ln^{p/2}(j)}\frac{1}{k^2} \leq \frac{\pi^2}{6} \sum_{j=2}^{\ln(n)} \frac{1}{\ln^{p/2}(j)} \underset{n\rightarrow+\infty}{\sim} \frac{\pi^2}{6}\frac{\ln(n)}{\ln^{p/2}(\ln(n))},
\end{align}
and
\begin{align}
\sum_{k=\ln(n)}^n \sum_{j=2}^k \frac{1}{\ln^{p/2}(j)} \frac{1}{k^2} \underset{n\rightarrow+\infty}{\sim} \sum_{k=\ln(n)}^n \frac{1}{k \ln^{p/2}(k)} \underset{n\rightarrow+\infty}{\sim} \frac{1}{1-p/2} \ln^{1-p/2}(n),
\end{align}
so that, for $n$ large:
\begin{align}
\label{EQUATSecndSommeAlexaSsBFK}
\sum_{k=1}^n\sum_{j=2}^k \frac{1}{\sqrt{\mu_n(j)}} \delta_n(k)^2 \leq \frac{T^2}{\sqrt{\alpha}} \left( \frac{\pi^2}{6} \frac{1}{\ln^{1-q/2}\ln^{p/2}(\ln(n))} + \frac{1}{1-p/2} \frac{1}{\ln^{1+(p-q)/2}(n)} \right).
\end{align}
\eqref{EQUATPremiSommeAlexaSsBFK} and \eqref{EQUATSecndSommeAlexaSsBFK} enable to deduce that $\left\vert \mathcal{A}(n) \right\vert$ and $\left\vert \mathcal{P}(n) \right\vert$ are indeed vanishing quantities as the number of the time steps $n$ goes to infinity, so that we can conclude the proof of Alexander's theorem, in the case when any trajectory undergoes at most one inelastic collision.
\end{proof}

\subsection{Interpretation of the result of Theorem \ref{THEORSect2ConservatiMesur}}

The result of Theorem \ref{THEORSect2ConservatiMesur} might look surprising at first glance, because the scattering mapping we introduced in \eqref{EQUATSect1LoideCollision_} does preserve the measure in the phase space, although a positive amount of kinetic energy is lost at each inelastic collision. However, there is a way to interpret the result, in terms of conformal mappings. Denoting indeed by $m$ and $w$ the respective quantities:
\begin{align}
m = \frac{1}{2}(v+v_*),\hspace{5mm} w = v_*-v,
\end{align}
the collision mapping \eqref{EQUATSect1LoideCollision_} can be rewritten, in frame of the center of mass, as:
\begin{align}
(m,w) \mapsto \left(m,2\sigma(w) \sqrt{ \left\vert \frac{w}{2} \right\vert^2 - \varepsilon_0}\right),
\end{align}
where $\sigma(w)$ is the symmetry of the unit vector $w/\vert w \vert$ with respect to the orthogonal of $\omega = (x_*-x)/\vert x_*-x \vert$. Now, if we use the polar coordinates, choosing any unit orthogonal vector to $\omega$ and defining such a vector as the first axis (i.e., of angle $0$), the mapping:
\begin{align*}
w \mapsto 2\sigma(w) \sqrt{ \left\vert \frac{w}{2} \right\vert^2 - \varepsilon_0 }
\end{align*}
can be rewritten as:
\begin{align}
f:(\rho,\theta) \mapsto \left(f_\rho(\rho,\theta),f_\theta(\rho,\theta)\right) = \left( \sqrt{ \rho^2 - 4\varepsilon_0 },-\theta \right).
\end{align}
The infinitesimal element of area $[\rho_0,\rho_0+\dd \rho] \times [\theta_0,\theta_0+\dd\theta]$, written in polar coordinates, has the infinitesimal surface $\left\vert \rho_0 \dd \rho \dd \theta \right\vert$, and its image has the infinitesimal surface:
\begin{align*}
\left\vert f_\rho(\rho_0,\theta_0) \cdot \partial_\rho f_\rho(\rho_0,\theta_0) \dd\rho \cdot \partial_\theta f_\theta(\rho_0,\theta_0)\dd \theta \right\vert.
\end{align*}
But since we have:
\begin{align}
f_\rho(\rho_0,\theta_0) \partial_\rho f_\rho(\rho_0,\theta_0) \partial_\theta f_\theta(\rho_0,\theta_0) = \sqrt{ \rho_0^2-4\varepsilon_0 } \frac{\rho_0}{\sqrt{ \rho_0^2-4\varepsilon_0 }} (-1) = - \rho_0,
\end{align}
we deduce that the mapping $f$ preserves the measure, and so does the scattering mapping \eqref{EQUATSect1LoideCollision_}.\\
\newline
Such a computation can also be carried on in larger dimension using spherical coordinates, which provides similar collision models, with scatterings that preserve the measure in the phase space, but that describe a loss of some kinetic energy during the collisions. In dimension $d=3$ for instance, considering the mapping $f$, written in spherical coordinates $\left(\rho,\theta,\varphi\right)$ (with $\theta \in [-\pi/2,\pi/2]$ and $\varphi \in [0,2\pi]$) such that:
\begin{align}
\label{EQUATSect3Exemple_Dim_d>2}
f:(\rho,\theta,\varphi) \mapsto \left(\left(\rho^3 - 4\varepsilon_0\right)^{1/3},-\theta,\varphi\right),
\end{align}
the image by $f$ of the infinitesimal volume $[\rho_0,\rho_0+\dd\rho]\times[\theta_0,\theta_0+\dd\theta]\times[\varphi_0,\varphi_0+\dd\varphi]$ has the measure:
\begin{align}
\left((f_\rho)^2 \partial_\rho f_\rho \dd \rho\right) \cdot (\cos\theta_0 \partial_\theta f_\theta \dd \theta) \cdot (\partial_\varphi f_\varphi \dd \varphi) &= \left( \rho_0^3 - 4\varepsilon_0 \right)^{2/3} \frac{3\rho_0^2}{3\left( \rho_0^3 - 4\varepsilon_0 \right)^{2/3}} \cdot (\cos\theta_0 \partial_\theta f_\theta \dd \theta) \cdot (\partial_\varphi f_\varphi \dd\varphi) \nonumber \\
&= \rho_0^2 \cos\theta_0 \dd \theta \dd \varphi.
\end{align}
The mapping $f$ preserves the measure, and the same conclusion follows accordingly for the scattering mapping \eqref{EQUATSect1LoideCollision_} in dimension $3$. It is clear that one can construct such examples for any arbitrary dimension. Let us observe that although such scattering mappings conserve the measure in the phase space, and describe a loss of a positive quantity of kinetic energy in any collision that is energetic enough, the way in which the particles lose kinetic energy in the particle model we consider here (defined using \eqref{EQUATSect3Exemple_Dim_d>2}) becomes less natural than \eqref{EQUATSect1DissipatioEnCin}, that holds only in dimension $d=2$.\\
\newline

\noindent
\textbf{Acknowledgements.} The authors are grateful to I. Gallagher for very interesting discussions, comments and questions that led us to write this paper, and for having discovered a flaw in the original version of the argument, which enabled us to elaborate the final version of the present article.\\
The authors are also grateful to the anonymous referee, who asked us to clarify an important point which led us to understand in a deeper way the TCT dynamics.\\
The authors gratefully acknowledge the financial support of the Hausdorff Research Institute for Mathematics (Bonn) through the collaborative research center The mathematics of emerging effects (CRC 1060, Project-ID 211504053), and the Deutsche Forschungsgemeinschaft (DFG, German Research Foundation). The first author also gratefully acknowledges the financial support of the project PRIN 2022 (Research Projects of National Relevance) - Project code 202277WX43, based at the University of L’Aquila.\\
The authors declare that the data supporting the findings of this study are available within the paper.

E-mail address: \texttt{theophile.dolmaire@univaq.it}, \texttt{velazquez@iam.uni-bonn.de}.


\begin{thebibliography}{99}

\bibitem{Alex975} Roger K. Alexander, \emph{The Infinite Hard-Sphere System}, Ph.D thesis, University of California in Berkeley (1975).

\bibitem{Alex976} Roger K. Alexander, ``Time Evolution for Infinitely Many Hard Spheres'', \emph{Communications in Mathematical Physics}, \textbf{49}, 217--232 (1976).

\bibitem{BeCP997} Dario Benedetto, Emanuele Caglioti, Mario Pulvirenti, ``A kinetic equation for granular media'', \emph{Mathematical Modelling and Numerical Analysis}, \textbf{31}:5, 615--641 (1997).

\bibitem{BrPo004} Nikolai V. Brilliantov, Thorsten P\"oschel, \emph{Kinetic Theory of Granular Gases}, Oxford Graduate Texts, Oxford University Press (2004).

\bibitem{BuFK998} Dmitri Burago, Serge Ferleger, Alexey Kononenko, ``Uniform estimates on the number of collisions in semi-dispersing billiards'', \emph{Annals of Mathematics}, \textbf{147}:3, 695--708 (05/1998).

\bibitem{CHMR021} Jos\'e A. Carrillo, Jingwei Hu, Zheng Ma, Thomas Rey, ``Recent Development in Kinetic Theory of Granular Materials: Analysis and Numerical Methods'', in \emph{Trails in Kinetic Theory}, SEMA SIMAI Springer Series, \textbf{25}, 1--36, Springer-Verlag (2021).

\bibitem{Dema023} Elena Dematt\`{e}, ``On a Kinetic Equation Describing the Behavior of a Gas Interacting Mainly with Radiation'', \emph{Journal of Statistical Physics}, \textbf{190}, article number 124 (15/07/2023).

\bibitem{DoVeAr1} Th\'eophile Dolmaire, Juan J. L. Vel\'azquez, ``Collapse of Inelastic Hard Spheres in Dimension $d \geq 2$'', \emph{Journal of Nonlinear Science}, \textbf{34}, article number 111 (07/10/2024).

\bibitem{DoVeAr2} Th\'eophile Dolmaire, Juan J. L. Vel\'azquez, ``Properties of some dynamical systems for three collapsing inelastic particles'', \emph{Physica D: Nonlinear Phenomena}, \textbf{472}, article number 134477 (02/2025).

\bibitem{GSRT013} Isabelle Gallagher, Laure Saint-Raymond, Benjamin Texier, \emph{From Newton to Boltzmann: Hard Spheres and Short-Range Potentials}, Zurich Lectures in Advanced Mathematics, \textbf{18}, European Mathematical Society (EMS), Z\"urich (2013).



\bibitem{JaVe022} Jin Woo Jang, Juan J. L. Vel\'azquez, ``LTE and Non-LTE Solutions in Gases Interacting with Radiation'', \emph{Journal of Statistical Physics}, \textbf{186}, article number 47 (09/02/2022).

\bibitem{Lanf975} Oscar E. Lanford, ``Time evolution of large classical systems'', in \emph{Dynamical systems, theory and applications}, Lecture Notes in Physics, \textbf{38}, 1--111, Springer-Verlag (1975).

\bibitem{McYo993} Sean McNamara, William R. Young, ``Inelastic collapse in two dimensions'', \emph{Physical Review E}, \textbf{50}:1, R28--31 (07/1994).

\bibitem{MoPR998} Roberto Monaco, Jacek Polewczak, Alberto Rossani, ``Kinetic theory of atoms and photons: An application to the Milne-Chandrasekhar problem'', \emph{Transport Theory and Statistical Physics}, \textbf{27}:5-7, 625--637 (1998).

\bibitem{Oxen986} Joachim Oxenius, \emph{Kinetic Theory of Particles and Photons: Theoretical Foundations of Non-LTE Plasma Spectroscopy}, in Springer Series in Electronics and Photonics, \textbf{20}, Springer-Verlag (1986).

\bibitem{PoSc005} Thorsten P\"oschel, Thomas Schwager, \emph{Computational Granular Dynamics: Models and Algorithms}, Springer-Verlag (2005).

\bibitem{PuSS014} Mario Pulvirenti, Chiara Saffirio, Sergio Simonella, ``On the validity of the Boltzmann equation for short range potentials'', \emph{Reviews in Mathematical Physics}, \textbf{26}:2, article number 450001, 64 pages (2014).

\bibitem{RoSM997} Alberto Rossani, Giampiero S. Spiga, Roberto Monaco, ``Kinetic approach for two-level atoms interacting with monochromatic photons'', \emph{Mechanics Research Communications}, \textbf{24}:3, 237--242 (05-06/1997).

\bibitem{Szas000} (Edited by) D. Sz\'asz, \emph{Hard Ball Systems and the Lorentz Gas}, in Encyclopaedia of Mathematical Sciences, \textbf{101}, Mathematical Physics II, Springer-Verlag (2000).


\bibitem{ZhKa996} Tong Zhou, Leo P. Kadanoff, ``Inelastic collapse of three particles'', \emph{Physical Review E}, \textbf{54}:1, 623--628 (07/1996).

\end{thebibliography}
\end{document}